\DeclareMathOperator*{\bigtimes}{\vartimes}
	\let\ssave\mathcal
  \let\mathcal\mathscr
  \let\mathscr\ssave
  \def\cup{\cupprod}
  \def\cap{\capprod}
  \def\bigcup{\bigcupprod}
  \def\bigcap{\bigcapprod}
  \def\bigcupdisjoint{\mathop{\kern10pt\raisebox{4pt}{$\cdot$}\kern-12pt\bigcup}\limits}
\numberwithin{equation}{section}
\newtheorem{thm}{Theorem}[section]
\newtheorem{cor}[thm]{Corollary}
\newtheorem{lem}[thm]{Lemma}
\newtheorem{prop}[thm]{Proposition}
\theoremstyle{definition}
\newtheorem{defn}[thm]{Definition}
\theoremstyle{remark}
\newtheorem{rem}[thm]{Remark}
\newcommand{\be}{\begin{equation}}
\newcommand{\ee}{\end{equation}}
\newcommand{\ba}{\begin{array}}
\newcommand{\ea}{\end{array}}
\newcommand{\bal}{\begin{align}}
\newcommand{\eal}{\end{align}}
\newcommand{\bea}{\begin{eqnarray}}
\newcommand{\eea}{\end{eqnarray}}
\newcommand{\bee}{\begin{eqnarray*}}
\newcommand{\eee}{\end{eqnarray*}}
\newcommand{\tr}{{\textrm{tr}}}
\renewcommand{\P}{\mathbb P}
\newcommand{\Z}{\mathbb Z^d}
\newcommand{\Rd}{\mathbb R^d}
\newcommand{\R}{\mathbb R}
\newcommand{\C}{\mathbb C}
\newcommand{\N}{\mathbb N}
\renewcommand{\AA}{\mathbb A}
\newcommand{\cC}{\mathcal C}
\newcommand{\cP}{\mathcal P}
\newcommand{\Az}{{\upshape\textbf{(A0)}}\xspace}
\newcommand{\Ao}{{\upshape\textbf{(A1)}}\xspace}
\newcommand{\At}{{\upshape\textbf{(A2)}}\xspace}
\newcommand{\Aloc}{{\upshape\textbf{(A3)}}\xspace}
\newcommand{\Alt}{{\upshape\textbf{(A4)}}\xspace}
\let\savechi\chi
\def\chi{\raisebox{.3ex}{$\savechi$}}
\DeclareMathOperator{\sydi}{\raisebox{.6pt}{\scriptsize$\bigtriangleup$}}
\newcommand{\diam}{\mathop{\mathrm{diam}}}
\newcommand{\dist}{\mathop{\mathrm{dist}}}
\newcommand{\spec}{\mathop{\mathrm{spec}}}
\newcommand{\supp}{{\mathop{\mathrm{supp\,}}}}
\newcommand{\Tr}{{\mathop{\mathrm{tr} \,}}}
\DeclareMathOperator{\e}{e}
\renewcommand{\d}{\mathrm{d}}
\renewcommand{\i}{\mathrm{i}}
\newcommand{\Prob}[1]{\mathbb P\left(#1\right)}
\newcommand{\norm}[1]{\Vert #1 \Vert}
\newcommand{\abs}[1]{\left| #1 \right|}
\newcommand{\Lp}[1]{\textrm{L}^2(#1)}
\renewcommand{\L}{\Lambda}
\newcommand{\angles}[1]{\langle #1 \rangle}
\newcommand{\pom}{{P^\omega}}
\newcommand{\E}{\mathbb E}
\newcommand{\hdom}{H_{D^\omega}}
\newcommand{\dom}{{D^\omega}}
\newcommand{\om}{\omega}
\newcommand{\tpom}{{P^{\tilde\omega}}}
\newcommand{\hm}[1]{\leavevmode{\marginpar{\tiny%
$\hbox to 0mm{\hspace*{-0.5mm}$\leftarrow$\hss}%
\vcenter{\vrule depth 0.1mm height 0.1mm width \the\marginparwidth}%
\hbox to
0mm{\hss$\rightarrow$\hspace*{-0.5mm}}$\\\relax\raggedright #1}}}
\newtheorem{myrems}[thm]{Remarks}
\newenvironment{Remarks}{\begin{myrems}\begin{nummer}}%
    {\end{nummer}\end{myrems}}
\newcounter{numcount}
\newcommand{\labelnummer}{(\roman{numcount})}%
\providecommand{\showkeyslabelformat}[1]{\relax}        
\let\mysaveformat\showkeyslabelformat                   %
\def\myformat#1{\raisebox{-1.5ex}{\mysaveformat{#1}}}   %
\newenvironment{nummer}%
  {\let\curlabelspeicher\@currentlabel%
    \begin{list}{\textup{\labelnummer}}%
      {\usecounter{numcount}\leftmargin0pt%
        \topsep0.5ex\partopsep2ex\parsep0pt\itemsep0ex\@plus1\p@%
        \labelwidth2.5em\itemindent3.5em\labelsep1em%
      }%
    \let\saveitem\item%
    \def\item{\saveitem%
      \def\@currentlabel{\curlabelspeicher\kern.1em\labelnummer}}%
    \let\savelabel\label%
    \def\label##1{{\ifnum\thenumcount=1\let\showkeyslabelformat\myformat\fi\savelabel{##1}}%
										{\def\@currentlabel{\labelnummer}%
									 	\let\showkeyslabelformat\@gobble
									 	\savelabel{##1item}%
										}%
	   							}%
  }{\end{list}}%
\newenvironment{indentnummer}%
  {\let\curlabelspeicher\@currentlabel%
    \begin{list}{\textup{\labelnummer}}%
      {\usecounter{numcount}\leftmargin0pt%
        \topsep0.5ex\partopsep2ex\parsep0pt\itemsep0ex\@plus1\p@%
        \labelwidth2.5em\itemindent0em\labelsep1em%
        \leftmargin2.5em}%
    \let\saveitem\item%
    \def\item{\saveitem%
      \def\@currentlabel{\curlabelspeicher\kern.1em\labelnummer}}%
    \let\savelabel\label%
    \def\label##1{{\ifnum\thenumcount=1\let\showkeyslabelformat\myformat\fi\savelabel{##1}}%
										{\def\@currentlabel{\labelnummer}%
									 	\let\showkeyslabelformat\@gobble
									 	\savelabel{##1item}%
										}%
    							}%
  }{\end{list}}%
\def\itemref#1{\ref{#1item}}
\begin{document}

\title[Delone--Anderson Operators]{Ergodicity and dynamical localization for Delone--Anderson operators}
\author{Fran\c{c}ois Germinet}
\address{Universit\'e de Cergy-Pontoise, UMR CNRS 8088, F-95000 Cergy-Pontoise, France}
\email{francois.germinet@u-cergy.fr}

\author{Peter M\"uller}
\address{Mathematisches Institut, Ludwig-Maximilians-Universit\"at, Theresienstr.\ 39,
	80333 M\"unchen, Germany}
\email{mueller@lmu.de}

\author{Constanza Rojas-Molina}
\address{Mathematisches Institut, Ludwig-Maximilians-Universit\"at, Theresienstr.\ 39,
	80333 M\"unchen, Germany}
\email{crojasm@math.lmu.de}

\keywords{random Schr\"odinger operators, Delone sets, Delone-Anderson operators,
	integrated density of states, ergodic theorem}


\begin{abstract}
We study the ergodic properties of Delone-Anderson operators, using the framework of randomly coloured Delone sets and Delone dynamical systems. In particular, we show the existence of the integrated density of states
and, under some assumptions on the geometric complexity of the underlying Delone sets, we obtain information on the almost-sure spectrum of the family of random operators.
We then exploit these results to study the Lifshitz-tail behaviour of the integrated density of states of a Delone--Anderson operator at the bottom of the spectrum. Furthermore, we use Lifshitz-tail estimates as an input for the multi-scale analysis to prove dynamical localization. 
\end{abstract}

\maketitle

\section{Introduction}\label{intro}
For more than 50 years, the Anderson model has been the subject of extensive studies in the mathematics and physics literature to illuminate electronic transport properties in disordered media \cite{And58,GMP77,FS83,Kirsch89,CaLa90,PF92,AM93,CH,KSS98,GK1,FLM00,S,Ue,BoK,AENSS06,GHK07,K,GK11}. In this paper, we study Anderson-type operators that are relevant for \emph{disordered aperiodic media}. The model is a variant of the well-known continuum (or alloy-type) Anderson model in that the impurities are not located at the points of the periodic hypercubic lattice but on a rather general point set. As usual, each impurity gives rise to the same single-site potential, except for a random coupling constant which mimics the various species of atoms that make up the material. Besides having constitutionally disordered aperiodic media in mind, our study is also motivated by the quest for universality in alloy-type Anderson models: details of the impurities' positions, should not affect the model's key properties.


In order to describe our model in detail we introduce some notation. Let
$\Lambda_{L}(x) := \bigtimes_{j=1}^{d} ]x_{j} - L/2, x_{j}+ L/2[$ be the open cube in $\Rd$ with edges of length $L>0$ centred at $x=(x_{1},\ldots,x_{d})\in\R^{d}$ (and oriented parallel to the coordinate axes). If the cube is centred about the origin, we simply write $\Lambda_{L} := \Lambda_{L}(0)$.

\begin{defn}\label{def_delone}
A subset $D$ of $\Rd$ is called an $(r,R)$-\emph{Delone set}
if ~(i)~~ it is \emph{uniformly discrete}, i.e.\ there exists a real $r>0$ such that $\big|D\cap\L_r(x)\big|\leq 1$  for every $x\in\Rd$, and ~(ii)~~ it is \emph{relatively dense}, i.e.\ there exists a real $R\geq r$ such that $\big|D\cap\L_R(x)\big|\geq 1$ for every $x\in\Rd$. Here, $|\cdot|$ stands for cardinality of a set.
\end{defn}

Clearly, the minimal distance between any two points in an $(r,R)$-Delone set is $r$. Also, given any point in an $(r,R)$-Delone set, one can find another point that is no further than $\sqrt{d}R$ apart.
Particular examples of Delone sets are the hypercubic lattice $\mathbb{Z}^{d}$, the vertices of a Penrose tiling or the random point set obtained from removing every other point of $\Z$ by a Bernoulli percolation process. Generally speaking, Delone sets cover a wide range from perfectly ordered point sets to strongly disordered ones.

For the rest of this paper we fix $0 < r \le R<\infty$. Given an $(r,R)$-Delone set $D$ in $\Rd$, we consider the random Schr\"odinger operator
\be\label{hdom} H_{\dom}:= H_0+V_{\dom} \ee
with dense domain in the Hilbert space $\Lp{\Rd}$ and subject to the following assumptions. Our notation $H_\dom$ for the dependence of the operator on the Delone set and on the random  coupling constants will be justified in Sect.~\ref{colouring}.

\begin{itemize}
\item[\Az]
	The background operator is either the negative Laplacian $H_0 :=-\Delta$ or, if $d=2$,
	we also allow for the Landau Hamiltonian $H_0 := (-\i\nabla-{A})^2$ with constant magnetic field $B \ge 0$
	and vector potential $\R^{2} \ni x \mapsto A(x):=\frac{B}{2}(x_2,-x_1)$ in the symmetric gauge.
	The random potential is given by
	\be \label{ranpot1}
		\R^{d} \ni x  \mapsto V_{\dom}(x) := \sum_{p \in D}\omega_p u(x-p)
	\ee
	with a compactly supported \emph{single-site potential} $u \in \mathrm{L}^{\infty}(\Rd)$.
\item[(\textbf{A1})]
	The (canonically realized) random coupling constants
	$\omega:=(\omega_{p})_{p\in D}$ are independently and identically distributed, each according to the same Borel probability measure $\P^{(0)}$ with compact support $\mathbb{A} \subset\R$, that is,
	\begin{equation} \label{prod-meas}
 	\P_D:= \bigotimes_{p\in D}\mathbb \P^{(0)}
\end{equation}
is defined on the product Borel-$\sigma$-algebra of the probability space $\Omega_{D}:=\bigtimes_{D}\R$. We denote by $\E_{D}$ the expectation with respect to $\P_D$ and define $w:=\sup_{v\in \mathbb A} \abs{v}$.

	
\end{itemize}
We infer from \Az and uniform discreteness that there exists a constant $v_{0} \in ]0,\infty[$, which depends on $u$ and $r$ (but not on the particular $D$), such that $\big\| \sum_{p \in D} |u(\;\pmb\cdot\;-p)| \big\|_{\infty} \le v_{0}$. Hence, we have
\begin{equation} \label{V-bound}
 	\| V_{\dom} \|_{\infty} \le w v_{0}
\end{equation}
for $\P_{D}$-a.e.\ $\omega\in\Omega_{D}$. Moreover, the map
$\Omega_{D} \times\Rd \ni (\omega,x) \mapsto V_{\dom}(x)$ is measurable. Therefore, the map $\Omega_{D} \ni \omega\mapsto H_{\dom}$ is also measurable. We refer to it as the \emph{Delone--Anderson operator}.

In addition to the hypotheses \textbf{(A0)} and (\textbf{A1}), the following one will be assumed in some parts of this paper.
\begin{itemize}

\item[(\textbf{A2})]
	The single-site potential is continuously differentiable with compact
	support, $u \in C_{c}^{1}(\Rd)$.
\end{itemize}

The particular case $D=\Z$ defines the usual alloy-type Anderson model, if \Az and \Ao are assumed. A fundamental consequence of the periodicity of $D$ and of the i.i.d.\ distribution of the $(\omega_{p})_{p\in D}$ is \emph{ergodicity}. Namely, there exist measure-preserving ergodic transformations $\{\tau_{a}\}_{a\in\Z}$ on $\Omega_{\Z}$ and a family of unitary (magnetic) translation operators $\{U_a\}_{a\in\Z}$ acting on $\mathrm{L}^2(\Rd)$ such that
\be\label{ergo}   H_{(\Z)^{\tau_{a}(\omega)}}= U_a H_{(\Z)^{\omega}} U_a^*\ee
for every $a\in\Z$.
Several groundbreaking studies of the Anderson model concerned spectral properties that are consequences of ergodicity. We mention the self-averaging of the integrated density of states  and the almost-surely non-random spectrum of the random family $\{H_{(\Z)^{\omega}}\}_{\omega\in\Omega_{\Z}}$, a property which also extends to each spectral component in the Lebesgue decomposition \cite{Pas80,KuSo80,KMa}.
As a consequence, studies of the spectral type of $H_{(\Z)^{\omega}}$ or properties of the dynamics generated by $H_{(\Z)^{\omega}}$ in a certain energy interval are well-defined problems that do not depend upon the chosen realization $\omega$ of coupling constants with probability one.
In the case of a general Delone set $D$ instead of $\Z$, the Delone--Anderson operator does not satisfy the particular covariance relation \eqref{ergo}, because the Delone set and its translate $a+D$ will not agree, see
\eqref{delone-trans} below instead. Thus, all of the above-mentioned consequences of ergodicity do not necessarily apply any more.

Delone--Anderson operators have been studied in the literature before.
Almost exclusively, the focus has been on proving dynamical localization over the last years, using both the fractional moment method \cite{BdMNSS} and the multi-scale analysis \cite{RM}. In the latter approach, it was shown that the bootstrap multi-scale analysis (MSA) from \cite{GK1}, and therefore, the phenomenon of dynamical localization, is insensitive to perturbations of the underlying periodic arrangement of impurities whenever this arrangement does not exhibit arbitrarily large holes. In  \cite{G}, the case $H_0=-\Delta$ was studied, using the MSA by Bourgain-Kenig \cite{BoK}. To consider more general unperturbed operators with aperiodic structures one needs unique continuation principles, which  were obtained in \cite{RMV}, together with Wegner estimates. Klein later on improved these Wegner estimates and proved dynamical localization at high disorder using the MSA method \cite{Kl}. A more involved treatment was needed in the discrete setting \cite{EKl}, where unique continuation principles are not
available. There, dynamical localization was shown at low energies, with a proof that extends to the continuous setting. A simpler approach was given by \cite{RM13}, using a space-averaging approximation as in \cite{BoK, G}.

Up to now, information on the almost-sure spectrum of $H_\dom$ has been obtained for particular cases only. If $H_0=-\Delta$ and $V_{\dom} \ge0 $ almost surely such that zero belongs to the support of the single-site distribution, then $\sigma(H_{0}) = \sigma(H_{\dom})$ almost surely, which follows from a Borel--Cantelli argument. This allows to conclude dynamical localization at the bottom of the spectrum from the MSA with probability one \cite{BdMNSS,G}.
In the case of the Landau Hamiltonian, the same argument gives the inclusion
\be  \sigma(H_0)\subset \sigma(H_\dom) \quad\mbox{almost surely.}\ee
This tells only that the Landau levels are contained in the spectrum, but provides no information on the location of the band edges. To take care of this, an extra argument based on \cite{CH} was needed in \cite{RM} to show that the intersection between the region of dynamical localization and the spectrum of the realizations $H_{\dom}$ is not empty almost surely -- but its location can depend, possibly, on the realization.

The purpose of this paper is twofold. In the first part, Section~\ref{s:erg}, we introduce a dynamical system for randomly coloured Delone sets and study the ergodicity properties of Delone--Anderson operators within this framework. In particular, we prove the existence and self-averaging of the integrated density of states of Delone--Anderson operators in Corollary~\ref{eids}.
In Theorem~\ref{t:supp-ids} we obtain a description of the almost-sure spectrum of $H_\dom$ in terms of the growth points of the integrated density of states.
To our knowledge, this is the first study of ergodic properties for the Delone--Anderson model.
In combination with the results from \cite{RM}, this allows to conclude band-edge localization (in the usual sense) for the Landau model with a Delone--Anderson potential with probability one.

The dynamical system for Delone--Anderson operators builds upon the Delone \emph{hull} $X_D=\overline{\{x+D:\, x\in\Rd\}}$ of a given Delone set $D$, which is a suitably defined closure of the set of all translates of $D$, see Definition~\ref{hull-def} below.
Existence of the integrated density of states and non-randomness of the spectrum
hold for almost-every (w.r.t.\ a suitable measure) point set in the Delone hull. If the Delone hull of $D$ satisfies stronger hypotheses, then
these properties hold
even for \emph{all} point sets in $X_D$. In particular,
they hold
for the given Delone set $D$, see Corollary \ref{eids}(ii) and Theorem~\ref{t:supp-ids}(ii).
In the Appendix we give an example of a non-uniquely ergodic Delone set. There we show that unique ergodicity is an essential assumption for Corollary~\ref{eids} to hold without exceptional Delone sets.

In the second part of this article, Section~\ref{s:LTdynloc}, we analyze the non-magnetic case $H_0=-\Delta$ and show that the integrated density of states exhibits a Lifshitz tail at the bottom of the spectrum in Theorem~\ref{lt}. This argument requires a suitable version of Dirichlet--Neumann bracketing averaged over the hull. In contrary to the usual proof of Lifshitz tails for the Anderson model, we need  additional efforts for making the Lifshitz-tail bounds for the finite-volume integrated density of states useful to prove localization for \emph{every} Delone set in the hull $X_{D}$, in particular for $D$ itself. Usually, the length $L \sim E^{-1/2}$ is determined and fixed by the energy in such estimates. We need to extend them to all sufficiently large lengths $L \gtrsim E^{-1/2}$. It is also important that the constants in the estimates are uniform on the hull $X_{D}$. We remark that we do not know monotonicity in $L$ of the finite-volume Dirichlet or Neumann integrated densities of states, which 
holds pointwise for every Delone set in the hull $X_{D}$ -- we know it only in average over the hull.



In Theorem \ref{t:dynlocas} we use the estimates involved in the proof of Lifshitz tails from Theorem~\ref{unifbd} to establish the initial estimate of the MSA -- this is a new way of proving the initial estimate for Delone--Anderson operators, and we show that this is possible for every Delone set in the hull $X_{D}$, even for those exceptional Delone sets for which the finite-volume integrated density of states is not known to converge in the macroscopic limit. 
The advantage of our approach concerns the size of the region of dynamical localization.
In Theorems \ref{t:dynlocas} and \ref{t:dynloc} we obtain lower bounds on the size of the interval of dynamical localization for Delone--Anderson operators within our approach and previous ones. These bounds depend on the Delone set $D$ only through its radius $R$ of relative denseness. The advantage of using the Lifshitz-tail estimates from Theorem~\ref{unifbd} is that it gives a better lower bound.

\smallskip
\section{Ergodicity and the integrated density of states}\label{s:erg}

The lack of translation covariance of Delone--Anderson operators poses the question whether they possess a self-averaging integrated density of states. For ``sufficiently regular'' Delone sets we will obtain a positive answer. Technically, we use an ergodic theorem for randomly coloured point sets developed in \cite{MR}. The next section recalls the  the framework of dynamical systems for randomly coloured point sets which we need for this purpose.

\subsection{Randomly coloured point sets}\label{colouring}
In this section we introduce the basic notions to formulate a version of the ergodic theorem for randomly coloured point sets from \cite{MR}. We point the reader to the literature in \cite{MR} for a discussion of earlier references on this topic.

Our setting of Delone sets in $d$-dimensional Euclidean space corresponds to the choices $M=\Rd$ and $T=\Rd$ (both equipped with the Euclidean topology) as point space and transformation group in \cite{MR}, respectively. The Abelian and unimodular group $\Rd$ acts on the point space $\Rd$ via translations, $\Rd \times \Rd \ni (x,y) \mapsto x+y \in \Rd$. This action is continuous, proper, transitive and free. In particular, all hypotheses required in \cite{MR} are fulfilled, see \cite[Ex.~2.5]{MR}.

Now we consider the space
\begin{equation}
 \cP_{r}(\Rd):= \big\{ P \subset \Rd: \text{$P$ is uniformly discrete of radius $r>0$}\big\}
\end{equation}
of uniformly discrete point sets in $\Rd$.  Most importantly, one can prove by standard arguments, see e.g.\  \cite[Prop.~2.9]{MR}, that $\cP_{r}(\Rd)$  is compact with respect to the \emph{vague toplogy}. This is the coarsest topology such that for every function
$\varphi \in C_{c}(\Rd)$ the map $\cP_{r}(\Rd) \ni P \mapsto \sum_{p\in P} \varphi(p)$ is continuous.
We refer to Remark~2.7 and Lemma~2.8 in \cite{MR} for different characterisations of the vague topology.

The group $\Rd$ induces a natural translation action on $\cP_{r}(\Rd)$ by setting
\begin{equation}
 	x+P := \{x+ p: p \in P\}
\end{equation}
for $x\in \Rd$ and $P\in\cP_{r}(\Rd)$. Not surprisingly, this action can be shown to be continuous.

\begin{defn}
\label{hull-def}
 For $D\subset \Rd$ an $(r,R)$-Delone set, we define its closed $\Rd$-\emph{orbit}
\be X_D:=\overline{ \{ x+ D:\, x\in \Rd \} } \subset \mathcal P_r(\Rd)\ee
where the closure is taken with respect to the vague topology. In particular, the orbit $X_{D}$ is compact in the vague topology. The triple consisting of $X_{D}$, the group $\Rd$ and its continuous translation action on $X_D$ constitutes a compact dynamical system.
\end{defn}

Next, we consider a Borel-measurable subset $\mathbb A \subseteq \R$, which we refer to as the \emph{colour space}. For a uniformly discrete point set $P \in\cP_{r}(\Rd)$ we  introduce the product space
\be \Omega_P:= \bigotimes\nolimits_{p\in P}\mathbb A \ee
of its possible colour realisations $\omega \equiv (\omega(p))_{p\in P}\in\Omega_{P}$, where $\omega(p)
\in \AA$ for all $p\in P$.
This gives rise to the coloured point set
\be P^\omega:=\{ (p,\omega(p)): \,p\in P, \omega \in \Omega_{P} \}\ee
and the coloured orbit
\be \label{col-orbit} \hat{X}_{D} := \big\{ P^{\omega}: P \in X_{D}, \omega\in \Omega_{P}\big\}
=\overline{\{ x+ D^\omega: \, x\in \Rd, \omega\in \Omega_{D} \} }\ee
of a Delone set $D$.
The equality in \eqref{col-orbit} is proved in \cite[Lemma~3.6]{MR}. In the right expression of \eqref{col-orbit} we introduced the translation of a coloured point set
$x+ P^{\omega} := (x+P)^{\tau_{x}(\omega)}$,
where $x\in\Rd$, $P\in\cP_{r}(\Rd)$ and
\be \label{taudef}
	\tau_{x} : \; \begin{array}{c@{\quad}c@{\quad}c} \Omega_{P} & \rightarrow & \Omega_{x+P} \\
		\big(\omega(p)\big)_{p\in P} & \mapsto & \big(\omega(p)\big)_{x+p\in x+ P} \end{array}.
\ee
This means that the colour is simply translated along with each point of $P$.
Furthermore, the closure in the right expression of \eqref{col-orbit} is taken with respect to the vague topology on the space of uniformly discrete coloured point sets
\begin{equation}
 	\cC_{r}(\Rd) := \{P^{\omega}: P\in\cP_{r}(\Rd), \omega\in\Omega_{P}\}.
\end{equation}
This is the coarsest topology such that for every function
$\varphi \in C_{c}(\Rd\times \AA)$ the map $\cC_{r}(\Rd) \ni P^{\omega} \mapsto \sum_{p\in P} \varphi(p,\omega(p))$ is continuous.

The above defined translation of a coloured point set is a continuous map on the compact space $\cC_{r}(\Rd)$ \cite[Lemma~3.6]{MR}, and the coloured orbit $\hat X_{D}$ is itself compact in the vague topology for every $(r,R)$-Delone set $D \subset\Rd$ \cite[Prop.~3.5]{MR}.
We note that the triple consisting of $\hat X_D$,  the group $\Rd$ and its continuous translation action on $\hat X_D$ is a compact topological dynamical system.

If assumption \Ao holds, then colours are distributed independently and identically at every point of $D$.
According to \cite[Lemma 3.9~(i)]{MR}, the product measure $\P_{D}$ from \eqref{prod-meas} satisfies all hypotheses needed in the ergodic theorem for randomly coloured point sets in \cite[Thm.\ 3.11]{MR}, but we could have also allowed for more general probability measures.

Before we state the version of the ergodic theorem \cite[Thm.\ 3.11]{MR} that we need in our setting, we remark that on every compact topological dynamical system there exists an ergodic Borel probability measure, cf.\ \cite[\S 6.2]{Wal82}. If this measure is unique, then the dynamical system is called \emph{uniquely ergodic.}

\begin{thm}\label{ergthm}
Let $D\subset \R$ be a Delone set, let $\mu$ be an ergodic Borel probability measure on $X_D$ and assume \Ao. Then there exists an ergodic probability measure $\hat \mu$ on $\hat X_D$, which is uniquely determined by $\mu$, such that the following holds.
\begin{indentnummer}
 \item For every $\Phi\in \textrm{L}^1(\hat X_D,\hat\mu)$ we have
\be\label{erg1} \int_{\hat X_D}\Phi(\tilde{P}^{\tilde{\omega}})\, \d\hat\mu(\tilde{P}^{\tilde{\omega}})= \int_{X_D} \left(\int_{\Omega_{\tilde{P}}}\Phi(\tilde{P}^{\tilde{\omega}})\, \d\P_{\tilde{P}}(\tilde\omega) \right) \d\mu(\tilde P). \ee
\item For every $\Phi\in \textrm{L}^1(\hat X_D,\hat\mu)$
the limit
\be \label{erg2} \lim_{L\rightarrow \infty}\frac{1}{L^{d}}\int_{\L_{L}}\Phi(x+ P^{\omega}) \,\d x
= \int_{\hat X_D} \Phi(\tilde{P}^{\tilde{\omega}}) \, \d\hat\mu(\tilde{P}^{\tilde{\omega}}) \ee
exists for $\hat\mu$-a.e.\ $P^{\omega}\in\hat X_D$.
Moreover, if $X_D$ is even uniquely ergodic and if $\Phi$ is continuous, then the limit \eqref{erg2} exists \emph{for every} $P\in X_D$ and for $\P_{P}$-a.a.\ $\omega\in\Omega_{P}$.
\end{indentnummer}
\end{thm}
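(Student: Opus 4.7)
The plan is to derive Theorem~\ref{ergthm} from \cite[Thm.~3.11]{MR}, observing that our setting fits its framework exactly. The paragraphs preceding the statement already record the relevant hypotheses: $\Rd$ acts on itself continuously, properly, transitively and freely by translation \cite[Ex.~2.5]{MR}; the space $\cP_r(\Rd)$ is vaguely compact so the orbits $X_D$ and $\hat X_D$ are compact; and by \cite[Lemma~3.9(i)]{MR} the i.i.d.\ product measure $\P_D$ of \eqref{prod-meas} meets the measurability requirements needed for a consistent fibre-wise assignment $\tilde P\mapsto \P_{\tilde P}$. Thus no further hypothesis checking is needed beyond assumption \Ao.

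First I would construct $\hat\mu$ as the skew product
\[
\hat\mu(B) := \int_{X_D} \P_{\tilde P}\bigl(\{\tilde\omega\in \Omega_{\tilde P}: \tilde P^{\tilde\omega}\in B\}\bigr)\, \d\mu(\tilde P)
\]
for Borel $B\subset \hat X_D$. Formula \eqref{erg1} is then immediate for indicators of cylinder sets and extends to $\textrm{L}^1(\hat X_D,\hat\mu)$ by monotone convergence and density. The measure $\hat\mu$ is uniquely determined by $\mu$: any invariant candidate must project to $\mu$ on $X_D$ by translation invariance and $\mu$-ergodicity, and on each fibre must coincide with $\P_P$ because of \Ao together with the translation-equivariance of the colouring in \eqref{taudef}.

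Ergodicity of $\hat\mu$ under the $\Rd$-action combines $\mu$-ergodicity on the base with a Kolmogorov-type $0$-$1$ law on the i.i.d.\ fibres $(\Omega_P,\P_P)$: a translation-invariant measurable $B\subset\hat X_D$ projects to a translation-invariant subset of $X_D$, which is $\mu$-null or $\mu$-co-null, and on each non-trivial fibre $B$ cuts out a translation-invariant tail-type event, hence $\P_P$-trivial. This settles (i). The first part of (ii) is then Birkhoff's pointwise ergodic theorem applied to the continuous free $\Rd$-action on $(\hat X_D,\hat\mu)$ averaged over the tempered F\o{}lner family $\{\Lambda_L\}_{L>0}$.

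The main obstacle is the uniquely-ergodic strengthening in (ii), since the i.i.d.\ fibres in principle carry additional dynamics on top of the base $X_D$. The key point, contained in \cite[Thm.~3.11]{MR}, is that any $\Rd$-invariant Borel probability $\hat\nu$ on $\hat X_D$ projects to a $\Rd$-invariant measure on $X_D$, necessarily $\mu$ by unique ergodicity; its conditional measures on the fibres must then equal $\P_P$ by \Ao and translation invariance, forcing $\hat\nu=\hat\mu$, so $(\hat X_D,\Rd)$ is uniquely ergodic as well. Once this is known, Oxtoby's classical theorem provides \emph{uniform} convergence of ergodic averages of any continuous $\Phi$ to the space integral, hence pointwise convergence at every $P^\omega\in\hat X_D$. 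Since by \eqref{col-orbit} every pair $(P,\omega)$ with $P\in X_D$ and $\omega\in\Omega_P$ yields a point $P^\omega\in\hat X_D$, the stated conclusion (in fact slightly stronger than ``for $\P_P$-a.a.\ $\omega$'') follows for every $P\in X_D$.
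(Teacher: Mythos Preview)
Your overall plan---reduce to \cite[Thm.~3.11]{MR}---is exactly what the paper does; it simply records in the remarks following the theorem that Theorem~\ref{ergthm} is a special case of that result, without reproving anything. Your sketch of the construction of $\hat\mu$ via \eqref{erg1} and the appeal to Birkhoff for the $\hat\mu$-a.e.\ statement in (ii) are fine.

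The argument for the uniquely ergodic strengthening, however, contains a genuine error. The coloured system $(\hat X_D,\Rd)$ is \emph{not} uniquely ergodic when $X_D$ is, so Oxtoby's theorem does not apply. Your step ``its conditional measures on the fibres must then equal $\P_P$ by \Ao and translation invariance'' fails: assumption \Ao fixes what $\P_P$ \emph{is}, but places no constraint on the fibre conditionals of an arbitrary invariant $\hat\nu$. Translation invariance only forces the family of conditionals to be equivariant, not to coincide with the particular i.i.d.\ measures $\P_P$. Concretely, take $D=\Z$ so that $X_D\cong\Rd/\Z$ is uniquely ergodic; for any shift-invariant probability on $\AA^{\Z}$ other than $\bigotimes\P^{(0)}$ (e.g.\ Bernoulli with a different parameter, or a non-product stationary process) one obtains another $\Rd$-invariant measure on $\hat X_D$. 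This also explains why the theorem only claims convergence for $\P_P$-a.a.\ $\omega$ and not for every $\omega$: the all-zero colouring, say, gives a limit different from the $\hat\mu$-integral. The mechanism actually used in \cite{Hof98,MR} is different: one shows that the fibre-averaged function $\bar\Phi(P):=\int_{\Omega_P}\Phi(P^\omega)\,\d\P_P(\omega)$ is continuous on $X_D$, applies unique ergodicity of $X_D$ to get convergence of $\frac{1}{L^d}\int_{\Lambda_L}\bar\Phi(x+P)\,\d x$ for \emph{every} $P$, and then controls the difference between this and $\frac{1}{L^d}\int_{\Lambda_L}\Phi(x+P^\omega)\,\d x$ for $\P_P$-a.e.\ $\omega$ by a fibrewise law-of-large-numbers type argument. (The same over-reading affects your uniqueness claim in part (i): ``uniquely determined by $\mu$'' refers to the construction from $\mu$ and the fixed colouring law, not to $\hat\mu$ being the unique invariant measure projecting to $\mu$.)
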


\begin{Remarks}
\item The ergodic theorem will be most useful in the uniquely ergodic situation
  and for continuous $\Phi$. In this case the limit in \eqref{erg2} exists for $P=D$,
  the Delone set we started with, and for $\P_{D}$-a.e.\ $\omega\in\Omega_D$.
\item Such a type of ergodic theorem appeared first in the literature in \cite{Hof98}
	in the context of percolation on the Penrose tiling. While restricted to dynamical systems
	of finite local complexity (see e.g.\ \cite{MR} for a definition), the approach of \cite{Hof98}
	provided already the optimal treatment of exceptional sets for uniquely ergodic
	systems and continuous functions. A few years ago, the restriction to finite local complexity
	could be dispensed with in \cite[Lemma 10]{Len08} -- but without providing an optimal treatment
	of exceptional sets. Theorem~\ref{ergthm} is a special case of \cite[Thm.\ 3.11]{MR},
	which unites the benefits of the aforementioned approaches.
\item	\label{on-unique}
	A sufficient condition for unique ergodicity of $X_{D}$ is \emph{almost linear repetitivity} of $D$
	\cite[Prop.\ 4.4]{FrRi12}.
	In the more special case of Delone sets of finite local complexity, unique ergodicity of $X_{D}$
	is equivalent to the existence of \emph{uniform pattern frequencies} in $D$,
	see e.g.\  \cite[Thm.\ 2.7]{LeMo02} or \cite[Prop.\ 2.32]{MR}
	for a recent generalisation.
\end{Remarks}

\subsection{Existence of the integrated density of states}
Due to the uniform boundedness assumption of the random variables $(\omega_{p})_{p\in D}$ in \textbf{(A0)}, we will choose the colour space as
\begin{equation} \label{A-w}
 	\AA = [-w,w] \subset\R
\end{equation}
right away.
In this way we obtain an operator-valued function $\hat X_{D} \ni P^{\omega} \mapsto H_{P^{\omega}}$ on the coloured translation orbit of $D$.
Assumptions \textbf{(A0)} ensure that, given any $P^{\omega} \in \hat X_{D}$ and any Borel measurable function $F: \R \rightarrow \C$ for which there exist constants $\gamma,\tau >0$ such that
\begin{equation}
 	\label{BLM-cond}
	|F(E)| \le \gamma\min\{1, \e^{-\tau E}\} \qquad \text{for every~} E\in\R,
\end{equation}
the operator $F(H_{P^{\omega}})\,\chi_{\L_{L}(y)}$ is trace class for every  $L>0$ and every $y\in\Rd$, where $\chi_{\L_{L}(y)}$ stands for the characteristic function of the cube $\L_{L}(y)$.
Moreover, it follows from \cite[Thm.\ 1.14(i)]{BLM} that the operator $F(H_{P^{\omega}})$ has a bounded continuous integral kernel $f(H_{P^{\omega}}) \in C(\Rd\times\Rd) \cap{\mathrm L}^{\infty}(\Rd\times\Rd)$. Thus, we infer the representation
\be\label{trace} \Tr \big( F(H_{P^{\omega}})\,\chi_{\L_{L}(y)} \big)
= \int_{\L_{L}(y)}f(H_{P^{\omega}})(x,x) \,\d x \ee
for its trace, which follows e.g.\ from \cite[Cor.\ 1.16 and 1.18]{BLM}).
The following ergodic theorem will be the main technical result in proving existence of the integrated density of states.

\begin{thm}
\label{vague-conv}
Let $D$ be a Delone set, $\mu$ an ergodic Borel probability measure on its hull $X_{D}$, $F: \R \rightarrow \C$ a Borel measurable function obeying \eqref{BLM-cond}, and assume \Az and \Ao.  Then,
\begin{nummer}
\item  \label{vague-conv-gen}
 there exists a measurable subset $Y \subseteq X_{D}$ (depending on $F$) of full probability, $\mu(Y) =1$, and for every ${P} \in Y$ there exists a measurable subset $\Xi_{P} \subseteq \Omega_{{P}}$  of full probability, $\P_{{P}}(\Xi_{{P}}) =1$, such that for every $\omega\in\Xi_{{P}}$ and every $y\in\Rd$ the limit
\begin{align}
	\lim_{L\rightarrow \infty}  \frac{1}{L^{d}} \; \Tr \left(F(H_{{P}^{\omega}})\chi_{\L_{L}(y)}\right) & =
	\int_{\hat X_{D}}  f(H_{\tilde{P}^{\tilde{\omega}}})(0,0) \, \d\hat\mu(\tilde{P}^{\tilde{\omega}}) \nonumber\\
	& = \int_{X_D} \left( \int_{\Omega_{\tilde P}} f(H_{\tilde{P}^{\tilde{\omega}}})(0,0) \, \d\P_{\tilde P}(\tilde\omega) \right) \d\mu(\tilde P) \label{intX_D}
\end{align}
exists and is independent of ${P} \in Y$, $\omega\in\Xi_{{P}}$ and $y\in\Rd$. Here,  $\hat\mu$ is given by Theorem~\ref{ergthm}.
\item \label{vague-conv-cont}
	if, in addition, $X_{D}$ is uniquely ergodic, $F\in C(\R)$ and \At is assumed, then \itemref{vague-conv-gen} holds with $Y =X_{D}$. In particular, \eqref{intX_D} holds with ${P}=D$.
\end{nummer}
\end{thm}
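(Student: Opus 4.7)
The plan is to recast the normalised trace as a Birkhoff average over $\hat X_D$ of the function
\[
 \Phi(Q^\sigma) := f(H_{Q^\sigma})(0,0), \qquad Q^\sigma \in \hat X_D,
\]
and then to invoke Theorem~\ref{ergthm}. The uniform bound \eqref{V-bound} together with \cite[Thm.~1.14(i)]{BLM} yields $\|f(H_{Q^\sigma})\|_{\infty}\le C<\infty$ uniformly on $\hat X_D$, and $\Phi$ is measurable as a composition of the measurable map $Q^\sigma\mapsto H_{Q^\sigma}$ with the functional calculus, so $\Phi\in L^1(\hat X_D,\hat\mu)$. The key ingredient is the translation identity
\[
 f(H_{P^\omega})(x,x) = \Phi(-x+P^\omega), \qquad x\in\Rd,
\]
which follows from the covariance $U_a H_{P^\omega} U_a^* = H_{a+P^{\tau_a(\omega)}}$ for the (possibly magnetic) translation $U_a$ allowed by \Az, upon noting that the magnetic gauge phase of $U_a$ cancels on the diagonal of integral kernels. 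Combined with \eqref{trace} and the substitution $x\mapsto-x$ this gives
\[
 \frac{1}{L^d}\,\Tr\!\left(F(H_{P^\omega})\chi_{\L_L(y)}\right) = \frac{1}{L^d}\int_{\L_L(-y)}\Phi(x+P^\omega)\,\d x,
\]
and the crude boundary estimate $\abs{\L_L(-y)\sydi\L_L}\le C L^{d-1}\abs{y}$ shows that the centering of the cube is irrelevant in the limit $L\to\infty$.

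For \itemref{vague-conv-gen}, Theorem~\ref{ergthm}(ii) applied to $\Phi$ yields a $\hat\mu$-null set $N\subset\hat X_D$ off which the Birkhoff average converges to $\int_{\hat X_D}\Phi\,\d\hat\mu$; the iterated representation in \eqref{intX_D} then follows from \eqref{erg1}. To convert this into the desired pair $(Y,(\Xi_P)_{P\in Y})$, apply \eqref{erg1} to the indicator $\mathbf 1_N$: since $\hat\mu(N)=0$, one has $\P_{\tilde P}(\{\tilde\omega:\tilde P^{\tilde\omega}\in N\})=0$ for $\mu$-a.e.\ $\tilde P\in X_D$, and taking $Y$ and $\Xi_P$ to be these full-measure sets gives the claim.

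For \itemref{vague-conv-cont}, the second part of Theorem~\ref{ergthm}(ii) removes the $\mu$-exceptional set provided $\Phi$ is continuous on $\hat X_D$, and establishing this continuity is the main technical step. Under \At, vague convergence $Q_n^{\sigma_n}\to Q^\sigma$ in $\hat X_D$ implies local uniform convergence of the potentials $V_{Q_n^{\sigma_n}}\to V_{Q^\sigma}$: on any compact set only finitely many single-site terms contribute, the positions and colours of those points converge, and uniform continuity of $u\in C_c^1$ closes the estimate. Together with \eqref{V-bound} this yields norm-resolvent convergence $H_{Q_n^{\sigma_n}}\to H_{Q^\sigma}$. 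Continuity of the integral kernel at the origin then follows by representing $F(H)$ via a Helffer--Sj\"ostrand contour formula, applicable to any $F\in C(\R)$ satisfying \eqref{BLM-cond} after a quasi-analytic approximation, combined with the kernel regularity results of \cite[Thm.~1.14 and Cor.~1.18]{BLM}. Continuity of $\Phi$ together with unique ergodicity of $X_D$ completes the proof, and in particular delivers \eqref{intX_D} at $P=D$.
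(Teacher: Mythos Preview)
Your reduction to the ergodic theorem via $\Phi(Q^\sigma)=f(H_{Q^\sigma})(0,0)$, the covariance/translation identity, and the symmetric-difference argument for the centre $y$ is exactly the paper's route, and your treatment of part~(i), including the Fubini-type unpacking of the $\hat\mu$-null set into the $(Y,(\Xi_P))$ structure, is fine.

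The gap is in part~(ii). Local uniform convergence of the potentials $V_{Q_n^{\sigma_n}}\to V_{Q^\sigma}$ does \emph{not} give norm-resolvent convergence of the Hamiltonians: vague convergence of coloured Delone sets says nothing about the potentials outside a given compact, so $\|V_{Q_n^{\sigma_n}}-V_{Q^\sigma}\|_\infty$ need not tend to zero, and norm-resolvent convergence fails in general. You only obtain strong resolvent convergence, which is what the paper establishes (via \cite[Thm.~VIII.25(a)]{RSI}). Strong resolvent convergence alone, however, does not yield pointwise convergence of the integral kernel at $(0,0)$, and the Helffer--Sj\"ostrand representation will not close this gap: it would require control of resolvent \emph{kernels}, which strong convergence does not provide, and in any case $F\in C(\R)$ with the decay \eqref{BLM-cond} need not admit the almost-analytic extension the formula demands.

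The paper bypasses this by using the heat-kernel factorisation
\[
 f(H_{P^\omega})(0,0)=\big\langle k_t^{H_{P^\omega}}(\cdot,0),\,\e^{2tH_{P^\omega}}F(H_{P^\omega})\,k_t^{H_{P^\omega}}(\cdot,0)\big\rangle
\]
from \cite[Thm.~1.14]{BLM}. Continuity of $\Phi$ is then reduced to two ingredients: (a) $L^2$-convergence of the heat kernels $k_t^{H_n}(\cdot,0)\to k_t^H(\cdot,0)$, proved via the Feynman--Kac representation, where the Gaussian prefactor lets one localise to a compact on which the local uniform convergence of potentials applies; and (b) strong convergence of the bounded continuous function $G(H_n):=\e^{2tH_n}F(H_n)$, which follows from strong resolvent convergence. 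Your argument needs this (or an equivalent) mechanism to pass from operator convergence to convergence of the diagonal kernel value.
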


\begin{proof}
Let $\{U_a\}_{a\in\Rd}$ be the family of unitary operators on $\Lp{\Rd}$ associated to translations, that is, $U_{a}\psi := \psi(\cdot - a)$ for every $\psi\in L^{2}(\Rd)$ and $a\in\Rd$. (In the case $d=2$ and ${\textbf A} \neq 0$ we use magnetic translations.) Recalling the shifts \eqref{taudef} between the probability spaces, we find
\be
	\label{delone-trans}
	U_a \hdom U_a^{*} = H_{(a+D)^{\tau_{a}(\omega)}}= H_{a+ D^\omega},
\ee
and thus $F(H_{a+ \dom}) = U_{a} F(\hdom) U_{a}^{*}$. In turn, this implies
\begin{equation}
	\label{kernel-shift}
 	f(H_{a+ \dom}) =  	f({\hdom})(\cdot -a, \cdot -a)
\end{equation}
for the corresponding continuous integral kernels,
and in particular $f(H_{\dom})(x,x) = f(H_{-x +\dom})(0,0)$
for every $x\in\Rd$.
Now we define the map
\begin{equation} \label{Phidef}
 	\Phi :  \begin{array}{lcl} \hat X_{D} & \longrightarrow &\mathbb{C}, \\
 P^{\omega} & \longmapsto & \Phi(P^{\omega}):= f(H_{P^{\omega}})(0,0),
\end{array}
\end{equation}
and conclude from \eqref{trace} that
\begin{equation}
 	\Tr \big( F(\hdom)\,\chi_{\L_{L}(y)} \big) = \int_{\Lambda_{L}(-y)} \Phi(x+ D^{\omega}) \, \d x.
\end{equation}
By Lemma \ref{contker} below,  the map $\Phi$ is bounded and measurable, resp.\ continuous, under the hypotheses of part~(i), resp.\ part~(ii), of the theorem.
Thus, for fixed $y \in\Rd$, the claim follows from Theorem~\ref{ergthm}. Moreover, the limit does not depend on $y \in\Rd$ because the function $\Rd \ni x \mapsto \Phi(x+ D^{\omega})$ is bounded and because for every $y, y' \in\Rd$ the Lebesgue volume of the symmetric difference  $\Lambda_{L}(y) \sydi \Lambda_{L}(y')$ behaves like
$\mathcal{O}(L^{d-1})$ as $L\to\infty$.
\end{proof}

\noindent
The above proof rests upon

\begin{lem}\label{contker}
\begin{nummer}
\item Under the hypotheses of Theorem~\ref{vague-conv-gen}, the map $\Phi$ from \eqref{Phidef} is bounded and measurable.
\item Under the hypotheses of Theorem~\ref{vague-conv-cont}, it is even continuous.
\end{nummer}
\end{lem}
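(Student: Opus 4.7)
The plan is to handle uniform boundedness first (common to both parts), then measurability in (i) and continuity in (ii) separately. For boundedness, the uniform discreteness of all point sets in $\hat X_{D}$ gives $\|V_{P^\omega}\|_\infty \le w v_0$ for \emph{every} $P^\omega \in \hat X_{D}$, with $v_0$ from \eqref{V-bound} depending only on $u$ and $r$. The BLM kernel estimate \cite[Thm.~1.14(i)]{BLM} bounds $\|f(H)\|_\infty$ only through $F$ (via $\gamma,\tau$), the fixed background $H_0$, and $\|V\|_\infty$; hence $\sup_{P^\omega \in \hat X_{D}}|\Phi(P^\omega)|$ is finite, proving boundedness in both cases.

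For measurability in (i), the key observation is that even when $u$ is only bounded and measurable, for any $\varphi \in C_c(\Rd)$ the function $(q,\alpha) \mapsto \alpha \int u(x-q)\varphi(x)\,\d x$ lies in $C_c(\Rd \times \AA)$, because convolving a compactly supported $L^\infty$ function with an $L^1$ function of compact support produces a continuous function of compact support. Thus $P^\omega \mapsto \int \varphi\, V_{P^\omega}$ is vaguely continuous on $\hat X_{D}$, which makes $P^\omega \mapsto V_{P^\omega}$ Borel measurable as an $L^\infty$-valued map in the weak-$*$ sense. Standard resolvent-equation arguments then give Borel measurability of $P^\omega \mapsto (H_{P^\omega}-z)^{-1}$ in the weak operator topology, hence of $F(H_{P^\omega})$ for any Borel $F$ satisfying \eqref{BLM-cond}. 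Since continuity of the kernel at the diagonal yields
\[
\Phi(P^\omega) = \lim_{L \downarrow 0} L^{-d}\,\Tr\!\bigl(F(H_{P^\omega})\chi_{\L_L(0)}\bigr),
\]
and the right-hand side is a pointwise limit of traces of weakly measurable operators, $\Phi$ is Borel.

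For continuity in (ii), assume \At and $F \in C(\R)$ obeying \eqref{BLM-cond}, and take $\hat X_{D} \ni P_n^{\omega_n} \to P^\omega$ vaguely. The extra smoothness $u \in C_c^1(\Rd)$ makes $(q,\alpha) \mapsto \alpha\, u(x-q)$ itself a member of $C_c(\Rd \times \AA)$ for each fixed $x$, so vague convergence now gives $V_{P_n^{\omega_n}}(x) \to V_{P^\omega}(x)$ pointwise in $x$. Combined with the uniform bound $\|V_{P_n^{\omega_n}}\|_\infty \le wv_0$ and dominated convergence, this yields $V_{P_n^{\omega_n}}\psi \to V_{P^\omega}\psi$ strongly for $\psi \in C_c^\infty(\Rd)$, hence strong resolvent convergence $H_{P_n^{\omega_n}} \to H_{P^\omega}$, and therefore $F(H_{P_n^{\omega_n}}) \to F(H_{P^\omega})$ in the strong operator topology.

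Upgrading strong operator convergence to pointwise convergence of the continuous kernels at $(0,0)$ is the step I expect to be the main obstacle. For this I would use the BLM framework together with the $C^1$ regularity of $u$ provided by \At to produce a modulus of continuity for the family $\{f(H_{P^\omega})\}_{P^\omega \in \hat X_{D}}$ near the diagonal that is uniform in $P^\omega$, depending only on $\|V\|_\infty$, $H_0$, and the $C^1$-data of $u$. Testing strong convergence against smooth approximations $\varphi_\varepsilon$ of $\delta_0$ and exchanging the limits $\varepsilon \to 0$ and $n \to \infty$ via this uniform equicontinuity then gives $f(H_{P_n^{\omega_n}})(0,0) \to f(H_{P^\omega})(0,0)$, i.e.\ $\Phi(P_n^{\omega_n}) \to \Phi(P^\omega)$. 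Once the equicontinuity is secured, the limit exchange and the rest of the argument are routine.
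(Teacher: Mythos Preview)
Your boundedness argument and part~(i) are correct. For measurability, your observation that $(q,\alpha)\mapsto\alpha\int u(x-q)\varphi(x)\,\d x$ lies in $C_c(\Rd\times\AA)$ even for merely bounded $u$---because convolution with $\varphi\in C_c$ regularises---is a slicker route than the paper's. The paper mollifies $u$ to $u^\varepsilon\in C_c^\infty$, shows $P^\omega\mapsto V^\varepsilon_{P^\omega}(x)$ is vaguely continuous for each $x$, integrates against $\phi\bar\psi$, and then lets $\varepsilon\downarrow 0$; you avoid the mollification entirely. Both routes then feed into weak measurability of $F(H_{P^\omega})$, and your device $\Phi(P^\omega)=\lim_{L\downarrow 0}L^{-d}\Tr\big(F(H_{P^\omega})\chi_{\Lambda_L}\big)$ is a legitimate way to recover the diagonal kernel value as a pointwise limit of measurable functions (the paper instead expands \eqref{ktf} in an orthonormal basis).

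For part~(ii), your strong resolvent convergence argument is also cleaner: pointwise convergence $V_{P_n^{\omega_n}}(x)\to V_{P^\omega}(x)$ via the vague topology needs only continuity of $u$, and dominated convergence finishes it. The paper already spends the $C^1$ hypothesis here, bounding $\|u(\cdot-q_j)-u(\cdot-p_{n,j})\|_\infty\le |q_j-p_{n,j}|\,\||\nabla u|\|_\infty$ to get local uniform convergence of the potentials.

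The gap is exactly where you flag it: the uniform equicontinuity of the kernels $f(H_{P^\omega})$ near the diagonal is asserted but not proved, and it is not a result one can simply cite from \cite{BLM}. The paper avoids equicontinuity altogether. It works directly with the identity $f(H)(0,0)=\langle k_t^H(\cdot,0),\,G(H)\,k_t^H(\cdot,0)\rangle$ for $G=\e^{2t\,\cdot}F$ bounded, splits $|f(H_n)(0,0)-f(H)(0,0)|$ by a triangle inequality, and reduces to two ingredients: strong resolvent convergence (which you also have) \emph{and} $\mathrm{L}^2$-convergence $k_t^{H_n}(\cdot,0)\to k_t^H(\cdot,0)$. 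The latter is obtained from the Feynman--Kac representation: the kernel difference is bounded by the difference of Euclidean actions, and after confining the Brownian bridge to a large compact set via a Chebyshev--Markov tail estimate, this is controlled by $\|V_{P_n^{\omega_n}}-V_{Q^\omega}\|_{L^\infty(K)}$---which is precisely where the $C^1$ regularity of $u$ from \At is used. Your equicontinuity approach can in principle be carried out (a Brownian-bridge coupling argument again leads to a Lipschitz bound on $V_{P^\omega}$, hence to the $C^1$ data of $u$), but the paper's route is more economical because it targets $f(H_n)(0,0)-f(H)(0,0)$ directly rather than uniform regularity of the entire family of kernels.
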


\begin{proof}
Let $P^{\omega} \in \hat X_{D}$. As $F$ satisfies the condition  \cite[Eq.\ (1.20)]{BLM}, the integral kernel has the  representation \cite[Thm.\ 1.14]{BLM}
\begin{align}\label{ktf}
	f(H_\pom)(0,0) &= \Big\langle k_t^{H_\pom}(\cdot,0), \e^{2tH_\pom}F(H_\pom) \,k_t^{H_\pom}(\cdot,0) \Big\rangle , \\[1ex]
	& = \sum_{n,m\in\N}  \big(\e^{-tH_\pom}\psi_{n}\big)(0) \; \langle\psi_{n}, \e^{2tH_\pom}F(H_\pom) \,\psi_{m}\rangle \notag\\[-1ex]
	& \hspace*{1.7cm} \times \overline{\big(\e^{-tH_\pom}\psi_{m}\big)(0)} \label{ktf-meas}
\end{align}
where $\{\psi_{n}\}_{n\in\N}$ is an orthonormal basis in $\mathrm{L}^{2}(\Rd)$, $t \in ]0, \tau/2[$ is arbitrary and
\be\label{kt}
	\Rd\times\Rd \ni (x,y) \mapsto k_t^{H_\pom}(x,y):=\frac{\e^{-\abs{x}^2/2t}}{(2\pi t)^{d/2}}\int
	\e^{-S_t(A,V_\pom;b)}\,\d\mu_{x,y}^{0,t}(b),
\ee
is the continuous heat kernel of $H_\pom$. Here, we have expressed the heat kernel $k_{t}^{H_{\pom}}$ in the Feynman--Kac representation, where $\mu_{x,y}^{0,t}$ is the standard Brownian-bridge probability measure on all continuous paths $b$, which start at $x$ at time zero and end at $y$ at time $t$, and
\begin{equation}
	\label{action}
 	S_{t}(A,V;b) := \i \int_{0}^{t} A(b(s)) \cdot \d b(s) +
  \frac{\i}{2}\,\int_{0}^{t} (\nabla\cdot A)(b(s)) \,\d s+
  \int_{0}^{t} V(b(s))\,\d s
\end{equation}
is the Euclidian action functional. The first integral on the right-hand side of
\eqref{action} is a stochastic line integral to be understood in the
sense of It\^{o}. The other two integrals are meant in
the sense of Lebesgue.

\emph{Part}~(i). \;
By \cite[Thm.~1.10]{BLM}, the image  $\e^{-tH_{\pom}}\psi$  of any $\psi\in\mathrm{L}^{2}(\Rd)$ under the semigroup has a continuous representative in $\mathrm{L}^{2}(\Rd)$. Thus, $(\e^{-tH_{\pom}}\psi) (0) =\lim_{\varepsilon\downarrow 0} \langle \eta_{\varepsilon},\e^{-tH_{\pom}}\psi\rangle$, where $(\eta_{\varepsilon})_{\varepsilon > 0} \subset C_{c}^{\infty}(\Rd)$ is a compactly supported, non-negative approximation of the Dirac delta function at $0\in\Rd$. Since $H_{\pom}$ is bounded from below uniformly in $\pom$, measurability of $\Phi$ follows \eqref{ktf-meas} and measurability of the map
\begin{equation}
	\label{G-meas}
 	\hat X_{D} \ni \pom \mapsto \langle \phi, G(H_{\pom}) \psi\rangle \qquad \text{for every~} \phi,\psi\in \mathrm{L}^{2}(\Rd),
\end{equation}
where  $G:\R\rightarrow\C$ is any bounded Borel measurable function. By the functional calculus, this holds if and only if the map \eqref{G-meas}
is measurable for indicator functions $G=\chi_{B}$, $B \subseteq \R$ any Borel set. In other words,
measurability of the unbounded self-adjoint operator-valued map $\hat X_{D} \ni \pom \mapsto H_{\pom}$ according to \cite[Def.\ V.1.3]{CaLa90} implies that $\Phi$ is measurable.
For $E>0$ consider the truncated kinetic-energy operator $H_{0}^{E} := H_{0} \,\chi_{[0,E]}(H_{0})$, which is bounded, and let $H^{E}_{\pom} := H_{0}^{E} + V_{\pom}$. We note the strong resolvent convergence of $H^{E}_{\pom}$ to $H_{\pom}$ as $E\to\infty$. Therefore \cite[Prop.\ V.1.4]{CaLa90} ensures that measurability of the map $\pom \mapsto H^{E}_{\pom}$ for every $E>0$ implies measurability of the map $\pom \mapsto H_{\pom}$. But since $ H^{E}_{\pom}$ is bounded, this means that it suffices to show measurability of the map
\begin{equation}
	\label{V-meas}
 	\hat X_{D} \ni \pom \mapsto \langle \phi, V_{\pom} \psi\rangle
\end{equation}
for every $\phi,\psi\in \mathrm{L}^{2}(\Rd)$. In fact, by the boundedness of $V_{\pom}$, it suffices to prove measurability of \eqref{V-meas} for every $\phi,\psi \in C_{c}^{\infty}(\Rd)$. To show this, let $\varepsilon>0$, consider the mollified single-site potential
$u^{\varepsilon} := u * \eta_{\varepsilon} \in C_{c}^{\infty}(\Rd)$ and define $V_{\pom}^{\varepsilon} := \sum_{p\in P} \omega_{p} u^{\varepsilon}(\,\pmb\cdot\, -p)$. By definition of the vague topology, the map $\hat X_{D} \ni \pom \mapsto V_{\pom}^{\varepsilon}(x)$ is continuous for every $x\in\Rd$. Now, let $\phi,\psi \in C_{c}^{\infty}(\Rd)$. Using a bound like \eqref{V-bound} for $V^{\varepsilon}_{\pom}$, dominated convergence yields continuity -- and therefore measurability -- of the map $\pom \mapsto \langle \phi, V_{\pom}^{\varepsilon} \psi\rangle$ for every $\varepsilon>0$. Finally, we conclude $\langle \phi, V_{\pom} \psi\rangle = \lim_{\varepsilon\downarrow 0} \langle \phi, V_{\pom}^{\varepsilon} \psi\rangle$ because of $\lim_{\varepsilon\downarrow 0} V_{\pom}^\varepsilon = V_{\pom}$ almost everywhere on $\Rd$ and another application of dominated convergence using again a bound like \eqref{V-bound}. Therefore the map \eqref{V-meas} is measurable for every $\phi,\psi \in C_{c}^{\infty}(\Rd)$ and
Part~(i) is proven.

\emph{Part}~(ii). \;
We use the representation \eqref{ktf} and suppose that the sequence $(P_n^{\omega_n})_{n\in\N} \subset \hat X_D$ converges to $Q^{\omega} \in \hat{X}_{D}$ in the vague topology. We use the abbreviations
\be
	H_n := H_0+V_{P_{n}^{\omega_{n}}}, \qquad  H:= H_0+ V_{Q^{\omega}}
\ee
and estimate with the triangle and the Cauchy-Schwarz inequality
\begin{align}
	\label{kercont}
 	\big|\Phi(P_{n}^{\omega_{n}})& - \Phi(Q^{\omega}) \big|  \nonumber\\
	&	= \big| f(H_n)(0,0)-f(H)(0,0) \big|  \nonumber\\
	& \le \big| \big\langle k_t^{H_n}(\cdot,0), \e^{2tH_n}F(H_n)
			\big( k_t^{H_n}(\cdot,0)- k_t^{H}(\cdot,0) \big) \big\rangle \big| \nonumber \\
	& \quad + \big| \big\langle k_t^{H_n}(\cdot,0), \big( \e^{2tH_n}F(H_n) - \e^{2tH}F(H) \big)
			\,k_t^{H}(\cdot,0) \big\rangle \big| \nonumber\\
	& \quad + \big| \big\langle \big( k_t^{H_n}(\cdot,0)- k_t^{H}(\cdot,0) \big),
	      \e^{2tH}F(H) \, k_t^{H}(\cdot,0)	 \big\rangle \big| \nonumber \\
	& \le \Big[ \big\| \e^{2tH_n}F(H_n)\big\|\, \big\|k_t^{H_n}(\cdot,0)\big\|  +
			\big\| \e^{2tH}F(H) \big\|\, \big\| k_t^{H}(\cdot,0) \big\| \Big]  \nonumber\\
	&	\qquad	\times \big\| k_t^{H_n}(\cdot,0)- k_t^{H}(\cdot,0) \big\|    \nonumber\\
	& \quad  + \big\| k_t^{H_n}(\cdot,0)\big\| \,
			\big\| \big( \e^{2tH_n}F(H_n) - \e^{2tH}F(H) \big) \, k_t^{H}(\cdot,0)\big\|.
\end{align}
From \eqref{kt}, \eqref{V-bound} and \eqref{A-w} we deduce the bound
\be\label{kb}
	\big| k_t^{H_n}(x,0) \big| \le \frac{\e^{-\abs{x}^2/2t}}{(2\pi t)^{d/2}} \; \e^{t wv_{0}}
\ee
for all $x\in\Rd$ and $t>0$, uniformly in $P_{n}^{\omega_{n}} \in \hat{X}_{D}$. Hence, we have $\sup_{n\in\N} \|  k_t^{H_n}(\cdot,0)\| < \infty$. Furthermore, \eqref{BLM-cond}, the fact that $t \in ]0, \tau/2[$ and $H_{\pom}$ is uniformly bounded below in $\pom \in \hat{X}_{D}$ implies
$\e^{2tH_n}F(H_n) = G(H_{n})$ for every $n\in\N$ and $\e^{2tH}F(H) = G(H)$, where $G\in C(\R)$ is some \emph{bounded} continuous function. In particular, $\sup_{n\in\N} \big\|\e^{2tH_n}F(H_n)\big\|
< \infty$ holds. Thus, we infer from \cite[Thm.\ VIII.20(b)]{RSI} that the following two conditions are sufficient for the vanishing of the
left-hand side of  \eqref{kercont} as $n\to\infty$: (a)~ convergence
$k_t^{H_n}(\cdot,0) \rightarrow k_t^{H}(\cdot,0)$ in $\Lp\Rd$
and ~(b)~ convergence $H_n \rightarrow H$ in strong-resolvent sense.

We will first verify condition (b).
By \cite[Thm.\ VIII.25(a)]{RSI} it is sufficient to prove
\be\label{srs}
	\lim_{n\rightarrow \infty}  \norm{ (H_n - H)\varphi} = 0
\ee
for all $\varphi\in C_{c}^{\infty}(\Rd)$.
So fix an arbitrary function $\varphi\in C_c^\infty(\Rd)$ and let
$0<\varepsilon <r/2$.
Then there is $K \subset\Rd$ compact (but large enough) such that
$\| \chi_{K^{c}} \chi_{\supp \varphi}\| < \varepsilon$.
The estimate
\begin{align}
	\label{suma}
	\|( H_{n} - H)\varphi\| & =
		\| ( V_{Q^{\omega}} - V_{P_{n}^{\omega_{n}}}) \varphi \|  \nonumber\\
	& \le  \| ( V_{Q^{\omega}} - V_{P_{n}^{\omega_{n}}})
		\chi_{K^c} \varphi \|  + \| ( V_{Q^{\omega}} - V_{P_{n}^{\omega_{n}}})
		\chi_{K} \varphi\| \nonumber\\
	& \le 2\varepsilon wv_{0} \, \|\varphi\| + \| ( V_{Q^{\omega}} - V_{P_{n}^{\omega_{n}}})
		\chi_{K} \|_\infty \; \|\varphi\|
\end{align}
holds for every $n\in\N$.
We define the thickened compact
\begin{equation}
	\label{thickening}
	K' := (K)_{\diam\supp u} := \bigcup_{x\in K} \L_{2\diam\supp u}(x)
\end{equation}
and its coloured version $\hat{K}' := K' \times \AA$.
Convergence of $P_{n}^{\omega_{n}}$ to $Q^{\omega}$ in the vague toplogy implies \cite[Lemma 2.8]{MR} that for every $\tilde{\varepsilon}>0$ there exists $n_{0}\in\N$ such that
for all $n \ge n_{0}$ we have
\begin{equation}
	\label{close}
 	P_{n}^{\omega_{n}} \cap \hat{K}' \subset (Q^{\omega})_{\tilde\varepsilon} \qquad\text{and} \qquad
	Q^{\omega} \cap \hat{K}' \subset (P_{n}^{\omega_{n}})_{\tilde\varepsilon}.
\end{equation}
Here, the thickening on the product space $\Rd\times\AA$ is defined in terms of cubes with respect to the maximum norm of the norms on $\Rd$ and $\AA$. As a consequence of \eqref{close} there is a one-to-one correspondence $p_{n,j} \leftrightarrow q_{j}$ between points of $P_{n}$ and $Q$ whenever one of the points lies in $K'$. Each of those pairs has the property that $|p_{n,j}-q_{j}| < \tilde\varepsilon$ and
$|\omega_{n}(p_{n,j}) - \omega(q_{j})| < \tilde\varepsilon$. We write $\mathcal{J}$ for the
finite index set labelling those corresponding points. Then, setting
$\tilde\varepsilon := \varepsilon/|\mathcal{J}|$, we can estimate
\begin{align}\label{vconv}
	\big\| ( V_{Q^{\omega}} - V_{P_{n}^{\omega_{n}}}) \chi_{K} \big\|_\infty    & \le
		\sum_{j\in\mathcal{J}} \big\| \omega(q_{j}) \, u( \boldsymbol\cdot - q_{j})
		- \omega_{n}(p_{n,j}) \, u( \boldsymbol\cdot - p_{n,j})\big\|_{\infty}  \nonumber \\
	& \le \sum_{j\in\mathcal{J}} \Big [ |\omega(q_{j}) - \omega_{n}(p_{n,j})| \,
			\| u( \boldsymbol\cdot - q_{j}) \|_{\infty} \nonumber\\
	& \qquad\quad + |\omega_{n}(p_{n,j})| \, \| u( \boldsymbol\cdot - q_{j})
			- u( \boldsymbol\cdot - p_{n,j}) \|_{\infty}\Big] \nonumber\\
	& \le \varepsilon  \big[ \| u\|_{\infty} + w \| |\nabla u|\|_{\infty} \big]
\end{align}
for every $n\ge n_{0}$.
This bound and \eqref{suma} complete the proof of condition (b).

In the rest of this proof we verify condition (a). We want to prove
that
\be
	\label{cond1-stat}
	\big\| k_t^{H_n}(\cdot,0)- k_t^{H}(\cdot,0)\big\|^{2} = \int_{\Rd}
	\big| k_t^{H_n}(x,0)- k_t^{H}(x,0) \big|^{2} \, \d x  \;\,\longrightarrow 0
\ee
as $n\to\infty$.
Using the representation \eqref{kt}, we infer
\begin{multline}
	\label{k1}
	\big| k_t^{H_n}(x,0) - k_t^{H}(x,0) \big|   \\
	\leq  \frac{\e^{-\abs{x}^2/(2t)}}{(2\pi t)^{d/2}}
			\int \Big| \e^{-S_t(0,V_{P_{n}^{\omega_{n}}};b)} -
			\e^{-S_t(0,V_{Q^{\omega}};b)} \Big| \,  \d\mu_{x,0}^{0,t}(b)
\end{multline}
for all $n\in\N$.
The elementary inequality $|\e^\xi - \e^{\xi'}|\leq \abs{\xi-\xi'}\e^{\max\{\xi,\xi'\}}$ for all $\xi,\xi'\in\R$ then allows to estimate the integral in \eqref{k1} from above by
\begin{multline}
	\label{decoup}
 	\e^{twv_{0}}   \int \big| S_t(0,V_{P_{n}^{\omega_{n}}};b) -
		S_t(0,V_{Q^{\omega}};b) \big| \,\d\mu_{x,0}^{0,t}(b) \\
	 \leq  \e^{twv_{0}} \int S_t(0,|V_{P_{n}^{\omega_{n}}} - V_{Q^{\omega}}|;b) \, \d\mu_{x,0}^{0,t}(b).
\end{multline}
Now, for every given $0 < \varepsilon < r/2$ there exists a length $\ell >0$ (depending on $t$ but not on $x$) such that with $K:= \overline{\Lambda_{\ell}(0)}$ we have
\begin{equation}
	\label{compl-est}
 	\int S_t(0,\chi_{K^{c}};b) \, \d\mu_{x,0}^{0,t}(b) < \varepsilon (1 + |x|^{4}).
\end{equation}
This estimate can be derived from an explicit calculation of the Brownian-bridge expectation after applying the Chebyshev-Markov inequality, see e.g.\ \cite[Eq.\ (2.15)]{BLM}. We define the corresponding thickened set $K'$ as in \eqref{thickening} and $\hat{K}':= K' \times\AA$. Convergence of $P_{n}^{\omega_{n}}$ to $Q^{\omega}$ then implies the existence of $n_{0}\in\N$ such that for all
$n \ge n_{0}$ the estimate \eqref{vconv} holds. This and \eqref{compl-est} imply
\begin{equation}
	\label{cond1-final}
 	\int\! S_t(0,|V_{P_{n}^{\omega_{n}}} - V_{Q^{\omega}}|;b)  \, \d\mu_{x,0}^{0,t}(b)
	\le \varepsilon \big[ t ( \| u\|_{\infty} + w \| |\nabla u|\|_{\infty} )
	+ wv_{0}(1 + |x|^{4}) \big]
\end{equation}
for all $n \ge n_{0}$. Therefore \eqref{cond1-stat} follows from \eqref{k1} -- \eqref{cond1-final}.
\end{proof}


Since $H_{P^{\omega}}$ is uniformly lower semi-bounded for $P^{\omega} \in \hat{X}_{D}$, we have
\begin{equation}
	\label{tight}
 	\chi_{]-\infty,E]}(H_{P^{\omega}})=\chi_{[E_0,E]}(H_{P^{\omega}})
\end{equation}
for some $E_0 \in\R$, which depends on $r$ but not on the point set $P$ or its random colouring $\omega$. Furthermore, the spectral projection
$\chi_{]-\infty,E]}(H_{P^{\omega}})$ has an integral kernel	$p_E(H_{P^{\omega}}) \in
C(\Rd\times\Rd) \cap \mathrm{L}^{\infty}(\Rd\times\Rd)$, see e.g.\ \cite[Thm.\ 1.14(i)]{BLM}.

\begin{defn}\label{defIDS}
	Let $D$ be a Delone set and $\mu$ an ergodic Borel probability measure on its hull $X_{D}$. Let $\hat\mu$ be given by Theorem~\ref{ergthm}.
	The \emph{integrated density of states} w.r.t.\ $\mu$ of the family of operators
	$\hat{X}_{D} \ni P^{\omega} \mapsto H_{P^{\omega}}$
	is the right-continuous non-decreasing function
	\be
		\label{intIDS}
		\R \ni E \mapsto \nu_{D}(E) :=\int_{\hat X_D} \,p_E(H_{P^{\omega}})(0,0)\, \d\hat\mu(P^{\omega})
	\ee
	with values in $[0,\infty[$.
\end{defn}

\begin{rem}
If $X_{D}$ is uniquely ergodic, then the integrated density of states is unique.
\end{rem}

Given $P^{\omega} \in \hat{X}_{D}$, the mappping
\be
	C_{c}(\Rd) \ni F \mapsto \frac{1}{L^d} \,\tr \Big( F(H_{P^{\omega}})\chi_{\L_{L}(y)} \Big)
\ee
is a well defined continuous positive linear functional on $C_c(\R)$ (equipped with the inductive limit topology), see e.g.\ \cite[Section 2.4.2]{H}. By the Riesz-Markov representation theorem, it defines a unique right-continuous non-decreasing function
\be
	\label{fvids}
	\R \ni E \mapsto \nu_{\pom\!,L,y}(E) = \frac{1}{L^{d}} \,
	\tr \left( \chi_{]-\infty,E]}(H_{P^{\omega}})\chi_{\L_{L}(y)}\right),
\ee
the \emph{finite-volume integrated density of states}, such that
\be
	\label{tr}
	\frac{1}{L^d} \, \tr \Big( F(H_{P^{\omega}})\chi_{\L_{L}(y)} \Big) = \int_\R F(E)\, \d\nu_{\pom\!,L,y}(E) .
\ee
Now, we apply Theorem~\ref{vague-conv} to justify the terminology integrated density of states for $\nu_{D}$.

\begin{cor}\label{eids}
Let $D$ be a Delone set and assume \Az and \Ao. Let $\mu$ be an ergodic Borel probability measure on the hull $X_{D}$ and let $\hat\mu$ be given by Theorem~\ref{ergthm}. Then,
\begin{nummer}
\item  \label{ids-conv-gen}
 there exists a measurable subset $Y \subseteq X_{D}$ of full probability, $\mu(Y) =1$, and for every ${P} \in Y$ there exists a measurable subset $\Xi_{{P}} \subseteq \Omega_{{P}}$  of full probability, $\P_{{P}}(\Xi_{{P}}) =1$, such that for every $\omega\in\Xi_{{P}}$ and every $y\in\Rd$ we have
	\begin{equation}
		\label{eids-eq}
 		\lim_{L\to\infty} \nu_{\pom\!,L,y}(E) = \nu_{D}(E)
	\end{equation}
	for every $E\in\R$, independently of ${P} \in Y$, $\omega\in\Xi_{{P}}$ and $y\in\Rd$.
\item \label{ids-conv-cont}
	if, in addition, $X_{D}$ is uniquely ergodic and \At holds, then one may choose $Y =X_{D}$ in
	\itemref{ids-conv-gen}, provided $E$ is a point of continuity of $\nu_{D}$. In particular, \eqref{eids-eq} holds for ${P}=D$ at continuity points of $\nu_{D}$.
\end{nummer}
\end{cor}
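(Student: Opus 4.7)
The strategy is to apply Theorem~\ref{vague-conv} with the Borel function $F_E := \chi_{]-\infty,E]}$, which is bounded by $1$ and, since it vanishes on $]E,\infty[$, satisfies the growth condition \eqref{BLM-cond}. For this choice, formula \eqref{fvids} identifies $L^{-d}\tr(F_E(H_{P^\omega})\chi_{\L_{L}(y)}) = \nu_{P^{\omega}\!,L,y}(E)$, while by the functional calculus and Fubini the right-hand side of \eqref{intX_D} reads $\int_{\hat X_D} p_E(H_{P^\omega})(0,0)\,\d\hat\mu(P^\omega) = \nu_D(E)$. Hence Theorem~\ref{vague-conv} yields the claimed convergence at each individual $E$, but only on an exceptional null set a priori depending on $E$. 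The main obstacle is to upgrade this to a single null set on which convergence holds simultaneously for every $E\in\R$ in part~(i), respectively every continuity point of $\nu_D$ in part~(ii).

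For part~(i), let $\mathcal D\subset\R$ be the countable dense set $\Q$ enlarged by the at most countably many discontinuities of $\nu_D$. Applying Theorem~\ref{vague-conv-gen} to $F_E$ for every $E\in\mathcal D$ and intersecting the resulting full-measure sets $Y_E$ and $\Xi_{P,E}$ produces a single $Y\subseteq X_D$ with $\mu(Y)=1$ and, for each $P\in Y$, a single $\Xi_P\subseteq\Omega_P$ with $\P_P(\Xi_P)=1$, on which $\lim_L\nu_{P^\omega\!,L,y}(E)=\nu_D(E)$ for every $E\in\mathcal D$ and every $y\in\Rd$. For $E\in\R\setminus\mathcal D$, which is necessarily a continuity point of $\nu_D$, choose $E_n^-\uparrow E$ and $E_n^+\downarrow E$ in $\mathcal D$; the monotonicity of $\nu_{P^\omega\!,L,y}$ then gives
\be
 \nu_D(E_n^-) \le \liminf_{L\to\infty}\nu_{P^\omega\!,L,y}(E) \le \limsup_{L\to\infty}\nu_{P^\omega\!,L,y}(E) \le \nu_D(E_n^+),
\ee
and letting $n\to\infty$ closes the sandwich at $\nu_D(E)$ by right-continuity of $\nu_D$ and its continuity at $E$.

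For part~(ii), since $F_E$ is discontinuous, Theorem~\ref{vague-conv-cont} does not apply directly, so I would bracket $F_E$ by continuous functions. For each $E\in\R$ and $n\in\N$ introduce $F_{E,n}^{\pm}\in C(\R)$ with $F_{E,n}^{-}\le \chi_{]-\infty,E]}\le F_{E,n}^{+}$, equal to $1$ on $]-\infty,E\mp 1/n]$, equal to $0$ on $[E\pm 1/n,\infty[$, and linear in between; each satisfies \eqref{BLM-cond}. Fix a countable dense subset $\mathcal D'\subset\R$ consisting of continuity points of $\nu_D$ (available since the discontinuities of $\nu_D$ are countable), and apply Theorem~\ref{vague-conv-cont} to the countable family $\{F_{E,n}^{\pm}:E\in\mathcal D',\,n\in\N\}$. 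Intersecting the resulting full-measure sets yields, for every $P\in X_D$, a single $\Xi_P\subseteq\Omega_P$ of full probability on which all these limits exist simultaneously. For $E\in\mathcal D'$ the sandwich
\be
 \int F_{E,n}^{-}\,\d\nu_D \le \liminf_{L\to\infty}\nu_{P^\omega\!,L,y}(E) \le \limsup_{L\to\infty}\nu_{P^\omega\!,L,y}(E) \le \int F_{E,n}^{+}\,\d\nu_D,
\ee
together with $\int F_{E,n}^{\pm}\,\d\nu_D\to\nu_D(E)$ as $n\to\infty$ (by continuity of $\nu_D$ at $E\in\mathcal D'$), gives convergence at every $E\in\mathcal D'$. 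A final sandwich via $E_1<E<E_2$ in $\mathcal D'$ straddling an arbitrary continuity point $E$ of $\nu_D$ then extends the convergence to all such continuity points, as required.
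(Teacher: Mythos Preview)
Your proof is correct. Part~(i) follows the same line as the paper, which simply points to the argument in \cite[Cor.\ 5.8]{K}: apply Theorem~\ref{vague-conv-gen} with $F=\chi_{]-\infty,E]}$ on a suitable countable set of energies and sandwich by monotonicity.

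For part~(ii) you take a genuinely different route. The paper works with compactly supported continuous test functions: it applies Theorem~\ref{vague-conv-cont} to a countable dense family in $C_c(\R)$ together with cutoff functions $\psi_K$, upgrades \eqref{limF2} to all $F\in C_c(\R)$ on a single full-measure set, and thereby obtains \emph{vague} convergence of $\d\nu_{P^\omega\!,L,y}$ to $\d\nu_D$; pointwise convergence at continuity points then follows because the uniform lower bound on $H_{P^\omega}$ prevents mass from escaping to $-\infty$. You instead bracket $\chi_{]-\infty,E]}$ directly by continuous functions $F_{E,n}^{\pm}$ that satisfy \eqref{BLM-cond} (though not compactly supported) and sandwich. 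This is more elementary and tailored to the statement; the paper's vague-convergence argument is the more classical route and yields a slightly stronger intermediate conclusion (convergence of $\int F\,\d\nu_{P^\omega\!,L,y}$ for all $F\in C_c(\R)$). One minor point worth making explicit in your write-up: the identification $\int_{\hat X_D} f_{E,n}^{\pm}(H_{\tilde P^{\tilde\omega}})(0,0)\,\d\hat\mu = \int_\R F_{E,n}^{\pm}\,\d\nu_D$ via Fubini relies on the uniform lower bound $H_{P^\omega}\ge E_0$, which effectively restricts the Stieltjes integral to the compact interval $[E_0, E+1/n]$ and justifies the interchange even though $F_{E,n}^{\pm}\notin C_c(\R)$.
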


\begin{Remarks}
\item We recall Remark~\ref{on-unique} for conditions on $D$ ensuring unique ergodicity of $X_{D}$.
\item Without unique ergodicity of $X_{D}$ one cannot expect the corollary to hold  with $Y=X_{D}$.
 	We refer to Appendix~\ref{sec:example} for an example.
\item It is not clear whether Part~\itemref{ids-conv-cont} of the corollary can be extended to hold for all energies as Part~\itemref{ids-conv-gen}.
    \item Similar results have been obtained for discrete aperiodic structures, see \cite{LS03,LS06,LPV07,LMV08}.
\end{Remarks}

\begin{proof}[Proof of Corollary~\ref{eids}]
Part (i). \quad The proof is the same as that of \cite[Cor.\ 5.8]{K} with Theorem~\ref{vague-conv-gen} playing the role of \cite[Prop.~5.2]{K}. For this argument to work it is crucial that one can choose $F=\chi_{]-\infty,E]}$ in Theorem~\ref{vague-conv-gen}.

Part (ii). \quad
Given $F\in C_c(\R)$ and any ${P} \in X_{D}$, Theorem~\ref{vague-conv-cont} ensures the existence of a measurable set $\Xi^{F}_{P} \subset \Omega_{D}$ with $\P_{P}(\Xi^{F}_{P})=1$ such that
\be
	\label{limF}
	\lim_{L\rightarrow\infty} \int_\R F(E) \, \d\nu_{\pom\!,L,y}(E)
	= \int_{\hat{X}_{D}} f(H_{\tilde{P}^{\tilde\omega}})(0,0)\, \d\hat\mu(\tilde{P}^{\tilde\omega})
\ee
for every $y\in\Rd$ and every $\omega\in\Xi_{P}^F$.
The functional calculus also holds for integral kernels
\begin{equation}
	\label{stieltjes}
	f(H_{\tilde{P}^{\tilde\omega}})(0,0) = \int_{\R} F(E)\, \d p_{E}(H_{\tilde{P}^{\tilde\omega}})(0,0),
\end{equation}
where the right-hand side is to be understood as a Lebesgue-Stieljes integral with respect to the
non-decreasing function $E \mapsto p_{E}(H_{\tilde{P}^{\tilde\omega}})(0,0)$,
see e.g.\ \cite[Cor.\ 1.18]{BLM}.
Since constant functions over compact subsets of $\R$ are integrable w.r.t.\ $\d p_{E}(H_{\tilde{P}^{\tilde\omega}})(0,0)$, Fubini's theorem gives
\be
	\label{limF2}
	\lim_{L\rightarrow\infty} \int_\R F(E) \, \d\nu_{\pom\!,L,y}(E)
	= \int_{\R} F(E)\, \d\nu_{D}(E)
\ee
for every $y\in\Rd$ and every $\omega\in\Xi_{P}^F$.

Next, we show that \eqref{limF2} holds for a set of full
$\P_{P}$-probability \emph{independently} of $F\in C_c(\R)$.
To this end, we need a particular countable dense (w.r.t.\ $\|\pmb\cdot\|_{\infty}$) subset of
$C_c(\R)$. For $K\in\N$ let $\tilde{\mathcal{D}}_{K}\subset C([-K,K])$ be countable and dense.
Given any $\tilde{f} \in \tilde{\mathcal{D}}_{K}$, let $f_{\tilde{f}}\in C_{c}(\R)$ be an extension of $\tilde{f}$ with $\supp f_{\tilde{f}} \subseteq [-K-1, K+1]$ and define
$\mathcal{D}_{K} := \{ f_{\tilde{f}} \in C_{c}(\R) : \tilde{f} \in \tilde{\mathcal{D}}_{K}\}$. Then
$\mathcal{D} := \bigcup_{K\in\N} \mathcal{D}_{K}$ is countable and dense in $C_{c}(\R)$. For every $K\in\N$ let
$ 0 \le \psi_{K} \in C_{c}(\R)$ with $\psi_{K}|_{[-K,K]} =1$ and define
\begin{equation}
	\Xi_{P} := \Big(\bigcap_{F\in \mathcal{D}}\Xi_{P}^F \Big) \cap  \Big(\bigcap_{K\in\N}\Xi_{P}^{\psi_{K}} \Big)
\end{equation}
so that $\P_{P}(\Xi_{P}) =1$ and
\eqref{limF2} holds simultaneously for all $F\in\mathcal{D}$, all $\psi_{K}$, $K\in\N$, and all $\omega\in\Xi_{P}$.
The following approximation argument extends the validity of \eqref{limF2} to all $F\in C_{c}(\R)$ and
all $\omega\in\Xi_{P}$. Given $F\in C_{c}(\R)$ there exists a sequence $(F_{n})_{n\in\N} \subset \mathcal{D}$ and $K\in \N$ such that $\|F - F_{n}\|_{\infty} \rightarrow 0$ as $n\to\infty$ and $\psi_{K} F=F$, $\psi_{K} F_{n} = F_{n}$ for all $n\in\N$. Therefore we get
\begin{align}
 	\bigg| \int_\R & F(E) \, \d\nu_{\pom\!,L,y}(E)
	- \int_{\R} F(E)\, \d\nu_{D}(E) \bigg|  \nonumber\\
	&\le \|F-F_{n}\|_{\infty} \bigg( \int_\R \psi_{K}(E) \, \d\nu_{\pom\!,L,y}(E)
	 + \int_{\R} \psi_{K}(E)\, \d\nu_{D}(E)\bigg)	 \nonumber\\
	& \quad +  \bigg| \int_\R  F_{n}(E) \, \d\nu_{\pom\!,L,y}(E)
	- \int_{\R} F_{n}(E)\, \d\nu_{D}(E) \bigg|
\end{align}
for every $n\in\N$, $L>0$, $y\in\Rd$ and every $\omega\in\Xi_{P}$. Using \eqref{limF2}, we conclude
\begin{multline}
 	\limsup_{L\to\infty} \bigg| \int_\R F(E) \, \d\nu_{\pom\!,L,y}(E)
	- \int_{\R} F(E)\, \d\nu_{D}(E) \bigg|   \\
	\le 2 \|F-F_{n}\|_{\infty} \int_{\R} \psi_{K}(E)\, \d\nu_{D}(E).
\end{multline}
The subsequent limit $n\to\infty$ shows that \eqref{limF2} holds for every $F\in C_{c}(\R)$, $y\in\Rd$ and
every $\omega\in\Xi_{P}$.
In other words, we have $\P_{P}$-a.s.\ vague convergence of the measures associated with
the non-decreasing functions $\nu_{\pom\!,L,y}$ to the measure associated with
$\nu_{D}$.
Due to the uniform lower boundedness \eqref{tight} of $H_{P^{\omega}}$, no mass can get lost towards $-\infty$ in this vague limit, and the claim follows.
\end{proof}


As in the usual alloy-type Anderson model with impurities situated on $\Z$, Theorem \ref{ergthm} allows to relate the growth points of $\nu_{D}$ to the spectrum of $H_{\dom}$. Our approach requires a condition on the geometric complexity of the Delone set $D$, see also \cite[Section 2.3]{MR}. The geometry of the Delone set is expressed in terms of patterns: we say that $Q\subset D$ is a \emph{pattern} of the Delone set $D$, if $Q=D \cap K$ for some compact subset $K\subset \Rd$. We will need the following definition concerning the appearance of patterns in the Delone set $D$.

\begin{defn}\label{def-upf}
The Delone set $D$ has \emph{uniform pattern frequency} if for any pattern $Q\subset D$ the quotient
\be
	\label{upf}
	\frac{\tilde\eta_{x,L}(Q)}{L^{d}}:= \frac{1}{L^d}\big| \left\{ \tilde Q\subset D\,:\, \exists y\in (x+\L_L) \,\,
	\mbox{ such that  }\,y+\tilde Q=Q \right\}\big|
\ee
converges uniformly in $x\in \R^{d}$ as $L \to\infty$.  Moreover, we say that $D$ has  \emph{strictly positive uniform pattern frequencies} if this limit is strictly positive.
\end{defn}

\begin{thm}\label{t:supp-ids}
	Let $D$ be a Delone set and assume \Az and \Ao. Let $\mu$ be an ergodic Borel probability measure on the hull $X_{D}$, $\hat\mu$ be given by Theorem~\ref{ergthm}.	Then,
	\begin{nummer}
	\item \label{growth-spec-as}
 there exists a measurable subset $Y \subseteq X_{D}$ of full probability, $\mu(Y) =1$, and for every ${P} \in Y$ there exists a measurable subset $\Xi_{{P}} \subseteq \Omega_{{P}}$  of full probability, $\P_{{P}}(\Xi_{{P}}) =1$, such that for every $\omega\in\Xi_{{P}}$ we have
	\be
		\label{growth-spec}
		\overline{\{E \in\R : E \text{~is a growth point of~} \nu_{D}\}}
		= \spec(H_{\pom}),
	\ee
	where the overbar denotes the closure in $\R$.
	\item if, in addition, the Delone set $D$ has strictly positive uniform pattern frequencies, $X_{D}$ is uniquely ergodic and \At holds, then one may choose $Y =X_{D}$ in \itemref{growth-spec-as}. In particular, \eqref{growth-spec} holds for ${P}=D$.
\end{nummer}
\end{thm}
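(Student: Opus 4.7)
My plan is to identify a deterministic closed set $\Sigma_{0}\subset\R$ that coincides with $\spec(H_{P^{\omega}})$ (in the sense relevant to each part) and with $\overline{\{E:E\text{ is a growth point of }\nu_{D}\}}=\supp(d\nu_{D})$, adapting the classical Pastur strategy to the compact dynamical system $(\hat X_{D},\R^{d},\hat\mu)$ from Theorem~\ref{ergthm}.

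For Part~(i) I first use the covariance \eqref{delone-trans} together with ergodicity of $(\hat X_{D},\hat\mu)$ to conclude that $\spec(H_{P^{\omega}})$ is $\hat\mu$-almost surely a fixed closed set $\Sigma_{0}$. For each open interval $I\subset\R$ with rational endpoints, the Borel set $\{P^{\omega}\in\hat X_{D}:\spec(H_{P^{\omega}})\cap I\neq\emptyset\}$ is translation invariant (its measurability rests on the measurability of $P^{\omega}\mapsto H_{P^{\omega}}$ established in the proof of Lemma~\ref{contker}), and hence has $\hat\mu$-measure $0$ or $1$; countably many such intervals separate points of $\R$. The disintegration of $\hat\mu$ in \eqref{erg1} then produces the required $\mu$-full set $Y\subseteq X_{D}$ and the $\P_{P}$-full sets $\Xi_{P}\subseteq\Omega_{P}$ on which $\spec(H_{P^{\omega}})=\Sigma_{0}$.

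To identify $\Sigma_{0}$ with $\supp(d\nu_{D})$ I argue by two contrapositives. If $E$ is a growth point of $\nu_{D}$ then the non-negative integrand in Definition~\ref{defIDS} is strictly positive on a set of positive $\hat\mu$-measure, so $\chi_{(E-\varepsilon,E+\varepsilon]}(H_{P^{\omega}})\neq 0$ there, forcing $(E-\varepsilon,E+\varepsilon]\cap\Sigma_{0}\neq\emptyset$ for every $\varepsilon>0$ and hence $E\in\Sigma_{0}$. Conversely, if $E\in\Sigma_{0}$ then $\chi_{(E-\varepsilon,E+\varepsilon)}(H_{P^{\omega}})\neq 0$ for $\hat\mu$-a.e.\ $P^{\omega}$; Fubini together with translation invariance of $\hat\mu$ gives
\begin{equation*}
\int_{\hat X_{D}}\tr\bigl(\chi_{(E-\varepsilon,E+\varepsilon]}(H_{P^{\omega}})\chi_{\Lambda_{L}(0)}\bigr)\,d\hat\mu(P^{\omega}) = L^{d}\bigl(\nu_{D}(E+\varepsilon)-\nu_{D}(E-\varepsilon)\bigr),
\end{equation*}
and vanishing of the right-hand side would, after covering $\R^{d}$ by countably many translates of $\Lambda_{1}(0)$, force the spectral projection to vanish almost surely, contradicting $E\in\Sigma_{0}$. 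Hence $E$ is a growth point.

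For Part~(ii), the inclusion $\supp(d\nu_{D})\subseteq\spec(H_{P^{\omega}})$ for every $P\in X_{D}$ is the easy direction and follows from Corollary~\ref{eids}\itemref{ids-conv-cont}: given a growth point $E$ I pick continuity points $a<E<b$ of $\nu_{D}$ arbitrarily close to $E$ with $\nu_{D}(a)<\nu_{D}(b)$, the corollary yields $\chi_{(a,b]}(H_{P^{\omega}})\neq 0$ for $L$ large, and a countable dense choice of such pairs removes the dependence of the exceptional set on $E$. The reverse inclusion $\spec(H_{P^{\omega}})\subseteq\supp(d\nu_{D})$ is where strictly positive uniform pattern frequencies (Definition~\ref{def-upf}) enter. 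Given $E\in\spec(H_{P^{\omega}})$ with a Weyl sequence $(\psi_{n})$ of compactly supported unit vectors in $L^{2}(\R^{d})$, the action of $H_{P^{\omega}}$ on $\psi_{n}$ depends only on the coloured pattern of $P^{\omega}$ in a thickening of $\supp\psi_{n}$ by $\diam\supp u$; positive asymptotic frequency of this pattern in $D$ together with an independent Borel--Cantelli estimate for the i.i.d.\ colouring provides, for any $P'\in Y$ and $\P_{P'}$-a.e.\ $\omega'\in\Xi_{P'}$ from Part~(i), translates $x_{n}\in\R^{d}$ at which $U_{x_{n}}\psi_{n}$ is a Weyl sequence for $H_{P'^{\omega'}}$ at $E$, whence $E\in\spec(H_{P'^{\omega'}})=\Sigma_{0}$. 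The principal obstacle is making this transfer uniform -- extracting a single null exceptional set outside of which the transferred Weyl sequence exists simultaneously for every $E$ in a countable dense subset of $\spec(H_{P^{\omega}})$ -- and upgrading, via assumption~\At on the regularity of $u$, the vague-topology proximity of coloured patterns to a norm approximation of the local potential fine enough to preserve $\|(H_{P'^{\omega'}}-E)U_{x_{n}}\psi_{n}\|\to 0$.
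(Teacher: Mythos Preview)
Your Part~(i) is the classical Pastur argument and matches the paper's proof, which simply refers to \cite[Theorem~3.1]{PF92} together with the covariance relation \eqref{delone-trans} and Corollary~\ref{eids}(i). The inclusion $\subseteq$ in Part~(ii) is likewise identical to the paper's.

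For the inclusion $\supseteq$ in Part~(ii) your route genuinely differs. You transfer a Weyl sequence from the given $(P,\omega)$ to some $(P',\omega')$ with $P'\in Y$ and invoke Part~(i) to conclude $E\in\Sigma_{0}=\supp(\d\nu_{D})$. The paper instead stays inside the fixed $P$: it selects a single approximate eigenfunction $\tilde\varphi$, uses strictly positive pattern frequency to locate a positive-density family of disjoint translates $\tilde\varphi_{j}$ sitting over copies of the pattern $G_{0}=P\cap K_{0}$ inside $P$ itself, and combines the strong law of large numbers for the i.i.d.\ colourings with Theorem~\ref{vague-conv}(ii) to bound
\[
\nu_{D}(I_{\epsilon})\;\ge\;\lim_{L\to\infty}\frac{1}{L^{d}}\tr\bigl(F(H_{P^{\tilde\omega}})\chi_{\Lambda_{L}}\bigr)
\;\ge\;\Bigl(1-c\,\tfrac{\delta^{2}}{\epsilon^{2}}\Bigr)\,\eta\,\P_{P}(A_{0})>0
\]
directly. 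This avoids the detour through $\Sigma_{0}$ and is quantitative. Both approaches ultimately rest on the same two ingredients---positive pattern frequency plus independence of the colourings---so your strategy is sound; it is just more indirect.

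One correction: the ``principal obstacle'' you flag is not real. Since $\Sigma_{0}$ is a fixed deterministic set, you do not need one $\omega'$ that works for all $E$ simultaneously. For each $E\in\spec(H_{P^{\omega}})$ and each $n$, the Borel--Cantelli set of $\omega'$ admitting a well-placed translate of $\psi_{n}$ has full $\P_{P'}$-measure; intersecting over $n$ and with $\Xi_{P'}$ still leaves a full-measure (hence nonempty) set, and any $\omega'$ in it yields $E\in\spec(H_{P'^{\omega'}})=\Sigma_{0}$. On the $\omega$ side the only exceptional set you need is $(\supp\P_{P})^{c}$, ensuring every finite colour pattern of $\omega$ lies in the support of the product measure. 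The point that does require care (and which the paper also uses implicitly) is that under unique ergodicity the strictly positive pattern frequencies of $D$ are inherited by every element of $X_{D}$, so your chosen $P'\in Y$ indeed contains copies of the relevant pattern of $P$.
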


\begin{proof}
Part (i). \quad The proof follows as in \cite[Theorem 3.1]{PF92}, taking into account \eqref{delone-trans}, Corollary \ref{eids}(i) and \eqref{intIDS}.\\
Part (ii). \quad
This result builds upon the discrete case in \cite[Lemma 6.3]{MR07}.  The continuous case requires additional technical arguments such as taking care of approximating $\chi_I(H_\pom)$ by a smooth function and constructing an orthogonal sequence of compactly supported functions to expand its trace.
To illustrate where the geometric assumptions on $D$ are needed, and for the reader's convenience, we give the proof.

We first show the inclusion $\subseteq$. Let $E$ be an element of the set on the l.h.s. of \eqref{growth-spec}. Then there exist continuity points  $a<E<b$ of $\nu_D$ such that Corollary \ref{eids}(ii) implies  for $I:=(a,b)$ the  limit representation
\be 0<\nu(I):=\int_I\d\nu_D= \lim_{L\rightarrow \infty} \frac{1}{L^d} \tr\left( \chi_{I}\left(H_\pom\right)\chi_{\L_L(y)}\right)
\ee
for every $y\in\Rd$, every $P\in X_D$ and $\P_P$-a.e. $\omega\in\Omega_P$. This shows $E\in\sigma(H_\pom)$.

Next, we show the inclusion $\supseteq$. Let $\epsilon>0$, let $P\in X_D$ and $\omega\in \supp \P_P$ such that $E\in I$ is in the spectrum of $H_\pom$. Then, by \cite[Theorem 7.22]{Weid}, there exists a sequence $(\varphi_n)_n\in C_c^\infty(\Rd)$ such that $C:=\displaystyle\liminf_{n\rightarrow\infty} \norm{\varphi_n}>0$ and $\norm{\left(H_\pom-E\right)\varphi_n}\rightarrow 0$.
Let $\delta>0$ to be specified later. There exists $n_0\in\N$ such that $\norm{\varphi_{n_0}}>C$ and
\be \norm{\left(H_\pom-E\right)\varphi_{n_0}}<\delta.\ee
Define $\tilde \varphi:=\varphi_{n_0}/\norm{\varphi_{n_0}}$, then
\be \norm{\left(H_\pom-E\right)\tilde\varphi}<\frac{\delta}{\norm{\varphi_{n_0}}}<\frac{\delta}{C}. \ee
Let $K_0\subset \Rd$ be the smallest compact set such that $\dist(\Rd\setminus K_0,\supp \tilde\varphi)>\delta_u$, where $\delta_u$ is the radius of a ball containing the support of the single-site potential $u$ in $V_\pom$. Denote by $G_0:= P\cap K_0$ the pattern of $P$ associated with $K_0$.  Because of the strict uniform pattern frequency property, we know that $G_0$ is repeated in $P$ infinitely many times, in fact, macroscopically often.
 Let $G_j=P\cap K_j$, $j\in\N_0$, be an enumeration of them obeying $G_j=v_j+G_0$, $K_j=v_j+K_0$ for translation vectors $v_j\in\Rd$,
$j\in \N_0$. For each $K_j$ there are at most $S\in\N$ supports $K_{j'}$, $S$ independent of $j$, that have non empty intersection with $K_j$. By dropping the overlapping ones, we extract a subsequence $\left(G_{j_l}\right)_l$ of patterns with pairwise disjoint supports. From now on we denote the subsequence $j_l$ again by $j$.  Denote by $\tilde\eta_L(G_0)$ the number of translated copies of $G_0$
in $\L_L$ that do not overlap, that is, $\tilde\eta_L(G_0)=\sharp\{ G_{j}\subset \L_L\}$.  We still have
\be\label{supf} \liminf_{L\rightarrow \infty} \frac{\tilde\eta_L(G_0)}{L^d}=:\eta>0, \ee

Consider the translated functions $\tilde\varphi_j:=U_{v_j}\tilde\varphi$, for $j\in\N_0$, where $U_a$ was defined above \eqref{delone-trans}.  We have that $\supp \tilde \varphi_j\subset K_j$ are pairwise disjoint, so $(\tilde\varphi_j)_{j\in\N_0}$ is an orthonormal sequence.

Then, the events
\be A_j:=\{ \tilde\omega\in\Omega_P\,:\, \abs{\tilde\omega_{p+v_j}-\omega_p}<\delta\,\, \text{ for all } p\in G_0\},\,\,j\in\N_0,\ee
of having a colouring in $G_j$ that is close to the one in $G_0$ given by $\omega$ are all independent by \Ao. Because of the $\Rd$-covariance of the probability measure, stated in \cite[Lemma 3.9(i)]{MR}, the fact that $\sharp G_j=\sharp G_0<\infty$ and that the $\omega_p$ are i.i.d., we have $\P_P(A_j)=\P_P(A_0)>0$ for all $j\in\N_0$, because $\omega\in\supp\P_P$.  The strong law of large numbers implies that there exists a set $\Xi_0\subset\Omega_P$ of full measure, $\P_P(\Xi_0)=1$, such that

\be\label{sln} \lim_{L\rightarrow \infty} \frac{1}{\tilde\eta_L(G_0)}\sum_{\substack{j\in\N_0: \\ K_j\subset\L_L }}\chi_{A_j}(\tilde\omega):=\P_P(A_0)>0 \ee
for all $\tilde\omega\in\Xi_0$.

Let $I_\epsilon:=(E-\epsilon,E+\epsilon)$ and take a function $F\in C_c(\R)$ such that $\supp F\subset I_\epsilon$, $ F\leq \chi_{I_\epsilon}$ and
\be F|_{I_{\epsilon/2}}=1. \ee

By Theorem \ref{vague-conv}(ii), there exists a set $\Xi_1\subset\Omega_P$ of full measure, $\P_P(\Xi_1)=1$, such that we have the following lower bound on $\nu_D(I_\epsilon)$,
\be\label{lowbn} \nu(I_\epsilon)=\int_{I_\epsilon}  \d\nu_D\geq \int_\R F\, \d\nu_D= \lim_{L\rightarrow \infty} \frac{1}{L^d}\tr \left( F(H_{P^{\tilde\omega}})\chi_{\L_L}\right), \ee
for all $\tilde\omega\in\Xi_1$.
Now, take $\tilde\omega\in \Xi_1\cap\Xi_2$. Then we obtain a lower bound for the trace by expanding it in the orthonormal sequence $(\tilde\varphi_j)_{j\in\mathbb N_0}$.  We have
\begin{align}  \tr \,\left( F(H_\tpom)\chi_{\L_L}\right) &  \geq \sum_{\substack{j\in\N_0: \\ K_j\subset\L_L }}  \angles{\tilde\varphi_j,  F(H_\tpom) \tilde\varphi_j} \nonumber \\
& \geq  \sum_{\substack{j\in\N_0: \\ K_j\subset\L_L }} \chi_{A_j}(\tilde\omega) \angles{\tilde\varphi_j,  F(H_\tpom) \tilde\varphi_j}\nonumber\\
& \geq  \sum_{\substack{j\in\N_0: \\ K_j\subset\L_L }} \chi_{A_j}(\tilde\omega) \angles{\tilde\varphi_j,  \chi_{I_{\epsilon/2}}(H_\tpom) \tilde\varphi_j}, \label{lowbf2}
\end{align}
where in the last inequality, we used that $\chi_{I_{\epsilon/2}}\leq F$.  The inner product on the r.h.s. of \eqref{lowbf2} can be bounded below according to
\begin{align}
 \angles{\tilde\varphi_j,  \chi_{I_{\epsilon/2}}(H_\tpom) \tilde\varphi_j} & =\norm{\tilde\varphi_j}^2-\norm{\chi_{\R\setminus I_{\epsilon/2}}(H_\tpom) \tilde\varphi_j}^2 \nonumber\\
& \geq 1-\left(\frac{\epsilon}{2}\right)^{-2}\norm{(H_\tpom - E) \tilde\varphi_j}^2.
\end{align}
For $\tilde\omega\in A_j$, we have
\begin{align} \norm{(H_\tpom - E) \tilde\varphi_j} & < \norm{\left( V_{v_j+\pom}-V_\tpom\right)\tilde\varphi_j} +\norm{(H_{v_j+\pom} - E) \tilde\varphi_j}   \nonumber\\
& < \delta \norm{\tilde \varphi} +\frac{\delta}{C}\norm{\tilde\varphi} \nonumber\\
& = \delta\left(1+\frac{1}{C}\right).
\end{align}
Combining this, \eqref{sln} and \eqref{supf}, we can separate the r.h.s. of \eqref{lowbf2} into a \emph{pattern frequency part} times an \emph{averaged randomness part}
\begin{align} \nu(I_{\epsilon}) & \geq \left( 1-4\left(1+\frac{1}{C}\right)^2 \frac{\delta^2}{\epsilon^2}\right) \, \liminf_{L\rightarrow \infty} \left( \frac{\tilde\eta_L(G_0)}{L^d} \cdot \frac{1}{\tilde\eta_L(G_0)} \sum_{\substack{j\in\N_0: \\ K_j\subset\L_L }}\chi_{A_j}(\tilde\omega)   \right)\\ \nonumber
& \geq \left( 1-4\left(1+\frac{1}{C}\right)^2  \frac{\delta^2}{\epsilon^2}\right) \, \eta \, \P_P(A_0).\end{align}
By taking $\delta>0$ small enough, we infer that $\nu(I_{\epsilon})>0$.
\end{proof}

\begin{rem}\label{rem:sptype} If the Delone set $D$ is periodic, any $P\in X_D$ can be written as $P=D+v$ where $v\in\Rd$ belongs to the fundamental cell of the lattice.  In this case one has the following standard result on the spectral type of the spectrum of $H_\dom$ \cite[Theorem 1]{KMa}: there exist sets $\Sigma_*\subset \R$, with $*=$ pp, ac, sc, such that
\be\label{sptype} \sigma_*(H_\pom)=\Sigma_* \quad\mbox{for all }P\in X_D \mbox{ and $\P_P$-a.e.}\,\, \omega\in\Omega_P,
\ee
where pp, ac, sc stands for pure point, absolutely continuous and singular continuous spectrum, respectively.

One way to prove \eqref{sptype} in some energy interval $I \subset \mathbb R$ for an arbitrary Delone set $D$, is to establish continuity of the function $X_D\ni P\mapsto f_*(P)=\mathbb E_P\left(\tr\, \left( \chi_I(H_\pom)\Pi_*\right)\right)$, where $\Pi_*$ is the orthogonal projection onto the spectral subspace corresponding to $*\in\{\text{pp},\text{ac},\text{sc}\}$. Note that if $D$ is periodic, $f_*$ is constant.
  Without knowing the continuity of $f_*$, one can only say that \eqref{sptype} holds for $\mu$-a.e. $P\in X_D$ and $\P_P$-a.e. $\omega\in \Omega_P$, as follows from the proof of \cite[Theorem 1]{KMa}.

We can prove, however, a weaker version of $\eqref{sptype}$ for the pure point spectrum at low energies in the case $H_0=-\Delta$, see Theorem \ref{t:dynlocas}.
\end{rem}


\section{Lifshitz tails and dynamical localization for Delone-Anderson operators}\label{s:LTdynloc}

In the first part of this section we prove that the integrated density of states of the Delone-Anderson operator with zero
magnetic field given by Eqs.\eqref{hdom} -- \eqref{ranpot1}, namely
\[ H_{D^{\omega}} = -\Delta+ V_{D^{\omega}},  \]
exhibits a Lifshitz-tail behaviour at the bottom of its spectrum. As an application, in the second part of this section, we will show dynamical localization for low energies in the spectrum of $H_{D^{\omega}}$ and investigate the size of the region of dynamical localization.
To prove dynamical localization we use the bootstrap multi-scale analysis for non-homogeneous systems \cite{GK1,RM}. Monotonicity of the Delone-Anderson operator in the random coupling constants is important in the proofs of both Lifshitz tails and dynamical localization. Moreover, we want to concentrate on Lifshitz tails of a purely quantum-mechanical character that are not affected by the details of the single-site distribution. Therefore we define the following assumptions:
\begin{itemize}
\item[\Aloc]
The single-site potential in \Az is non-negative and obeys
\be\label{u} u^-\chi_{\Lambda_{\epsilon_u}}\leq u \leq u^+\chi_{\Lambda_{\delta_u}}, \ee
for some constants $ 0<\epsilon_u\leq \delta_u <\infty $ and $0< u^-
\leq u^+ < \infty $.

The single-site distribution $\P^{(0)}$ in \Ao is absolutely continuous
with a bounded and continuous Lebesgue density $\rho$ with compact support 
\be\label{rhoplus0}
 0\in\mbox{supp }\rho \subseteq [0, w[.
\ee
\end{itemize}
As a consequence of \eqref{rhoplus0} we can determine the non-random spectrum $\Sigma$ of the family $(H_{\pom})_{P\in X_{D}, \omega\in\Omega_{P}}$ by a Borel--Cantelli type argument. This yields
\[ \sigma(H_\pom)=[0,\infty) \quad \mbox{for all } P\in X_D \mbox{ and~ $\mathbb{P}_{P}$-a.e. }\omega\in\Omega_P.\]

\begin{itemize}
\item[\Alt]
There exist constants $C_\rho,\beta>0$ such that the single-site probability density $\rho$ in \Aloc obeys
\be\label{rhobottom} \int_{0}^{\epsilon} \rho(v) \, \d v\geq C_\rho \epsilon^\beta \qquad   \forall \epsilon >0 \text{~small enough}. \ee
\end{itemize}


In what follows we denote by $C_{a,b,c,...}$ a positive constant depending on the parameters $a,b,c,...$.



\subsection{Lifshitz tails}

The integrated density of states $\nu_{D}$ of the family of operators $\hat{X}_{D} \ni P^{\omega} \mapsto H_\dom$ is exponentially suppressed near the bottom of the spectrum, because the occurrence of small eigenvalues requires a large-deviation event.

\begin{thm}\label{lt}
Let $D$ be a Delone set and let $\mu$ be an ergodic Borel probability measure on the hull $X_D$. 
Assume \Az with $B=0$, \Ao, \Aloc and \Alt. 
Then, the integrated density of states $\nu_D$ from \eqref{intIDS} exhibits a Lifshitz tail at the bottom of the spectrum, i.e.,
\begin{equation}
 	\lim_{E \downarrow 0} \; \frac{\ln |\ln \nu_D(E)|}{\ln E} = - \frac{d}{2}.
\end{equation}
\end{thm}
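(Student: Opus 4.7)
My plan is to establish matching asymptotic upper and lower bounds on $\nu_D(E)$ as $E \downarrow 0$, both producing the exponent $-d/2$ after a double logarithm. Assumption \Aloc forces $V_{\dom}\ge 0$, so almost surely $\sigma(H_{\pom}) = [0,\infty[$ and the bottom of the spectrum is $0$; Lifshitz tails then quantify how atypical it is for the operator to have eigenvalues near $0$. The genuinely new aspect compared with the classical alloy-type setting is the absence of translation invariance, which I will absorb into the hull-integral representation of $\nu_D$ from Definition~\ref{defIDS} together with the hull-averaged Dirichlet--Neumann bracketing announced in Theorem~\ref{unifbd}.

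For the upper bound I would apply Neumann bracketing on a tiling of $\Rd$ by cubes of side $L$ with $L^{-2}\sim E$. Averaged over the hull this dominates $\nu_D(E)$ by the probability that the Neumann restriction of $H_{\pom}$ to $\L_L(y)$ has an eigenvalue in $[0,E]$; since the second eigenvalue of the free Neumann Laplacian on $\L_L$ already exceeds $c/L^2 \sim E$, only the ground state eigenvalue $\lambda_1^N$ matters. I would lower bound $\lambda_1^N$ via Temple's inequality with the constant trial state $\psi_0 = L^{-d/2}\chi_{\L_L(y)}$, which yields
\[ \lambda_1^N \;\geq\; (1+o(1))\, L^{-d}\!\int_{\L_L(y)} V_{\pom}(x)\, \d x. \]
By \Aloc the right-hand side is bounded below by $u^-\epsilon_u^{d}\sum_{p \in D\cap \L_L(y)} \omega_p$, and relative denseness in \Az guarantees at least $N_L \gtrsim (L/R)^d$ summands. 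Independence of the $(\omega_p)$ then provides
\[ \P_P\bigl(\lambda_1^N \le E\bigr) \;\leq\; \bigl(\P^{(0)}([0,cE])\bigr)^{N_L/2} \;\leq\; \exp(-c' L^d). \]
Inserting $L \sim E^{-1/2}$ and integrating over $P\in X_D$ produces $\nu_D(E) \leq \exp(-c'' E^{-d/2})$.

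For the lower bound I would use Dirichlet bracketing in the opposite direction: $\nu_D(E)$ dominates the hull-averaged probability that the Dirichlet restriction of $H_{\pom}$ to $\L_L(y)$ carries an eigenvalue below $E$. Taking again $L \sim E^{-1/2}$ and the free Dirichlet ground state as a trial function, its Rayleigh quotient is bounded by $c_d/L^2 + \sup_{\L_L(y)} V_{\pom}$. Uniform discreteness bounds the number of Delone points in the enlarged cube $\L_{L+2\delta_u}(y)$ by $\tilde N_L \lesssim (L/r)^d$; on the event that each such $\omega_p \leq \epsilon$, the potential is everywhere $\leq v_0\epsilon$, and assumption \Alt furnishes
\[ \P_P(\text{all those couplings}\leq \epsilon) \;\geq\; (C_\rho \epsilon^\beta)^{\tilde N_L} \;=\; \exp\bigl(-c\tilde N_L \ln(1/\epsilon)\bigr). \]
Choosing $\epsilon \sim L^{-2}\sim E$ so that the Rayleigh quotient is below $c' E$, this yields $\nu_D(E) \geq \exp(-c'' E^{-d/2}\ln(1/E))$; the logarithmic factor is harmless under the outer logarithm in $\ln|\ln\nu_D(E)|/\ln E$.

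The principal obstacle, already flagged in the introduction, is the hull-averaged Dirichlet--Neumann bracketing of Theorem~\ref{unifbd}: in the non-periodic setting monotonicity in $L$ of the finite-volume integrated density of states is known only after averaging over the hull, and the bracketing constants must be uniform in $P\in X_D$ and $y\in\Rd$. Once this is available, the two single-box probability estimates above can be integrated pointwise in $(P,\omega)$ against $\hat\mu$ via \eqref{intIDS} to conclude. A secondary technical point is the enlargement of boxes by $\delta_u$ to accommodate the support of $u$; this costs only boundary-layer corrections of order $L^{d-1}$, negligible at the $L^d$ scale relevant here.
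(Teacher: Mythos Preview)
Your proposal is correct and mirrors the paper's approach: hull-averaged Dirichlet--Neumann bracketing (this is Lemma~\ref{ndbracklem} in the paper, not Theorem~\ref{unifbd}) sandwiches $\nu_D(E)$ between $\mu$-averages of single-box Dirichlet and Neumann counting functions, and these are then bounded uniformly in $P\in X_D$ exactly as you outline---Temple's inequality plus a large-deviation bound for $\sum_p\omega_p$ on the Neumann side, a smooth Dirichlet trial function together with \Alt on the Dirichlet side. The one technical point you elide is that Temple's inequality requires $\langle\psi_0,V_{\pom}\psi_0\rangle$ to lie strictly below the second Neumann eigenvalue $\sim(\pi/L)^2$, which is not automatic; the paper secures this by first passing to the truncated couplings $\tilde\omega_p=\min\{\omega_p,\alpha L^{-2}\}$ and using the monotonicity $H_{P^{\tilde\omega}}\le H_{P^\omega}$, a step your ``$(1+o(1))$'' hides.
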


The proof relies on an averaged form of Dirichlet--Neumann bracketing over the hull in \eqref{ndbrack} below. For this purpose, we introduce an alternative version of the finite-volume integrated density of states \eqref{fvids},

\begin{defn} Given $P\in X_D$, $\omega\in\Omega_P$, $y\in\Rd$ and $L>0$, we consider the non-decreasing function
\be\label{fvn} \R \ni E \mapsto\tilde\nu_{\pom\!,L,y}^{\sharp}(E) :=\frac{1}{L^d} \;\tr \chi_{]-\infty,E]}(H_{\pom\!,L,y}^{\sharp}), \ee
where $H_{\pom\!,L,y}^\sharp$, $\sharp\in\{\rm{D,N}\}$, is the restriction of $H_\pom$ to the cube $\Lambda_L(y)$, with Dirichlet or Neumann boundary conditions. If $y=0$, we drop it from the notation and write simply $H_{\pom\!,L}^\sharp$ and $\tilde\nu_{\pom\!,L}^{\sharp}$.
\end{defn}

For any $F\in C_c(\Rd)$ we have
\be \int_\R F(E)\, \d\tilde \nu_{\pom\!,L,y}^{\sharp}(E)= \frac{1}{L^d} \;\tr F(H_{\pom\!,L,y}^\sharp).  \ee
The limit of the function $ \tilde \nu_{\pom\!,L,y}^{\sharp}$ as $L \to\infty$ is again $\nu_D$, as we recall in

\begin{prop}\label{approxnu}
Let $D$ be a Delone set and assume \Az and \Ao. Let $\mu$ be an ergodic Borel probability measure on the hull $X_{D}$. 
Then, there exists a measurable subset $Y \subseteq X_{D}$ of full probability, $\mu(Y) =1$, and for every ${P} \in Y$ there exists a measurable subset $\Xi_{P} \subseteq \Omega_{{P}}$  of full probability, $\P_{{P}}(\Xi_{P}) =1$, such that for every $\omega\in\Xi_{{P}}$ and every $y\in\Rd$ we have
\be\label{nu-tilde} \lim_{L\rightarrow \infty} \tilde \nu_{\pom\!,L,y}^{\sharp}(E)=\nu_D(E) \ee
for all continuity points $E$ of $\nu_D$, where $\nu_D$ is the integrated density of states from Corollary~\ref{eids}.
\end{prop}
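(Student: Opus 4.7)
The plan is to combine Corollary~\ref{eids} with Dirichlet--Neumann bracketing and the ergodic theorem from Theorem~\ref{ergthm} in order to establish vague convergence of the measures associated with $\tilde\nu^{\sharp}_{\pom\!,L,y}$ to $\nu_{D}$, and then to promote this to pointwise convergence at continuity points of $\nu_{D}$.

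Let $F \in C_{c}(\R)$. The aim is to show $L^{-d}\tr F(H^{\sharp}_{\pom\!,L,y}) \to \int F\,\d\nu_{D}$ almost surely. For fixed $\ell>0$ and $L=n\ell$, the form-domain inclusions
\be \bigoplus_{j} H^{1}_{0}(\L_{\ell}(y_{j})) \subset H^{1}_{0}(\L_{L}(y)) \subset H^{1}(\L_{L}(y)) \subset \bigoplus_{j} H^{1}(\L_{\ell}(y_{j})), \ee
corresponding to the partition of $\L_{L}(y)$ into sub-cubes $\L_{\ell}(y_{j})$, together with the min--max principle, sandwich $L^{-d}\tr F(H^{\sharp}_{\pom\!,L,y})$ between the Dirichlet and Neumann sub-cube averages
\be \frac{1}{L^{d}}\sum_{j}\tr F(H^{D}_{\pom\!,\ell,y_{j}}) \quad\text{and}\quad \frac{1}{L^{d}}\sum_{j}\tr F(H^{N}_{\pom\!,\ell,y_{j}}), \ee
which are obtained for general $F$ by writing it as a difference of monotone functions in $C_{c}(\R)$. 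Applying Theorem~\ref{ergthm} to the bounded measurable maps $\tilde P^{\tilde\omega}\mapsto \ell^{-d}\tr F(H^{\sharp}_{\tilde P^{\tilde\omega}\!,\ell,0})$ on the coloured hull $\hat X_{D}$ -- whose measurability follows as in Lemma~\ref{contker}\itemref{vague-conv-gen} since the boundary form does not depend on the configuration -- the two sub-cube averages converge $\P_{P}$-almost surely as $n\to\infty$ to the $\hat\mu$-expectations $a^{\sharp}_{\ell}(F):=\int_{\hat X_{D}}\ell^{-d}\tr F(H^{\sharp}_{\tilde P^{\tilde\omega}\!,\ell,0})\,\d\hat\mu(\tilde P^{\tilde\omega})$.

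It then remains to show $a^{\sharp}_{\ell}(F)\to \int F\,\d\nu_{D}$ as $\ell\to\infty$. By the translation invariance of $\hat\mu$ on $\hat X_{D}$ one has $\int_{\hat X_{D}}\ell^{-d}\tr\bigl(F(H_{\tilde P^{\tilde\omega}})\chi_{\L_{\ell}}\bigr)\,\d\hat\mu = \int F\,\d\nu_{D}$ for every $\ell>0$, and the standard surface-order Weyl bound $|\tr F(H^{N}_{\tilde P^{\tilde\omega}\!,\ell,0}) - \tr F(H^{D}_{\tilde P^{\tilde\omega}\!,\ell,0})| = O(\ell^{d-1})$, uniform in $\tilde P^{\tilde\omega}$ thanks to the $L^{\infty}$-bound \eqref{V-bound}, controls both $|a^{N}_{\ell}(F)-a^{D}_{\ell}(F)|$ and the gap between each $a^{\sharp}_{\ell}(F)$ and the above bulk expectation; this closes the vague convergence $\int F\,\d\tilde\nu^{\sharp}_{\pom\!,L,y}\to\int F\,\d\nu_{D}$. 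The exceptional null set is made independent of $F\in C_{c}(\R)$ and of $y\in\Rd$ by intersecting over a countable dense family as in the proof of Corollary~\ref{eids}\itemref{ids-conv-gen}, and the upgrade from vague to pointwise convergence at continuity points of $\nu_{D}$ uses the uniform lower semi-boundedness \eqref{tight} of $H^{\sharp}_{\pom\!,L,y}$ to prevent mass loss at $-\infty$. The main technical obstacle is precisely this uniform-in-$\pom$ surface-order bound, which is a geometric-spectral fact for the Laplacian with bounded potential and is independent of the random configuration, but must be stated with constants that do not degenerate over the hull.
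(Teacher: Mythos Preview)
Your approach differs substantially from the paper's, and as written it contains a genuine gap.

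The paper's proof is much shorter: it uses Corollary~\ref{eids}(i) to get vague convergence of the \emph{bulk} finite-volume IDS $\nu_{\pom,L,y}$ to $\nu_D$, and then invokes a surface-order comparison (\cite[Lemma~2.15]{H}) to show that $\big|\int F\,\d\tilde\nu^{\sharp}_{\pom,L,y}-\int F\,\d\nu_{\pom,L,y}\big|\to 0$ for $F\in C_c^\infty(\R)$. This immediately gives vague convergence of $\tilde\nu^{\sharp}_{\pom,L,y}$, and the upgrade to pointwise convergence at continuity points follows as you describe. In other words, the paper uses the very surface-order bound you mention only once, to compare $H^{\sharp}_{\pom,L,y}$ directly with the whole-space operator $H_{\pom}$ restricted to $\L_L(y)$, rather than as part of a two-scale bracketing scheme.

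The gap in your argument is the sentence ``Applying Theorem~\ref{ergthm} \dots\ the two sub-cube averages converge $\P_P$-almost surely as $n\to\infty$ to the $\hat\mu$-expectations $a^{\sharp}_\ell(F)$.'' Theorem~\ref{ergthm} is an ergodic theorem for \emph{continuous} $\Rd$-averages $L^{-d}\int_{\L_L}\Phi(x+P^\omega)\,\d x$, whereas your sub-cube sums $n^{-d}\sum_j\Phi(-y_j+P^\omega)$ are \emph{discrete} $\ell\Z^d$-averages. The $\ell\Z^d$-sub-action on $(\hat X_D,\hat\mu)$ is in general \emph{not} ergodic, so Birkhoff only gives convergence to a conditional expectation, not to $a^{\sharp}_\ell(F)$. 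A concrete instance: for $D=\Z^d$ and $\ell\in\N$, the sub-action fixes the base point $t\in\T^d\cong X_D$, and the limit of your discrete average is $\E_{t+\Z^d}\big[\ell^{-d}\tr F(H^{\sharp}_{(t+\Z^d)^\omega,\ell,0})\big]$, which genuinely depends on the relative position $t$ of the lattice to the fixed box $\L_\ell(0)$ and therefore does \emph{not} equal the $\hat\mu$-average $a^{\sharp}_\ell(F)=\int_{\T^d}\E_{s+\Z^d}[\cdots]\,\d s$. The subsequent step $a^{\sharp}_\ell(F)\to\int F\,\d\nu_D$ is fine, but since the sandwiching limits are the random quantities $b^{\sharp}_\ell(P^\omega)$ rather than the constants $a^{\sharp}_\ell(F)$, knowing that the \emph{expectations} of $b^{\sharp}_\ell$ converge and that $b^N_\ell-b^D_\ell=O(\ell^{-1})$ is not enough to pin down the pointwise limit.

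The cleanest repair is precisely the paper's shortcut: drop the two-scale bracketing and instead compare $L^{-d}\tr F(H^{\sharp}_{\pom,L,y})$ with $L^{-d}\tr\big(F(H_{\pom})\chi_{\L_L(y)}\big)$ via the uniform surface-order estimate you already cite; the latter quantity converges by Corollary~\ref{eids}(i), which has already been established.
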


\begin{rem}\label{nu_cont}
Under hypotheses \Az, \Ao and \Aloc, uniform Wegner estimates \cite{RMV,Kl} imply continuity of $\nu_D$. In this case, the convergence in \eqref{nu-tilde} holds for every $E\in\mathbb R$.
\end{rem}

\begin{proof}[Proof of Proposition~\ref{approxnu}]
Since the proof applies to both $\sharp=\rm D,N$, we omit the superscript from the notation here.
We infer from Corollary~\ref{eids}(i)  that $\nu_{\pom\!,L,y}$ converges vaguely to $\nu_D$, for $\P_P$-a.e. $\omega\in\Omega_P$.  On the other hand, \cite[Lemma 2.15]{H} gives
\be\label{appn} \lim_{L\rightarrow \infty} \abs{ \int_\R F(E) \,\d\tilde \nu_{\pom\!,L,y}(E)-\int_\R F(E) \,\d \nu_{\pom\!,L,y}(E)  }=0 \ee
for every $F\in C_c^\infty(\R)$.  Therefore, we have
 \be \lim_{L\rightarrow \infty}\int_\R F(E) \,\d\tilde \nu_{\pom\!,L,y}(E)=\int_\R F(E) \,\d\nu_D(E).
 \ee
 This, together with \cite[\S30, Exercise 3]{Ba}, yields vague convergence of the measure $\d\tilde \nu_{\pom\!,L,y}$ to $\d\nu_D$. Since all the measures are supported in $[0,\infty)$, no mass can get lost towards $-\infty$, and we obtain pointwise convergence of the distribution function $\tilde \nu_{\pom\!,L,y}$ to $\nu_D$ at continuity points of the latter.
\end{proof}
%
To establish Dirichlet--Neumann bracketing for $\nu_D$ we use the more convenient definition \eqref{fvn} for the finite-volume integrated density of states. 

\begin{lem}
	\label{ndbracklem} 
	Let $D$ be a Delone set and let $\mu$ be an ergodic Borel probability measure on the hull $X_D$. 
	Assume \Az and \Ao and let $\nu_{D}$ be the integrated density of states \eqref{intIDS}. 
	Let $L>0$ and let $E\in \R$ be a continuity point of $\nu_D$.
	Then the integrated density of states satisfies Dirichlet--Neumann bracketing averaged over the hull
	\be
		\label{ndbrack} 
		\int_{X_D} \mathbb E_{P}\left(\tilde\nu_{\pom\!,L}^{\rm D}(E)\right) \d\mu(P) 
		\leq \nu_D(E)\leq \int_{X_D} \mathbb E_{P} \left(\tilde\nu_{\pom\!,L}^{\rm N}(E)\right)\d\mu(P).
	\ee
\end{lem}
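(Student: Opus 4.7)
The plan is to combine the standard pointwise Dirichlet--Neumann bracketing on a large reference cube $\Lambda_M$ with two limit ingredients already at hand: Proposition~\ref{approxnu}, which identifies $\nu_D(E)$ as the $M\to\infty$ limit of $\tilde\nu^{\sharp}_{\pom\!,M}(E)$, and the continuous ergodic theorem (Theorem~\ref{ergthm}), which identifies the $M\to\infty$ limit of the average of $y\mapsto\tilde\nu^{\sharp}_{\pom\!,L,y}(E)$ over $y \in \Lambda_M$. The scale $L$ stays fixed and plays the role of the inner tile size, while $M\to\infty$ is the macroscopic limit.

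First I would fix a shift $\xi \in \Lambda_L$ and take the maximal family of translates $\{\Lambda_L(\xi+La)\}_{a \in I(\xi,M)}$ contained in $\Lambda_M$, writing $\tilde\Lambda_{\xi,M}$ for their union (whose complement in $\Lambda_M$ is a boundary collar of thickness $\lesssim L$). Standard form-monotonicity (enlarging the Dirichlet domain lowers the operator; inserting extra Dirichlet, resp.\ Neumann, walls raises, resp.\ lowers, it) yields
\[
	\sum_{a \in I(\xi,M)} \tr\,\chi_{]-\infty,E]}\bigl(H^{\rm D}_{\pom\!,L,\xi+La}\bigr) \;\leq\; \tr\,\chi_{]-\infty,E]}\bigl(H^{\rm D}_{\pom\!,M}\bigr)
\]
and, with a Neumann remainder $R_{\xi,M}$ controlling the $L$-thick collar,
\[
	\tr\,\chi_{]-\infty,E]}\bigl(H^{\rm N}_{\pom\!,M}\bigr) \;\leq\; \sum_{a \in I(\xi,M)} \tr\,\chi_{]-\infty,E]}\bigl(H^{\rm N}_{\pom\!,L,\xi+La}\bigr) + R_{\xi,M},
\]
with $R_{\xi,M} = O(LM^{d-1})$ uniformly in $\pom$ by Weyl's bound together with the uniform potential estimate \eqref{V-bound}. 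Dividing by $M^d$, averaging over $\xi \in \Lambda_L$, and using the tiling identity $\frac{1}{L^d}\int_{\Lambda_L}\sum_{a \in I(\xi,M)} f(\xi+La) \, \d\xi = \frac{1}{L^d}\int_{\Lambda_{M-L}} f(y) \, \d y$, the two discrete sums collapse into the continuous averages $\frac{1}{M^d}\int_{\Lambda_{M-L}} \tilde\nu^{\sharp}_{\pom\!,L,y}(E) \, \d y$.

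I would then pass to $M \to \infty$ at any coloured Delone set $\pom$ drawn from the full-$\hat\mu$-measure set on which both Proposition~\ref{approxnu} and Theorem~\ref{ergthm} apply. Proposition~\ref{approxnu} forces $\tilde\nu^{\sharp}_{\pom\!,M}(E) \to \nu_D(E)$ at the continuity point $E$. Applying Theorem~\ref{ergthm} to the bounded measurable function $\Phi(\tilde P^{\tilde\omega}) := \tilde\nu^{\sharp}_{\tilde P^{\tilde\omega}\!,L}(E)$, together with the translation covariance $\tilde\nu^{\sharp}_{\pom\!,L,y}(E) = \Phi(-y+\pom)$, gives
\[
	\lim_{M\to\infty} \frac{1}{M^d} \int_{\Lambda_M} \tilde\nu^{\sharp}_{\pom\!,L,y}(E) \, \d y \;=\; \int_{X_D} \E_P\bigl(\tilde\nu^{\sharp}_{\pom\!,L}(E)\bigr) \, \d\mu(P),
\]
where the disintegration \eqref{erg1} rewrites the $\hat\mu$-integral over $\hat X_D$ in the desired form. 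Feeding these limits into the averaged bracketing and absorbing the $O(L/M)$ discrepancy between $\Lambda_{M-L}$ and $\Lambda_M$ together with $R_{\xi,M}/M^d$ yields both halves of \eqref{ndbrack}.

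The step I expect to be most delicate is the measurability of the map $\hat X_D \ni \pom \mapsto \tilde\nu^{\sharp}_{\pom\!,L}(E)$ needed to invoke Theorem~\ref{ergthm}; this should follow by repeating the reduction in the proof of Lemma~\ref{contker}(i), since the resolvent of the purely-discrete operator $H^{\sharp}_{\pom\!,L}$ depends measurably on $\pom$ through the same vague-continuity argument. A secondary nuisance is the uniform-in-$\pom$ control of the Neumann collar term $R_{\xi,M}$ and of the $\Lambda_M \setminus \Lambda_{M-L}$ discrepancy; both are routine in view of \eqref{V-bound} but must be tracked so that the $M\to\infty$ manipulations are legitimate.
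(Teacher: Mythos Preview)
Your approach is correct and reaches the same conclusion, but it is more involved than the paper's argument. The paper restricts the macroscopic scale to $K\in L\mathbb N$ so that $\Lambda_K$ decomposes \emph{exactly} into $(K/L)^d$ translates of $\Lambda_L$; this eliminates the boundary collar, the remainder $R_{\xi,M}$, and the need to average over the shift $\xi$. After the pointwise sub/superadditive bracketing
\[
\tilde\nu_{\pom\!,K}^{\rm N}(E)\;\le\;\frac{1}{|\mathcal J|}\sum_{j\in\mathcal J}\tilde\nu_{\pom\!,L,j}^{\rm N}(E),
\]
the paper integrates both sides over $\hat X_D$ with respect to $\hat\mu$ and uses the \emph{translation invariance} of $\hat\mu$ directly to collapse the right-hand side to $\int_{\hat X_D}\tilde\nu_{\pom\!,L}^{\rm N}(E)\,\d\hat\mu(\pom)$, without invoking the ergodic theorem. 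The limit $K\to\infty$ on the left is then handled by dominated convergence together with Proposition~\ref{approxnu}, and \eqref{erg1} disintegrates the $\hat\mu$-integral into the form \eqref{ndbrack}. Your route---working pointwise at a generic $\pom$, averaging over $\xi$ to replace the lattice sum by $\frac{1}{M^d}\int_{\Lambda_{M-L}}\tilde\nu^{\sharp}_{\pom\!,L,y}(E)\,\d y$, and then applying Theorem~\ref{ergthm} to compute that limit---achieves the same identification of the limit but at the cost of the collar estimates and the measurability verification you flag. Both approaches ultimately need Proposition~\ref{approxnu} on a set of full $\hat\mu$-measure and the measurability of $\pom\mapsto\tilde\nu^{\sharp}_{\pom\!,L}(E)$; the paper's version simply sidesteps the pointwise ergodic-theorem step by exploiting invariance of the measure after integrating.
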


\begin{rem}
	\label{ndbrack-cont}
  If \Aloc is assumed in addition, then \eqref{ndbrack} holds for every $E\in\mathbb R$. This follows from Remark \ref{nu_cont}.
\end{rem}


\begin{proof}[Proof of Lemma~\ref{ndbracklem}]
For a fixed $L>0$ we consider the sequence $\{\L_{K}\}_{K\in L\mathbb N}$ of concentric open cubes in $\R^{d}$ centered about the origin. For every $K\in L\N$ the decomposition
\be\label{cubedecoup} 
	\L_{K} = \Bigg(\bigcup_{j\in\mathcal J}\overline{\L_{L}(j)}\Bigg)^{\text{int}}
\ee
holds with some index set $\mathcal J \subset \Rd$ of cardinality $|\mathcal J|=(K/L)^d$.
By the subadditivity of $\tr \,\chi_{]-\infty,E]}(H_{\pom\!,K}^{\rm N})$ in the volume we have
\begin{align}
	\label{neumann-decomp}
	\tilde\nu_{\pom\!,K}^{{\rm N}}(E)= \frac{1}{K^d} \tr\, \chi_{]-\infty,E]}(H_{\pom\!,K}^{\rm N})
		&\le \frac{L^d}{K^d} \displaystyle\sum_{j\in\mathcal J} \frac{1}{L^d} \tr \, 
		\chi_{]-\infty,E]}(H_{\pom\!,L,j}^{\rm N})\nonumber\\
	&= \frac{1}{|\mathcal J|} \displaystyle\sum_{j\in\mathcal J} \tilde\nu_{\pom\!,L,j}^{\rm N}(E)
\end{align}
for every $E\in\mathbb R$ and every $P^{\omega} \in \hat{X}_{D}$.
Integrating with respect to the translation-invariant measure $\hat\mu$ on $\hat X_D$, we obtain
\begin{equation}
	\label{avDNb}
	\int_{\hat X_D} \!\tilde \nu_{\pom\!,K}^{\rm N}(E) \,\d\hat\mu(\pom)
	 \leq \int_{\hat X_D}\! \tilde\nu_{\pom\!,L}^{\rm N}(E) \,\d\hat\mu(\pom).
\end{equation}
Next, we take the limit $K \rightarrow \infty$, keeping $L$ fixed. Since $H_{\pom\!,K}^{\rm N}\geq-\Delta^{\rm N}_{K}$, it follows that $\tilde \nu_{\pom\!,K}^{\rm N}$ is bounded from above by a quantity that does not depend on $\pom$ so that Lebesgue's
Dominated Convergence Theorem and Proposition \ref{approxnu} yield
\be\label{conv_nu} 
	\lim_{K\rightarrow \infty} \int_{\hat X_D} \tilde \nu_{\pom\!,K}^{\rm N} (E) \,\d\hat\mu(\pom) 
	= \int_{\hat X_D}\nu_D(E) \,\d\hat\mu(\pom)=\nu_D(E)
\ee
for every continuity point $E$ of $\nu_D$. Thus, \eqref{conv_nu}, \eqref{avDNb} and  \eqref{erg1} 
provide the upper bound in \eqref{ndbrack}.

In an analogous way, using the superadditivity of $\tr\,\chi_{]-\infty,E]}(H_{\pom\!,K}^{\rm D})$ in the volume, we obtain the lower bound in \eqref{ndbrack}.
\end{proof}

Having established Dirichlet--Neumann bracketing, we need upper and lower Lifshitz-tail bounds 
for the integrands in \eqref{ndbrack} to prove Theorem \ref{lt}. This is done in Theorem~\ref{unifbd}. The proof of Theorem~\ref{unifbd} follows the principal ideas which are well known for impurities located on periodic point sets. However, we need  additional efforts to make the result useful for proving localization for \emph{every} Delone set in the hull $X_{D}$, in particular for $D$ itself.  Usually, the length $L \sim E^{-1/2}$ is determined and fixed by the energy in such estimates. But in contrast to the periodic case we cannot work with the infinite-volume integrated density of states to establish the initial estimate, because self-averaging of the finite-volume integrated density of states in the macroscopic limit is only known for almost every Delone set in the hull. Nor do we know monotonicity in $L$ of the finite-volume Dirichlet or Neumann integrated densities of states, which holds pointwise for every Delone set in the hull $X_{D}$. We know this only in average over the hull. The way out will 
be to extend the Lifshitz-tail estimates to all sufficiently large lengths $L \gtrsim E^{-1/2}$. It is also important that the constants in the estimates are uniform on the hull $X_{D}$. 

\begin{thm}\label{unifbd}
	Let $D$ be a Delone set with radii of uniform discreteness $r$ and relative denseness $R$. 
	Assume \Az with $B=0$, \Ao and \Aloc. Then 
	\begin{nummer}
 	\item
		for every $E' >0$ there exist constants $C^{(1)}, C^{(2)}, C^{(3)}>0$, 
		depending all on $r$, $u$ and $\rho$, and a constant $C_{d}>0$ such that 
		\be
 			\mathbb E_{P}\left(\tilde\nu_{\pom\!,L}^{\rm N}(E)\right)  \leq C_{d} \e^{-C_R E^{-d/2}}
		\ee
		for every $P \in X_{D}$, every $E \in ]0,\min\{E_{R},E'\}[$ and every $L\in [{L}_{E,R}, \infty[$, where 
		$E_{R}:= C^{(1)} R^{-d}$, $L_{E,R} := C^{(2)} R^{-d/2} E^{-1/2}$ and 
		\begin{equation}
 			\label{CR} 
			C_{R} := C^{(3)} R^{-d-d^{2}/2}.
		\end{equation}
	\item	
		if \Alt holds in addition, then there exists constants $c_{d}^{(1)}, c_{d}^{(2)}, C_{r,\rho}$, 
		$C_{u, \rho}, E_{u,\rho}, >0$ 
		such that 
		\be  
			\mathbb E_{P}\left( \tilde\nu_{\pom\!,L}^{\rm D}(E)\right)
			\geq c_{d}^{(1)} E^{d/2} \e^{-C_{r,\rho} E^{-d/2}\abs{\ln C_{u,\rho} E}}
		\ee
		for every $P \in X_{D}$, every $E \in ]0,E_{u,\rho}[$ and every $L\in [\ell_{E}, \infty[$, where 
		$\ell_{E} := c_{d}^{(2)} E^{-1/2}$.
	\end{nummer}
\end{thm}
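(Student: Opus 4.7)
I would adapt the classical Kirsch--Martinelli approach to Lifshitz tails (cf.\ \cite{Kirsch89,S,PF92}) to the Delone setting, with two non-standard features to track carefully: uniformity of all constants in $P\in X_D$, and validity of the bounds for all $L$ above a critical scale (not just at that scale). Uniformity in $P$ rests on the two-sided count $c_d(\ell/R)^d \leq |P\cap \Lambda_\ell(z)| \leq C_d(\ell/r)^d$, which holds for every $P\in X_D$ and every $z\in\Rd$ once $\ell\gtrsim R$, and uniformity in $L$ is achieved by working at the scale $\ell\sim E^{-1/2}$ and extending via Neumann/Dirichlet bracketing---which for $\tilde\nu^\sharp_{\pom\!,L}$ is a \emph{pointwise} (not just averaged) inequality.

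\smallskip\noindent\textbf{Part (i).} Set $\ell := L_{E,R}$ and partition $\Lambda_L$ into congruent sub-cubes of side $\ell$ (a negligible boundary remainder when $L/\ell\notin\N$ can be absorbed into constants). Neumann bracketing yields the pointwise bound $\tilde\nu^{\rm N}_{\pom\!,L}(E)\leq (L/\ell)^{-d}\sum_i \tilde\nu^{\rm N}_{\pom\!,\ell,z_i}(E)$, so it suffices to prove $\mathbb E_P[\tilde\nu^{\rm N}_{\pom\!,\ell,z}(E)]\leq C_d \e^{-C_R E^{-d/2}}$ uniformly in $z$ and $P$. Since $V_{\pom}\geq 0$, a heat-kernel trace estimate with $t=1/E$ gives the deterministic bound $\tilde\nu^{\rm N}_{\pom\!,\ell,z}(E) \leq C_d E^{d/2}\,\chi_{\{E_0(H^{\rm N}_{\pom\!,\ell,z})\leq E\}}$. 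I would then apply Temple's inequality with the constant trial function $\psi_0 = \ell^{-d/2}\chi_{\Lambda_\ell(z)}$: using $-\Delta^{\rm N}_\ell\psi_0 = 0$ and the spectral gap $\pi^2/\ell^2$ to the second eigenvalue of $-\Delta^{\rm N}_\ell$, and assuming $\ell^2 E$ is a sufficiently small constant (which is what the choices $E\leq E_R = C^{(1)}R^{-d}$ and $\ell = L_{E,R}$ encode for suitable $C^{(1)}, C^{(2)}$ depending on $r,u,\rho$), Temple's lemma forces the implication
\be
E_0(H^{\rm N}_{\pom\!,\ell,z}) \leq E \;\;\Longrightarrow\;\; \bigl(\textstyle\int u\bigr) \sum_{p\in P\cap\Lambda_\ell^+(z)}\omega_p \leq C\,\ell^d E,
\ee
where $\Lambda_\ell^+(z):=\Lambda_\ell(z)+\supp u$. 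Since $|P\cap\Lambda_\ell^+(z)|\geq c_d(\ell/R)^d$ by relative denseness and $\mathbb E^{(0)}[\e^{-\lambda\omega}]\leq \e^{-\delta}$ for $\lambda$ large (a consequence of $\supp\rho$ not being concentrated at zero), a Cram\'er exponential Chebyshev bound yields $\mathbb P_P(\cdots)\leq \exp(-c(\ell/R)^d)$. With $\ell = L_{E,R}$, one has $(\ell/R)^d = C R^{-d-d^2/2}E^{-d/2}$, producing the exponent $C_R$ of \eqref{CR}.

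\smallskip\noindent\textbf{Part (ii).} Take $\ell_E := c_d^{(2)}E^{-1/2}$ tuned so that $E_0(-\Delta^{\rm D}_{\ell_E}) = E/2$. Dirichlet bracketing gives the pointwise lower bound $\tilde\nu^{\rm D}_{\pom\!,L}(E)\geq (L/\ell_E)^{-d}\sum_i \tilde\nu^{\rm D}_{\pom\!,\ell_E,z_i}(E)$ for all $L\geq\ell_E$, so it suffices to estimate one cube from below. Using the normalized first Dirichlet eigenfunction $\phi_1$ of $-\Delta^{\rm D}_{\ell_E,z}$ as trial function, the event $A_z:=\{\|V_{\pom}\|_{L^\infty(\Lambda_{\ell_E}(z))}\leq E/2\}$ implies $\langle\phi_1,H^{\rm D}_{\pom\!,\ell_E,z}\phi_1\rangle\leq E$ and hence (by min-max) $\tilde\nu^{\rm D}_{\pom\!,\ell_E,z}(E)\geq \ell_E^{-d}\chi_{A_z}$. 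It is in turn sufficient for $A_z$ that $\omega_p\leq \varepsilon := E/(2u^+ N^*)$ for every $p\in P\cap\Lambda_{\ell_E}^+(z)$, where $N^*$ is a uniform upper bound (depending only on $u$ and $r$) on the overlap multiplicity of the single-site potentials. Uniform discreteness gives $|P\cap\Lambda_{\ell_E}^+(z)|\leq C_d(\ell_E/r)^d$, and independence together with \Alt yields $\mathbb P_P(A_z)\geq (C_\rho\varepsilon^\beta)^{C_d(\ell_E/r)^d} = \exp(-C_{r,\rho}E^{-d/2}|\ln(C_{u,\rho}E)|)$, which combined with $\ell_E^{-d}\sim E^{d/2}$ proves the claim.

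\smallskip\noindent\textbf{Main obstacle.} The hardest part is not any individual estimate but the simultaneous satisfaction of Temple's condition ($\ell^2 E \leq$ small const) and the Cram\'er condition that the threshold $C\ell^d E$ lies strictly below the mean of $\sum\omega_p$ over $\sim(\ell/R)^d$ summands (which forces $E R^d$ small), while keeping every constant tracked as a function of $r, R, u, \rho$ only. Balancing these two inequalities is exactly what fixes $E_R$, $L_{E,R}$, and the $R$-dependence $C_R = C^{(3)}R^{-d-d^2/2}$; it is also what ensures that the argument works \emph{uniformly} in $P\in X_D$, so that the bounds are available for every Delone set in the hull when they are fed into the multi-scale analysis in Theorem~\ref{t:dynlocas}.
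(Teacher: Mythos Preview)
Your approach is essentially the paper's: Temple's inequality plus a large-deviation estimate for~(i), a variational trial state plus the \Alt\ lower bound for~(ii), with Neumann/Dirichlet bracketing to pass from the critical scale $\ell\sim E^{-1/2}$ to all larger $L$. The only ordering difference is cosmetic---the paper first proves the estimate for $L$ in a window $[L_E,2L_E]$ and then decomposes larger boxes into sub-boxes of that size, while you bracket first and then work at scale $\ell$.

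There is, however, one technical omission in your Temple step. Temple's inequality requires the hypothesis $\langle\psi_0,H^{\rm N}_{\pom,\ell}\psi_0\rangle < E_2(-\Delta^{\rm N}_\ell)=\pi^2/\ell^2$, but $\langle\psi_0,V_{\pom}\psi_0\rangle=\ell^{-d}\int_{\Lambda_\ell}V_{\pom}$ is of order one, not $\ell^{-2}$; your condition ``$\ell^2 E$ small'' does not address this. The paper (following the standard Kirsch argument) first truncates the couplings, setting $\tilde\omega_p:=\min\{\omega_p,\alpha\ell^{-2}\}$, and uses monotonicity $E_1(H^{\rm N}_\omega)\geq E_1(H^{\rm N}_{\tilde\omega})$; with truncated couplings both Temple's hypothesis and the variance term are controlled uniformly in $\pom$, and the large-deviation bound is then for $\sum_{p}\tilde\omega_p$ (this is the content of Lemma~\ref{LargeDev}). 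Without this step your implication $E_0\leq E\Rightarrow\sum\omega_p\leq C\ell^d E$ does not follow. Once inserted, your sketch matches the paper's proof.
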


\begin{proof}
	{(i)} \quad 
	Let $P\in X_{D}$, $L>0$ and $0< E \le E'$. To obtain the upper bound we observe that
	\begin{equation}
		\label{ub-0}
 		\mathbb E_{P} \big(\tilde\nu_{\pom\!,L}^{\rm N}(E)\big) \le C_{d} \, 
		\mathbb P_P\big( E_1( H_{\pom\!,L}^{\rm N}) \le E  \big),
	\end{equation}
	where $E_1( H_{\pom\!,L}^{\rm N})$ denotes the ground-state energy of $H_{\pom\!,L}^{\rm N}$,
	and the constant $C_{d}$ can be chosen as an $L$-independent upper bound on the integrated 
	density of states of the Neumann Laplacian on $\Lambda_{L}$ at energy $E'$.
	Next, Temple's inequality provides the second estimate in
	\begin{equation}
		\label{ub-1}
 		E_1( H_{\pom\!,L}^{\rm N}) \ge E_1( H_{P^{\tilde{\omega}}\!,L}^{\rm N}) 
		\ge \frac{c_{u}}{L^{d}} \sum_{p\in \Lambda_{L} \cap P }\tilde{\omega}_{p},
	\end{equation}
	where $\tilde\omega := (\tilde\omega_{p})_{p\in P}:= \big(\min\{\omega_p, \alpha L^{-2}\}\big)_{p \in P}$ 
	defines the couplings of a suitably truncated random potential. This estimate is valid if $L> \delta_{u}$. 
	The constants can be chosen as $\alpha := [u^{+}(2\delta_{u}/r)^{d}]^{-1}$ and 
	$c_{u}:= 2^{-d-1} u^{-} \varepsilon_{u}^{d}$. They are uniform in $\pom\in X_{D}$ and do not depend $R$.
	To derive this estimate in \eqref{ub-1}, we have adapted the 
	argument in \cite[Sect.\ 6.2]{K} to our continuum setting, using 
	\be
		\label{delone}
		\frac{L^d}{R^d} \leq|\L_L \cap P|\leq \frac{L^d}{r^d}
		\ee
	and the assumptions on the single-site potential in \Aloc.
	
	We conclude from \eqref{ub-0}, \eqref{ub-1} and the Large-Deviation Lemma~\ref{LargeDev} that
	\begin{equation}
 		\mathbb E_{P} \big(\tilde\nu_{\pom\!,L}^{\rm N}(E)\big) 
		\le C_{d} \, \mathbb P_P\Bigg( \frac{c_{u}}{L^{d}} 
			\sum_{p\in \Lambda_{L} \cap P }\tilde{\omega}_{p} \le E \Bigg)
		\le C_{d} \e^{-C_0 R^{-d} L^{d}}
	\end{equation}
	for every $E \in ]0,\min\{4E_{R},E'\}[$ and every $L\in [\ell, 2L_{E}]$, where 
	$\ell := \max\{L_{0}, \delta_{u}\}$, $E_{R} := c_{u}\alpha (4\ell)^{-2} R^{-d}$ and
	$L_{E} := (1/4) (\alpha c_{u}/E)^{1/2} R^{-d/2}$. The constants $C_{0}$ and $L_{0}$ are those 
	from Lemma~\ref{LargeDev} and do not depend on $R$. 
	We note that $E< E_{R}$ implies $L_{E} \ge \ell$. Therefore we have
	\begin{equation}
		\label{ub-ul}
 		\mathbb E_{P} \big(\tilde\nu_{\pom\!,L}^{\rm N}(E)\big) 
		\le C_{d} \e^{-C_R E^{-d/2}}
	\end{equation}
	for every $E \in ]0,\min\{E_{R},E'\}[$, every $L\in [{L}_{E}, 2L_{E}]$ and every $P\in X_{D}$ with
	$C_R := C_0 (\alpha c_{u}/16)^{d/2} R^{-d-d^2/2}$.
	
	Finally, we extend \eqref{ub-ul} to arbitrarily large lengths. Let $L' > 2L_{E}$, then the integer part 
	$k := \lfloor L'/L_{E} \rfloor \in \N$ and $L := L'/k \in [L_{E}, 2L_{E}]$. We use the subadditive 
	Neumann decomposition \eqref{neumann-decomp} with $K=L'$ and $|\mathcal{J}|=k^{d}$. 
	Since \eqref{ub-ul} holds for every $P\in X_{D}$, in particular for every shifted point set $j+P$, 
	$j\in \R^{d}$, with the same constants, we arrive at 
	\begin{equation}
		\mathbb E_{P} \big(\tilde\nu_{\pom\!,L'}^{{\rm N}}(E) \big) 
		\le \frac{1}{k^{d}} \sum_{j\in\mathcal J} \mathbb E_{P} \big( \tilde\nu_{\pom\!,L,j}^{\rm N}(E) \big) 
		\le C_{d} \e^{-C_R E^{-d/2}}
	\end{equation}
	for every $E \in ]0,\min\{E_{R},E'\}[$, every $L'\in [2{L}_{E}, \infty[$ and every $P \in X_{D}$. Together with \eqref{ub-ul}, this proves the claim. 

	{(ii)} \quad
	To verify the lower bound
	we let $E,L >0$ and follow the same strategy as for the standard alloy-type model. 
	We estimate the Dirichlet ground-state energy by the min-max principle
	\begin{align}
		\label{lb-start}
		\mathbb E_{P}\left( \tilde\nu_{\pom\!,L}^{\rm D}(E)\right) 
		& \geq \frac{1}{L^d} \,\mathbb P_P\big( E_1( H_{\pom\!,L}^{\rm D}) \le E  \big) \nonumber\\
		& \ge  \frac{1}{L^d} \,\mathbb P_P\Big( \langle\psi, H_{\pom\!,L}^{\rm D} \psi\rangle \le E \|\psi\|^{2} \Big),
	\end{align}
	where $\psi(x) := L^{-d/2} \chi(x/L)$ for $x\in \Lambda_{L}$ and $\chi \in C_{c}^{\infty}(\Lambda_{1})$ 
	is a cut-off function such that 
	$0 \le \chi \le 1$, $\chi|_{\Lambda_{1/2}}=1$ and $\supp\chi \subseteq \Lambda_{3/8}$.

	Now, we assume that $L \ge 4\delta_{u}$. Then, the condition on the support of $\chi$ implies that the 
	expectation of the potential energy is not influenced by impurities outside of $\Lambda_{L}$, and we obtain
	$\langle\psi, V_{\pom}\psi\rangle \le \frac{\|u\|_{1}}{L^{d}} \, \sum_{p\in P\cap \Lambda_{L}} \omega_{p}$.
	The norm of $\psi$ is bounded from below, $\|\psi\|^{2} = \int_{\Lambda_{1}} \chi(x) \d x \ge 2^{-d}$
	and the expectation of the kinetic energy is given by	$\langle\psi, -\Delta_{L}^{\rm D} \psi\rangle = L^{-2} \int_{\Lambda_{1}} |(\nabla\chi)(x)|^{2} =: c L^{-2}$.
	From this we infer that
	\begin{equation}
		\mathbb E_{P}\left( \tilde\nu_{\pom\!,L}^{\rm D}(E)\right) 
		\ge  \frac{1}{L^d} \, \mathbb P_P\Bigg( \frac{1}{L^{d}} \sum_{p\in P\cap \Lambda_{L}} \omega_{p}  
				\le \frac{2^{-d}E -cL^{-2}}{\|u\|_{1}} \Bigg)
	\end{equation}
	for every $L\ge 4\delta_{u}$. From here on we assume 
	\begin{align}
 		L &\ge \ell_{E} := (2^{d+1}c/E)^{1/2} \qquad \quad \text{and}  \label{L-cond}\\
		E &\le 2^{d-3} c/\delta_{u}^{2}.  \label{E-cond}
	\end{align}
	The condition on $L$ implies that $2^{-d}E -cL^{-2} \ge 2^{-d-1}E$	and the one on $E$ that 
	$\ell_{E} \ge 4\delta_{u}$. Thus, $L \ge 4\delta_{u}$ holds, too. Assuming \eqref{L-cond} 
	and \eqref{E-cond} and using \Alt, we get 
	\begin{align}
		\mathbb E_{P}\left( \tilde\nu_{\pom\!,L}^{\rm D}(E)\right) 
		& \ge  \frac{1}{L^d} \, \mathbb P_P\Bigg( \frac{1}{L^{d}} \sum_{p\in P\cap \Lambda_{L}} \omega_{p}  
				\le \frac{E}{2^{d+1}\|u\|_{1}} \Bigg) \nonumber\\
		& \ge  \frac{1}{L^d} \, \mathbb P_P\Bigg( \forall\, p\in P \cap \Lambda_{L}: \, \omega_{p}  
				\le \frac{E}{2^{d+1}\|u\|_{1}} \Bigg) \nonumber\\
		& \ge  \frac{1}{L^d} \,	\bigg[ C_{\rho} \bigg(\frac{E}{2^{d+1}\|u\|_{1}}\bigg)^{\beta} 
				\bigg]^{|P \cap \Lambda_{L}|}
	\end{align}
	Now, we replace \eqref{L-cond} and \eqref{E-cond} by the stronger hypotheses	
		\begin{align}
 		&L \in [\ell_{E}, 2\ell_{E}] \qquad\quad \text{and}  \\
		&E \le E_{u,\rho} := \min \big\{ 2^{d-3} c/\delta_{u}^{2}, C_{\rho}^{-1/\beta} 2^{d+1} \|u\|_{1} \big\}   
	\end{align}
	so that 
	\begin{equation}
		\label{lb-ul}
		\mathbb E_{P}\left( \tilde\nu_{\pom\!,L}^{\rm D}(E)\right) 
		\ge  \frac{1}{(2\ell_{E})^d} \,	\bigg[ C_{\rho} \bigg(\frac{E}{2^{d+1}\|u\|_{1}}\bigg)^{\beta} 
				\bigg]^{(2\ell_{E}/r)^{d}}
	\end{equation}
	where we also used \eqref{delone}. 

	Finally, we need to extend \eqref{lb-ul} to arbitrarily large lengths. 
	This we do in the same way as we did for the upper bound in (i), but with the superadditive 
	Dirichlet decomposition replacing \eqref{neumann-decomp}. This gives the claim.	
\end{proof}

The following large-deviation principle has been used in the previous proof. It is adapted from \cite[Lemma 6.4]{K} with some constants made explicit.

\begin{lem}\label{LargeDev} 
	Let $P\in X_{D}$ and let $(\omega_p)_{p\in P}$  non-negative i.i.d.\ random variables 
	whose single-site distribution $\mathbb{P}^{(0)} $has no atoms. 
	Given $\alpha,L>0$, let $\tilde\omega_p:=\min\{\omega_p,\alpha L^{-2}\}$ for $p\in P$. 
	Then, there exists $L_{0}>0$, which depends on $\alpha$ and $\mathbb{P}^{(0)}$, and there exists $C_{0}>0$, 
	which depends only on  $\mathbb{P}^{(0)}$, such that for every $E \in ]0, \alpha (2L_{0})^{-2}R^{-d}[$ 
	and every $L \in [L_{0}, (1/2)(ER^d/\alpha)^{-1/2}]$ the large-deviation estimate 
	\be  
		\mathbb P_P \Bigg( \frac{1}{L^d} \sum_{p\in\L_L \cap P} \tilde\omega_p \leq E \Bigg)  
		\leq \e^{-C_0\frac{L^d}{R^d}}
	\ee
	holds.
\end{lem}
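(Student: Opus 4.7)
The plan is to apply an exponential Chebyshev inequality to the sum $S_L := \sum_{p \in \Lambda_L \cap P} \tilde\omega_p$ and factor the resulting expectation using independence of the couplings. For any $t > 0$,
\[
\mathbb{P}_P\!\left(S_L \le EL^d\right) \le e^{tEL^d}\, \mathbb{E}_P\!\left(e^{-tS_L}\right) = e^{tEL^d}\!\prod_{p\in \Lambda_L\cap P}\!\mathbb{E}^{(0)}\!\left(e^{-t\tilde\omega_p}\right).
\]
The natural choice is $t := L^2/\alpha$, so that $t\tilde\omega_p \in [0,1]$ and the single-site factors remain cleanly controllable through the truncation.

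To produce a quantitative gap in the single-site moment I would use the no-atom hypothesis on $\mathbb{P}^{(0)}$: continuity of the distribution function at $0$ together with $\mathbb{P}^{(0)}(\{0\}) = 0$ allow one to pick $\delta_0 > 0$, depending only on $\mathbb{P}^{(0)}$, with $p_0 := \mathbb{P}^{(0)}(]\delta_0, \infty[) \ge 1/2$. Setting $L_0 := \sqrt{\alpha/\delta_0}$, whenever $L \ge L_0$ we have $\alpha L^{-2} \le \delta_0$, so on $\{\omega_p > \delta_0\}$ the truncation is binding and $\tilde\omega_p = \alpha L^{-2}$; splitting according to this event gives
\[
\mathbb{E}^{(0)}\!\left(e^{-t\tilde\omega_p}\right) \le (1-p_0) + p_0\, e^{-1} =: \kappa < 1,
\]
uniformly in $L \ge L_0$. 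Combined with the Delone relative-density lower bound $|\Lambda_L \cap P| \ge (L/R)^d$, obtained by tiling $\Lambda_L$ with $R$-cubes and invoking property~(ii) of Definition~\ref{def_delone}, this yields
\[
\mathbb{P}_P\!\left(S_L \le EL^d\right) \le \exp\!\left(\frac{L^d}{R^d}\!\left(\frac{EL^2R^d}{\alpha} - |\ln\kappa|\right)\!\right).
\]

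The upper restriction $L \le \tfrac{1}{2}\sqrt{\alpha/(ER^d)}$ then forces $EL^2R^d/\alpha \le 1/4$, so the inner bracket is at most $1/4 - |\ln\kappa|$, and setting $C_0 := |\ln\kappa| - 1/4$ completes the proof, provided $\delta_0$ is chosen small enough that $\kappa < e^{-1/4}$. This last condition is always achievable because the non-atomicity of $\mathbb{P}^{(0)}$ permits $p_0$ to be pushed arbitrarily close to $1$ by shrinking $\delta_0$. The main technical obstacle will be book-keeping of constants: $\kappa$, and hence $C_0$, must depend only on $\mathbb{P}^{(0)}$, while $L_0$ is allowed to depend on $\alpha$ and $\mathbb{P}^{(0)}$ only. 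My choices respect these restrictions; a benign subtlety is that the Delone counting bound $|\Lambda_L \cap P|\ge (L/R)^d$ is really $\lfloor L/R\rfloor^d \ge (L/(2R))^d$ for $L \ge R$, so one loses only a harmless multiplicative factor in $C_0$, which may require enlarging $L_0$ to exceed $R$.
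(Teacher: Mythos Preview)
Your proof is correct and follows essentially the same strategy as the paper: a Cram\'er-type exponential-Chebyshev bound combined with the Delone relative-density count $|\Lambda_L\cap P|\ge L^d/R^d$ from \eqref{delone}. The paper first renormalises the sum by $|\Lambda_L\cap P|$ and then invokes \cite[Lemma~6.4]{K} as a black box, whereas you spell out the exponential-Markov argument directly; the underlying mechanism is identical.
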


\begin{proof}
	The number of points in $\L_L \cap P$ is not necessarily equal to $L^d$, so we cannot apply 
	directly a large-deviation principle. However, \eqref{delone} implies
	\be
		\label{a:toLDP} 
		\mathbb P_P \Bigg( \frac{1}{L^d} \sum_{p \in\L_L \cap P} \tilde\omega_p \leq E \Bigg) 
		\leq \mathbb P_P \Bigg( \frac{1}{|\L_L \cap P|}\sum_{p \in\L_L \cap P} \tilde\omega_p \leq \tilde E \Bigg), 
	\ee
	where $\tilde E=E\,R^d$. The r.h.s.\ can be estimated by the large-deviation estimate \cite[Lemma 6.4]{K}, 
	which, upon inspection of the constants, using that the single-site distribution has no atoms 
	and applying \eqref{delone}, yields the claim. 
\end{proof}

\begin{proof}[Proof of Theorem \ref{lt}]
	Using the notation of Theorem~\ref{unifbd}, let $E \in ]0, \min\{E_{R}, E', E_{u,\rho}\}[$ and 
	choose some $L \ge \max\{L_{E,R}, \ell_{E}\}$. Since all the constants in Theorem~\ref{unifbd} are 
	uniform in $P \in X_{D}$ and since $\mu$ is a probability measure on $X_{D}$, 
	Lemma~\ref{ndbracklem}, Remark~\ref{ndbrack-cont} and the bounds from 
	Theorem~\ref{unifbd} imply 
	\begin{align}
		\label{uppIDS}
		c_{d}^{(1)} E^{d/2} \e^{-C_{r,\rho} E^{-d/2}\abs{\ln C_{u,\rho} E}} 
  	\le \nu_D(E)
		\leq C_{d}\e^{-C_R E^{-d/2}}
	\end{align}
	for every $E \in ]0, \min\{E_{R}, E', E_{u,\rho}\}[$. The claim now follows. 
\end{proof}

\subsection{Application to dynamical localization}\label{s:dynloc}

In this subsection we describe how to use the bounds from Theorem~\ref{unifbd} as an ingredient of the bootstrap multi-scale analysis (MSA) to obtain dynamical localization at the bottom of the spectrum. For this purpose it is important that these bounds are uniform on the hull $X_D$. 
We emphasise that we cannot work with the infinite-volume integrated density of states to establish the initial estimate, because self-averaging of the finite-volume integrated densities of states in the macroscopic limit is only known for almost every point set in the Delone hull, and thus not necessarily for $D$ itself.
We also obtain a lower bound on the size of the region of dynamical localization in terms of the radius $R$ of relative denseness of the Delone set $D$.

%
%

\begin{thm}\label{t:dynlocas}
Let $D$ be a Delone set and assume \Az with $B=0$, \Ao and \Aloc. Then, there exists an energy $E_{LT}(R)$ such that the operator $H_{P^{\omega}}$ exhibits dynamical localization in $[0,E_{LT}(R)]$ for all $P\in X_D$ and $\mathbb{P}_{P}$-a.a.\ $\omega\in \Omega_P$. Moreover, there exist a constant $R_0>1$ and constants $C_{1},C_{2} >0$ such that for every $R\geq R_0$ we have
  \be \label{f:ELT}
 E_{LT}(R)=C_1 R^{-\left(d+2+\frac{8}{3d}\right)} (\log C_2 R)^{-\frac{2}{d}}.\ee
The constants $C_1$ and $C_2$ depend on the parameters of the model, but not on the radius $R$ of relative denseness of $D$.
\end{thm}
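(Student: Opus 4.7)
The plan is to establish dynamical localization by verifying the hypotheses of the bootstrap multi-scale analysis (MSA) of \cite{GK1,RM}, \emph{pointwise} in every $P\in X_D$. The Wegner estimate required for the MSA is provided under \Aloc by \cite{RMV,Kl}, with constants uniform in $P\in X_D$. The missing ingredient is the initial length-scale estimate (ILSE), which I would extract from the Lifshitz-tail bounds of Theorem~\ref{unifbd}(i).

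First, for the ILSE I would combine a Chebyshev-type inequality with Theorem~\ref{unifbd}(i) to obtain, for every $P \in X_D$, every $E\in\,]0,\min\{E_R,E'\}[$, and every $L\ge L_{E,R}$,
\[
\P_P\bigl(E_1(H^{\rm N}_{\pom\!,L}) \le E + L^{-1}\bigr) \le L^d \, \E_P\bigl(\tilde\nu^{\rm N}_{\pom\!,L}(E+L^{-1})\bigr) \le C_d L^d \e^{-C_R (E+L^{-1})^{-d/2}}.
\]
When $E_1(H^{\rm N}_{\pom\!,L}) > E + L^{-1}$, the operator $H^{\rm N}_{\pom\!,L}$ has no spectrum in $[0,E+L^{-1}]$; a standard Combes--Thomas estimate then yields exponential decay of the Neumann resolvent $(H^{\rm N}_{\pom\!,L}-E)^{-1}$, which, combined with a geometric resolvent identity, transfers to the form of finite-volume resolvent decay needed to seed the MSA as in \cite[Sect.\ 3]{GK1} and \cite{RM}. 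Crucially, this version of the ILSE holds uniformly over $P\in X_D$, bypassing the fact that $\nu_D$ is realized as a macroscopic limit of $\tilde\nu^{\rm N}_{\pom,L}$ only for $\mu$-a.e.\ $P$.

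Second, to feed this into the bootstrap MSA one needs the ILSE probability to beat a polynomial rate $L_0^{-\xi(d)}$ at some initial scale $L_0$. I would take $L_0 = L_0(E,R)$ of order $\max\{L_{E,R}, E^{-1/2}\}$; the superexponential factor $\e^{-C_R E^{-d/2}}$ then dominates the polynomial $L_0^d$ as soon as $C_R E^{-d/2} \gtrsim \log L_0$. Combined with the constraint $E<E_R\sim R^{-d}$ from Theorem~\ref{unifbd}(i), the relation $C_R \sim R^{-d-d^2/2}$ from \eqref{CR}, and the explicit $R$-dependence of the Wegner constants of \cite{Kl,RMV}, a careful optimization over $L_0$ yields the explicit threshold $E_{LT}(R) = C_1 R^{-(d+2+8/(3d))}(\log C_2 R)^{-2/d}$ for all $R\geq R_0$.

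Third, I would invoke the bootstrap MSA of \cite{GK1,RM} in its form adapted to non-homogeneous systems: the uniform-in-$P$ Wegner estimate together with the uniform-in-$P$ ILSE produce dynamical localization in $[0,E_{LT}(R)]$ for every $P\in X_D$ and $\P_P$-a.e.\ $\omega\in \Omega_P$. The hard part will be precisely the insistence on uniformity in $P$: since $\nu_D$ is realized pointwise only $\mu$-almost surely, one cannot pass to the infinite-volume limit before running the MSA. Theorem~\ref{unifbd}(i) circumvents this because its Lifshitz bound on the \emph{averaged} finite-volume IDS is pointwise in $P\in X_D$, including on exceptional sets where the macroscopic limit may fail; this is what permits the conclusion at the starting Delone set $D$ itself. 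A secondary obstacle is the quantitative bookkeeping of all $R$-dependences through the bootstrap iterations, which is what ultimately generates the somewhat involved exponent $d+2+8/(3d)$ and the logarithmic correction.
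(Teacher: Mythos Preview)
Your approach is essentially the paper's: use Theorem~\ref{unifbd}(i) together with a Chebyshev bound to show that a spectral-gap event has high probability uniformly in $P\in X_D$, then Combes--Thomas gives the resolvent decay, and the uniform Wegner estimate from \cite{RMV,Kl} allows the MSA to run. Two points deserve sharpening.

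First, the paper does not feed the ILSE directly into the bootstrap MSA of \cite{GK1,RM}; it invokes the explicit \emph{finite-volume criterion} of \cite[Thm.~2.5]{GK2} (Theorem~\ref{t:finvolcrit} here) with $s=2d$, $b=1$. This criterion produces a concrete threshold $\mathcal{L}$ for the initial scale in terms of the Wegner constant $Q_I$, and it is this threshold---not the subsequent iterations---that determines the final exponent.

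Second, and relatedly, your attribution of the exponent $d+2+\tfrac{8}{3d}$ to ``bookkeeping through the bootstrap iterations'' is incorrect. The iterations do not degrade the interval; the exponent is fixed by the size of the initial scale $L_R$ dictated by $\mathcal{L}$. The paper traces $Q_I=C_W\gamma_R^{-2c}$ back to the unique continuation constant $\gamma_R\sim R^{-cR^{4/3}}$ from \cite[Cor.~2.2]{RMV}, which forces $L_R\sim \tilde{C} R^{\tilde{\alpha}R^{4/3}}$. Plugging this into $E_*(L)=\tfrac12\bigl(C_R/((p+2)d\ln L)\bigr)^{2/d}$ with $C_R\sim R^{-d-d^2/2}$ gives $E_*(L_R)\sim R^{-(2/d)(d+d^2/2+4/3)}(\log R)^{-2/d}$, whence the $\tfrac{8}{3d}$ term is precisely $\tfrac{2}{d}\cdot\tfrac{4}{3}$ from the UCP scaling. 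Your sketch mentions the Wegner $R$-dependence but does not isolate this mechanism; without it you will not recover the claimed formula.

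A minor point: the paper defines the spectral-gap event directly for the Dirichlet restriction $H^{\rm D}_{\pom,L,y}$ and uses $H^{\rm N}\le H^{\rm D}$ only to bound its probability via $\tilde\nu^{\rm N}$; no geometric resolvent transfer from Neumann is needed.
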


The proof of this theorem is given below Theorem~\ref{t:finvolcrit}.
For large values of the parameter $R$, the use of Theorem~\ref{unifbd} gives a better lower bound for the interval of dynamical localization than the previous approach based on the space-averaging approximation from \cite{G, BoK, RM13}.
We can see this by comparing Theorem \ref{t:dynlocas} to

\begin{thm}\label{t:dynloc} 
Let $D$ be a Delone set and assume \Az, with $B=0$, \Ao and \Aloc. Then, there exists an energy $E_{SA}(R)$ such that the operator $H_{P^{\omega}}$ exhibits dynamical localization in $[0,E_{SA}(R)]$ for all $P\in X_D$ and $\mathbb{P}_{P}$-a.a.\ $\omega\in \Omega_P$. Moreover, there exist a constant $R_0'>1$ and constants $C_{1}', C_{2}' >0$ such that for every $R\geq R_0'$ we have
\be \label{f:ESA}
E_{SA}(R)=C_1' R^{-(4d+4)}(\log C_2' R)^{-\frac{2}{d}}.\ee
The constants $C_1'$ and $C_2'$ depend on the parameters of the model, but not on the radius $R$ of relative denseness of $D$.
\end{thm}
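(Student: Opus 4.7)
The plan is to invoke the bootstrap multi-scale analysis (MSA) of \cite{GK1}, in the version for non-homogeneous systems developed in \cite{RM}, and to feed it with an initial length-scale estimate (ILSE) obtained through the space-averaging approximation of \cite{BoK,G,RM13}. The MSA reduces strong dynamical localization on $[0,E_{SA}(R)]$ to two ingredients, both required to hold uniformly in $P\in X_D$: a Wegner estimate on this energy interval, and an ILSE at an initial scale $L_0$ with probability at least $1-L_0^{-q}$ for a suitable exponent $q>0$. Once both are in place, the bootstrap of \cite{GK1,RM} yields strong Hilbert--Schmidt dynamical localization on $[0,E_{SA}(R)]$ for every $P\in X_D$ and $\P_P$-a.e.\ $\omega\in\Omega_P$.

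The Wegner estimate is supplied by \cite{Kl,RMV}: under \Az with $B=0$, \Ao, \Aloc, it holds on any bounded energy interval with a constant which is polynomial in $R$, depends on $\|\rho\|_\infty$, and is uniform in $P\in X_D$. It is this input that forces the continuity of $\nu_D$ used above.

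For the ILSE, I would adapt the space-averaging construction of \cite{BoK,G,RM13} to an arbitrary $P$ in the hull $X_D$. Given a box $\Lambda_L(y)$ with $L\gtrsim R$, one replaces the random potential $V_{P^\omega}$ on the box by a smoothed approximation $\overline V_{P^\omega}$, obtained by averaging the couplings $(\omega_p)_{p\in P\cap\Lambda_L(y)}$ on subcubes of side proportional to $R$. Relative denseness of $P$ guarantees at least one impurity per such subcube, so $\overline V_{P^\omega}$ is bounded below by a positive multiple of the corresponding local averages of the $\omega_p$'s. Applying a Bourgain--Kenig type quantitative unique continuation principle to the finite-volume operator $\overline H^{\rm D}_{P^\omega,L,y}$ with smoothed potential, and transferring the resulting resolvent bound back to $H^{\rm D}_{P^\omega,L,y}$ via a perturbative Combes--Thomas argument, one deduces an off-diagonal exponential decay estimate for the finite-volume resolvent on $[0,E_{SA}(R)]$. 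A large-deviation bound on the local averages of the $\omega_p$'s then supplies the required high probability at the initial scale $L_0$.

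The quantitative form $E_{SA}(R)=C_1'R^{-(4d+4)}(\log C_2'R)^{-2/d}$ emerges by tracking the $R$-dependence through three steps: the quantitative unique continuation principle absorbs powers of $R$ (of order $R^{d+2}$ in the decay rate); the large-deviation estimate contributes a factor $(L/R)^d$ in the exponent of the probability; and the MSA compatibility condition $L_0\gtrsim E^{-1/2}$, coupled with the averaging scale $\sim R$ and the Wegner constant, fixes the algebraic balance that yields the exponent $-(4d+4)$ after optimisation. The logarithmic correction reflects the polynomial loss between successive scales in the MSA bootstrap. The main obstacle is uniformity across the whole hull $X_D$: every constant must depend only on $r$, $R$, $u$ and $\rho$, never on the particular configuration $P$. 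This is secured by the insensitivity of the Wegner estimate, the quantitative unique continuation principle, the large deviations, and the cardinality bounds $(L/R)^d\leq|P\cap\Lambda_L|\leq(L/r)^d$ to the specific choice of $P$ within the $(r,R)$-Delone class, so that all ingredients pass uniformly to every $P\in X_D$.
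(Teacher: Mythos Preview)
Your overall strategy---space-averaging for the initial estimate, then the bootstrap MSA via the finite-volume criterion---is exactly what the paper does. But two of the mechanisms you describe are incorrect, and as stated your argument would not produce the exponent $-(4d+4)$.

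First, the Wegner estimate from \cite{Kl,RMV} is \emph{not} polynomial in $R$: its constant is $C_W\gamma_R^{-2c}$ with $\gamma_R\sim R^{-c_{d,u,r,w}(R^{4/3}+c_d)}$, hence super-exponential in $R$. Feeding this into the finite-volume criterion of \cite{GK2} forces $L\gtrsim R^{cR^{4/3}}$, and after the $(\log L)^{-2/d}$ factor this degrades the final energy by an extra $R^{-8/(3d)}$---precisely what distinguishes Theorem~\ref{t:dynlocas} from Theorem~\ref{t:dynloc}. To obtain \eqref{f:ESA} the paper instead uses the space-averaging Wegner estimate of \cite{RM13}, whose constant is genuinely polynomial, $Q_I\sim R^{2d}$, valid on $[0,E_W)$ with $E_W\sim R^{-2(d+1)}$.

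Second, no unique continuation principle enters the space-averaging route for the ILSE, and the logarithm does not come from MSA losses. The actual computation is: average $V_{P^\omega}$ over translations in a cube of side $2RK$ (with $K$ a free parameter, not a scale $\sim R$); relative denseness plus large deviations on sub-boxes of side $K/3$ give $\bar V_{P^\omega,L}\geq c_{u,d,\rho}\,R^{-2d}\,\chi_{\Lambda_L}$ with probability at least $1-(L/r)^d\e^{-A_\rho K^d/R^d}$; the Bourgain--Kenig averaging lemma from \cite{BoK,G} (not a UCP, and not a Combes--Thomas transfer) then converts this directly into the spectral gap $H^{\rm D}_{P^\omega,L,y}\geq C_{u,d,\rho}\,R^{-(4d+2)}K^{-2}$. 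Choosing $K=R\big((p+2)d\,A_\rho^{-1}\log L\big)^{1/d}$ makes the probability at least $1-L^{-pd}$ and yields $E_*(L)=C\,R^{-(4d+4)}(\log L)^{-2/d}$; the logarithm thus arises from balancing $K$ against the large-deviation exponent. With the polynomial Wegner constant from \cite{RM13}, the finite-volume criterion only forces $L\sim R^\theta$, whence $\log L\sim\log R$ and \eqref{f:ESA} follows. Combes--Thomas is used only afterwards, to turn the spectral gap event into the resolvent decay required in \eqref{ILSEprob}.
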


\begin{rem} \begin{nummer} 
\item The proof of this theorem is given at the end of this section.
\item There exists $R_*:=\max\{R_0,R_0'\}>0$ depending on $C_1,C_2,C_1',C_2'$, such that for $R\geq R_*$, Theorem \ref{t:dynlocas} gives a better lower bound for the region of dynamical localization than Theorem \ref{t:dynloc}.
\item For an analogous discrete model, \cite{EKl} obtained a spectral gap at the bottom of the spectrum of order $R^{-2d}$. In the continuous setting, such an estimate would give a lower bound for the region of dynamical localization of order $R^{-(2d+\frac{8}{3d})}\left(\log C R\right)^{-\frac{2}{d}}$ in Theorem \ref{t:dynloc}.
\end{nummer}
\end{rem}

Under our assumptions on the random potential, we can apply the MSA from \cite{GK1} to prove dynamical localization. To do so, it is enough to verify the two main ingredients concerning the finite-volume operator $H_{D^\omega\!,L,y}$: the \emph{Wegner estimate} and the \emph{initial-length-scale estimate}. The MSA method generalizes to Delone--Anderson models by requiring these estimates to hold for $H_{D^\omega\!,L,y}$ uniformly  with respect to $y\in\Rd$ \cite[Theorem 2.3]{RM}. Uniform Wegner estimates have been obtained for our model both in the continuous \cite{RMV,Kl} and in the discrete setting \cite{EKl,RM13}.


In order to obtain a lower bound on the region of dynamical localization, we will use a finite-volume criterion from \cite{GK2}  that gives a sufficient condition on the length scale $L$ in order to start the MSA.
We recall that our model satisfies the usual structural conditions required to apply the MSA uniformly with respect to the center of the box (see \cite{RM, GK1}): the Simon-Lieb inequality (SLI), the eigenfunction decay inequality (EDI),
the strong generalized eigenfunction expansion (SGEE) and the average number of eigenvalues (NE). 
The constant $\gamma_I$ in (SLI) is uniform on subsets $I \subseteq [0,1]$ and on the hull $X_{D}$.
 
 Note that the condition of independence of events at a distance $\varrho>0$ (IAD) is ensured by having single-site potentials with compact support. For a fixed energy $E$, we denote by $\eta_I$ the maximal length of an interval $I$ containing $E$ for which the Wegner estimate holds with a constant $Q_I$.  We write $\Gamma_{L}:=\chi_{\bar{\Lambda}_{L-1}\setminus\Lambda_{L-3}}$ and $\chi_{L/3}:=\chi_{\Lambda_{L/3}}$.

\begin{thm}[\protect{\cite[Theorem 2.5]{GK2}}]\label{t:finvolcrit}
Let $(H_\omega)_{\omega\in\Omega}$ be a random operator such that Assumptions SLI, EDI, IAD, NE, SGEE and a Wegner estimate with volume dependence $L^{bd}$, $b\ge 1$, hold in an open interval I.  Given $s>bd$, we set
\be\label{f:finvol} \mathcal L := \max \left\{ 3\varrho, 42, 3\left(\frac{107^d}{2}\right)^{\frac{2}{s-bd}}, \frac{1}{37}(16\cdot 60^dQ_I)^{\frac{2}{s-bd}},\eta_I^{-\frac{1}{s}} \right\}. \ee
Suppose that for some $L\geq \mathcal L$, $L\in 6\mathbb N$, and some $E_0\in \Sigma\cap I$, where $\Sigma$ is the non-random spectrum of $(H_\omega)_{\omega\in\Omega}$, we have the upper bound
\be\label{ILSEprob}\Prob{ 90^d\gamma_I^2(37L)^{s} \norm{\Gamma_{L} (H_{\om,L}-E_{0})^{-1}\chi_{L/3}} <1}\geq 1-\frac{2}{344^d}. \ee
Then, $E_0$ is in the region of dynamical localization.
\end{thm}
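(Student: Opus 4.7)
The plan is to interpret the hypothesis \eqref{ILSEprob} as a single-scale ``good-box'' estimate and feed it into the bootstrap multi-scale analysis of \cite{GK1}. Declare a box $\Lambda_{L}$ to be $(E_{0},L)$-\emph{good} for a realization $\omega$ if $90^{d}\gamma_{I}^{2}(37L)^{s}\norm{\Gamma_{L}(H_{\om,L}-E_{0})^{-1}\chi_{L/3}}<1$, so that by \eqref{ILSEprob} each box is good with probability at least $1-2/344^{d}$; the same bound transfers to every translated box $\Lambda_{L}(y)$ via the translation covariance required of the Delone--Anderson MSA \cite{RM}. The goal is to show that this single-scale information seeds a polynomial-probability initial length-scale estimate at all scales $L_{k}\to\infty$, which is the (ILS) hypothesis of the bootstrap MSA from \cite[Thm.\ 3.4]{GK1}.

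The heart of the argument is a one-step induction from $L_{k}$ to $L_{k+1}=[L_{k}^{\alpha}]$ for a scaling exponent $\alpha>1$. Given the box $\Lambda_{L_{k+1}}(y)$, one covers its exterior annulus by disjoint translates of subboxes of side $L_{k}$. The Simon--Lieb inequality (SLI) then chains the resolvent across these subboxes: the norm $\norm{\Gamma_{L_{k+1}}(H_{\om,L_{k+1},y}-E_{0})^{-1}\chi_{L_{k+1}/3}}$ is bounded by a product of resolvent norms on good subboxes, times a single energy denominator measuring the distance of $E_{0}$ to $\spec(H_{\om,L_{k+1},y})$. The dimensional constants $107^{d}$, $90^{d}$, $60^{d}$ and $344^{d}$ appearing in $\mathcal L$ arise respectively from (a) counting subboxes in the cover, (b) counting the minimum number of consecutive good subboxes needed to reach $\chi_{L_{k+1}/3}$, and (c) compensating the Wegner loss when one shrinks the subinterval of $I$ around $E_{0}$. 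The hypothesis $s>bd$ makes the polynomial gain $(37L_{k+1})^{-s}$ at the next scale strictly stronger than the Wegner loss $L_{k+1}^{bd}$.

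The main technical obstacle is that the induction must close starting from a \emph{single} seed scale with a probability threshold bounded away from $1$ only by the fixed constant $1-2/344^{d}$, rather than by a polynomially shrinking probability as in the classical MSA. The resolution is to show that the induction step is \emph{subharmonic} in the bad probability: if $P_{k}$ denotes the probability that $\Lambda_{L_{k}}(y)$ is not $(E_{0},L_{k})$-good, then a careful accounting of the SLI chain (which fails only when two disjoint families of subboxes are simultaneously bad), combined with the Wegner estimate, yields an inequality of the schematic form
\be
	P_{k+1}\;\leq\; c_{d}\, L_{k+1}^{2d}\, P_{k}^{2}\;+\;\tilde c_{d}\, Q_{I}\, \eta_{I}^{-1}\, L_{k+1}^{bd-s}.
\ee
The explicit lower bounds on $L$ collected in $\mathcal L$ are exactly those required so that $P_{0}\leq 2/344^{d}$ forces $P_{k}\to 0$ at a doubly-exponential rate: in particular, the thresholds $3(107^{d}/2)^{2/(s-bd)}$, $\tfrac{1}{37}(16\cdot 60^{d}Q_{I})^{2/(s-bd)}$ and $\eta_{I}^{-1/s}$ are calibrated to guarantee simultaneously that $c_{d}L_{0}^{2d}P_{0}<1$ (so that the quadratic term contracts) and that the Wegner remainder $L_{k+1}^{bd-s}$ decays faster than any polynomial in $L_{k}$. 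Iterating gives (ILS) at every scale $L_{k}$.

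With (ILS) thus established, the three remaining rounds of the bootstrap MSA from \cite[Thm.\ 3.4]{GK1} apply verbatim, using the structural hypotheses SLI, EDI, IAD, NE, SGEE and the Wegner estimate, to upgrade the polynomial resolvent decay into the sub-exponential decay with super-polynomially small failure probability that defines the region of dynamical localization of \cite{GK1}. Therefore $E_{0}$ lies in this region, which completes the proof.
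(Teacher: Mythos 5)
This theorem is not proved in the paper: it is imported verbatim from Germinet--Klein, \emph{Explicit finite volume criteria for localization in continuous random media and applications}, as \cite[Theorem 2.5]{GK2}, and the paper simply cites it. So there is no ``paper's proof'' here to compare against. What can be said is whether your blind proposal would constitute a correct proof of the cited result.

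Your overall picture is right: \eqref{ILSEprob} is the single-scale seed estimate that starts the bootstrap multi-scale analysis of \cite{GK1}, and the role of the explicit lengths in $\mathcal L$ is to guarantee that the seed beats the combinatorial and Wegner losses at every induction step. But the recursion you write,
\be
P_{k+1}\leq c_d\,L_{k+1}^{2d}\,P_k^2+\tilde c_d\,Q_I\,\eta_I^{-1}\,L_{k+1}^{bd-s},
\ee
cannot contract under the hypotheses, and the claim that the thresholds in $\mathcal L$ ensure $c_d L_0^{2d}P_0<1$ is false. The hypothesis allows \emph{any} $L\geq\mathcal L$, so $L_0$ is unbounded, while $P_0$ is merely $\leq 2/344^d$; the product $L_0^{2d}P_0$ is then unbounded. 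The reason the first round of \cite{GK1} (Spencer's argument) works with a seed probability bounded only by a fixed dimension-dependent threshold is that the scales in that round grow \emph{multiplicatively}, $L_{k+1}=Y\,L_k$ with $Y$ a fixed integer, not polynomially as $L_{k+1}=[L_k^\alpha]$. The combinatorial prefactor in the bad-probability recursion is then the \emph{fixed} number of pairs of disjoint subboxes, roughly $Y^{2d}$, not $L_{k+1}^{2d}$, and the threshold $344^d$ is calibrated precisely so that $P_0\leq 2/344^d$ dominates $Y^{2d}$. The polynomial scale growth $L_{k+1}=[L_k^\alpha]$ only enters in later rounds of the bootstrap, after the bad probability has already been driven down to a polynomial rate.

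Two further slips: the term $Q_I\,\eta_I^{-1}L_{k+1}^{bd-s}$ mislocates the role of $\eta_I$. The lower bound $L\geq\eta_I^{-1/s}$ in $\mathcal L$ is there to guarantee that the energy window $\sim(37L)^{-s}$ at scale $L$ fits inside an interval of length $\leq\eta_I$ on which the Wegner estimate with constant $Q_I$ is valid; $\eta_I$ does not appear multiplicatively in the Wegner loss. And the appeal to ``translation covariance required of the Delone--Anderson MSA'' imports an application-specific hypothesis into a theorem that, as stated and as proved in \cite{GK2}, makes no reference to Delone sets; covariance is part of the assumptions SLI, EDI, IAD, NE, SGEE. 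So the spirit of your argument is correct, but the central technical step, the form of the recursion and the scale-growth law in the first round, does not match the actual \cite{GK1,GK2} machinery and as written would not close.
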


\begin{proof}[Proof of Theorem \ref{t:dynlocas}]
For $L \in 6 \N$ we define
\be\label{f:EILSE} E_*(L):=\frac{1}{2} \left( \frac{C_R}{(p+2)d\ln L} \right)^{\frac{2}{d}}, \ee
where $C_R:=C^{(3)} R^{-(d+d^2/2)}$ was introduced in \eqref{CR}. 
We use the Wegner estimate from \cite[Theorem 1.5]{Kl} applied to Delone sets, see \cite[Remark 1.6]{Kl}: given $E_0>0$, there exist positive constants $C_W=C_W(d,u,w,\rho,E_0)$, $c=c(d)$ and $\gamma_R$ to be made explicit later,
such that for any closed interval $I\subset ]-\infty,E_0]$ with $\abs{I}\leq 2\gamma_R$ and $L\geq 72\sqrt{d}+\delta_u$, we have
\be
	\label{eq:WE} 
	\E_P\left( \tr \chi_I(H_{\pom\!,L,y}^{\rm D}) \right)
	\le \E_P\left( \tr \chi_I(H_{\pom\!,L+1,y}^{\rm D}) \right)
	\leq C_W \gamma_R^{-2c} \abs{I}L^d . 
\ee
Here, the intermediate step with a box of side length $L+1$ is included because \cite[Theorem 1.5]{Kl} requires $L$ to be odd. 

We will apply Theorem \ref{t:finvolcrit}, with $s=2d$ and $b=1$, to prove that $[0,E_*(L)]$ belongs to the region of complete localization. It remains to determine $L$ such that \eqref{f:finvol} and \eqref{ILSEprob} hold. Concerning the initial estimate \eqref{ILSEprob}, we consider the event
\be\label{event} \mathcal E_{P,L}:=\{\omega\in\Omega_P\,:\, H_{\pom\!,L,y}^{\rm D}\geq 2E_*(L)\} \ee
of having a spectral gap. Then, the Combes--Thomas estimate \cite[Theorem 2.4.1]{S} implies that there exist positive constants $C_1=C_1(w,u,r)$ and $C_2=C_2(w,u,r)$ such that for any $E\in [0,E_*(L)]\subset [0,1]$,
\be \norm{\Gamma_{L}(H_{\pom,L,y}^{\rm D} - E)^{-1}\chi_{L/3}}\leq \frac{C_1}{E_*(L)}\e^{-C_2 \sqrt{E_*(L)}L}.   \ee
Therefore, given $\alpha>1+\frac{1}{d}$, there exist positive constants $C$ and $R_1$ depending on the parameters $\alpha,d,p,C_1,C_2,C^{(3)},\gamma_I$ such that the event on the l.h.s. of \eqref{ILSEprob} holds for every $\omega\in\mathcal E_{P,L}$ provided
\be\label{a1} L\geq C_{}R^{\,\alpha(\frac{d}{2}+1)}  \ee
and the radius of relative denseness satisfies $R>R_1$. As for the probability of this event, using the Chebyshev--Markov inequality and the fact that the Neumann restriction $H_{\pom\!,L}^{\rm N}$ is dominated by the Dirichlet restriction, we have that
\be \P_P\left( \mathcal E_{P,L}^c\right) \leq  L^d\mathbb E_{P}\left(\tilde\nu_{\pom\!,L}^{\rm N}(E)\right). \ee
In order to use Theorem~\ref{unifbd} to bound the term in the r.h.s., we impose 
\be E_*(L)\leq \frac{C^{(1)}}{R^d} \quad \mbox{and} \quad L> \frac{C^{(2)}}{R^{d/2}E_*(L)^{1/2}}. \ee
The first condition is fullfilled if \eqref{a1} holds with $R>R_3:=\max\{R_1,R_2\}$, where $R_2=R_2(p,c, C^{(1)},C^{(3)})$.
As for the second condition, it is enough to have
\be\label{a1prime} L\geq C'R^{\alpha'}, \ee
where $\alpha'>2d$ and $C'=C'(d,p,C^{(2)},C^{(3)})$. We can now use Theorem~\ref{unifbd} and conclude
\begin{align}\label{probailse}
\P_P\left( \mathcal E_{P,L}^c\right) &\leq C_d L^d \e^{-C_R(2E_*(L))^{-d/2}}\nonumber\\
& \leq C_dL^{-d(p+1)}.
\end{align}
Therefore, the initial estimate \eqref{ILSEprob} holds provided 
\be\label{a2prime} L\geq C_0R^{\alpha_0},\ee
with  $R>R_3$, where $C_0:=\max\{C,C'\}$, $\alpha_0:=\max\{\alpha,\alpha'\}$ and $R_3$ depends on $\alpha_0,d,p,C_1,C_2, C^{(1)},C^{(2)}, C^{(3)},\gamma_I$. 

It remains to verify \eqref{f:finvol} for $L$. For this we will determine the constants $Q_I^{2/d}$ and $\eta_I^{-1/2d}$ in terms of $R$.
From the Wegner estimate in \eqref{eq:WE} we see that the constant $Q_I$ is of the form $C_W \gamma_R^{-2c}$, where $C_W=C_W(d,u,w,\rho)$, $c=c(d)$ and $\gamma_R$ is a constant that depends on $R$.
This is the constant appearing in the positivity estimate \cite[Eq. 1.9]{Kl} and comes from the unique continuation principle \cite[Theorem 2.1]{Kl}. The dependence on $R$ of the unique continuation principle was made explicit in 
\cite[Corollary 2.2]{RMV}, which gives
\be \gamma_R =c_{u,d}R^{-c_{d,u,r,w}(R^{4/3}+c_d)}. \ee
Then, we have $Q_I^{2/d}= C'_W R^{c_{d,u,r,w}(R^{4/3}+c_d)}$, $\eta_I^{-1/2d}=c'_{u,d}R^{c'_{d,u,r,w}(R^{4/3}+c_d)}$, where $C'_W=\tilde C'_W(d,u,w,\rho)$.
 Therefore, recalling \eqref{a2prime} and \eqref{f:finvol}, it is enough to have
\be\label{a2} L\geq \tilde C R^{\tilde \alpha R^{4/3}}=:L_R,   \ee
with $R>R_0$, where $\tilde\alpha=\tilde \alpha(d,u,r,w,\alpha_0)>1$, and $\tilde C$ and $R_{0}$ depend on the parameters of the model except for $R$.
By Theorem \ref{t:finvolcrit} we conclude that $[0,E_{LT}(R)]$ is in the region of dynamical localization, where
\be\label{Elt}  E_{LT}(R):=E_*(L_R)=C_1 R^{-\left(d+2+\frac{8}{3d}\right)}(\log C_2\,R)^{-\frac{2}{d}}, \ee
with $R\geq R_0$, where $C_2=\tilde C$ and the constant $C_1$ depends on $p,d,C^{(3)},\tilde\alpha$.

Note that the bounds only depend on the parameters $r,R$ and are uniform on $X_D$, therefore, they apply to any $P\in X_D$ and, in particular to $D$.
\end{proof}

\begin{proof}[Proof of Theorem \ref{t:dynloc}]
The initial length scale estimate has been obtained previously for Delone-Anderson models by \cite{G} using a space-averaging argument found in \cite{BoK}, which can also be used to obtain the Wegner estimate, see \cite{RM13}.
One can use the output from \cite{G, RM13} to obtain an analogue of \eqref{f:EILSE} in the proof of Theorem \ref{t:dynlocas}.
Taking a more precise account of the role of the parameter $R$ in the large-deviation estimate used in \cite{G, RM13}, one needs to rewrite \cite[Eq.\ (3.10)]{G} and \cite[Eq.\ (2.20)]{RM13}.
First, the lower bound for the space-averaged version of the finite-volume random potential in \cite[Eq.\ (3.8)]{G}
needs to be written as
\begin{align} \bar V_{\dom\!,L} & :=\frac{1}{(2RK)^d}\int_{\L_{2RK}} V_{\dom\!,L}(\cdot-a)\,\d a \nonumber\\
&\geq \frac{c_{u,d}}{(2R)^d(3R)^d}\left(\min_{\xi_\in \Lambda_L\cap D}\frac{(3R)^d}{K^d}\sum_{p\in \Lambda_{\frac{K}{3}}(\xi)\cap D}\omega_p\right) \chi_{\Lambda_L},
\end{align}
where $K$ is a large constant to be chose later. Then one can use \eqref{delone} as in \eqref{a:toLDP} and a large-deviation estimate to see that, for arbitrary fixed $\xi$, \cite[Eq.\ (3.10)]{G} becomes
\be \mathbb P_P \left( \frac{(3R)^d}{K^d} \sum_{p \in \Lambda_{\frac{K}{3}}(\xi)\cap D} \omega_p \leq \frac{\bar{\rho}}{2} \right) \leq \e^{-A_\rho \frac{K^d}{R^d}},
\ee
where now the constant $A_\rho>0$ depends only on the probability distribution of the random variables, and $\bar\rho=\mathbb E(\omega_0)$. This yields for \cite[Eq.\ (3.11)]{G}:
\be  \mathbb P_P \left( \bar V_{\dom\!,L} >c_{u,d,\rho}'R^{-2d} \chi_{\Lambda_L}\right)\geq 1-\frac{L^d}{r^d}\e^{-A_\rho \frac{K^d}{R^d}}. \ee
Following the rest of the proof of \cite[Prop.\ 3.1]{G} one obtains the existence of positive constants $C_{u,d}$ and $K_{u,d}$ such that $H_{\pom,L,y}^{\rm D} \geq C_{u,d,\rho} R^{-(4d+2)}K^{-2}$
 with a probability larger than $1-r^{-d}L^d \e^{-A_\rho \frac{K^d}{R^d}}$ for $K>K_{u,d}$. Choosing
\be\nonumber K=R\left( \frac{(p+2)d\log L}{A_\rho}\right)^{\frac{1}{d}}, \ee
and taking $R>K_{u,d}$, implies that the event $\mathcal E_{P,L}$ defined in \eqref{event}, with
\be E_*(L)=C_{u,d,\rho,p}R^{-(4d+4)}(\log L)^{-\frac{2}{d}},\ee
has  probability $\P_P(\mathcal E_{P,L})> 1-L^{-pd}$, for $L>L_{r,d}$, where $L_{r,d}$ is a constant depending on $r$ and $d$. 

The lower bound for the region of dynamical localization can be obtained by applying Theorem \ref{t:finvolcrit} with $s=2d$ and $b=1$, in the same way as in the proof of Theorem \ref{t:dynlocas}.
We will take the Wegner estimate from \cite[Theorem 2.1]{RM13} obtained by the space-averaging approach, see \cite[Theorem 4.2.1]{RMthesis} for the continuous setting.  This states that there exist positive constants
$C_{u,d}$ and $C_W=C_W(u,d,\rho)$ such that 
\be\label{eq:WE2} \E_P\left( \tr \chi_I(H_{\pom\!,L,y}^{\rm D}) \right)\leq C_W R^{2d} \abs{I}L^d, \ee
for any $I\subset [0,E_W[$, where $E_W:=C_{u,d} R^{-2(d+1)}$.
Concerning condition \eqref{ILSEprob}, it is enough to have, for a given   $\theta>1+1/d$,
\be\label{L(R)} L\geq C_{\theta,d,u,w,\rho}R^{2\theta(d+1)},\ee
with $R>R_1'$, where $R_1'=R_1'(u,d,p,\rho,w,\theta)$. For the remaining condition \eqref{f:finvol},
note that in order to apply the Wegner estimate \eqref{eq:WE2}, we need to have  $[0,E_*(L)]\subset [0,E_W)$. For this it is enough to take $L$ as in \eqref{L(R)} with $R\geq R_2'$, where $R_2'=R_2'(u,d,\rho,p,r)$.
Note that from \eqref{eq:WE2} we have $Q_I^{2/d}=c_{u,d,\rho} R^{4}$ and $\eta_I^{-1/2d}=c_{u,d}R^{(1+1/d)}$.
Then, to have \eqref{f:finvol}, it is enough to ask $L\geq C_{\theta,d,u,w,\rho}'R^{\theta}$, with $\theta>1+1/d$, $R>R'_0:=\max\{ R_1',R_2'\}$, and we can proceed as in Eq. \eqref{a2},
\eqref{Elt}.
\end{proof}


\appendix

\section{An example of a non uniquely ergodic Delone set}
\label{sec:example}

The purpose of this appendix is to demonstrate that the assumption of \emph{unique} ergodicity in Corollary~\ref{eids}(ii) cannot simply be dropped.


Consider a sequence of (open) cubes centered about the origin, $\{\L_{L_k}\}_{k\in\N}$, with $L_{k+1}=L_k^\alpha$, $\alpha>1$.  Define $\N_e=\{2k:\, k\in\N\}$, $\N_o=\{2k-1:\, k\in\N\}$, and consider the following covering of $\Rd$,

\be\label{cov} \Rd= \bigcup_{k=1}^\infty A_k, \quad A_k:=\overline{\L}_{L_k}\setminus \L_{L_{k-1}}, \, \L_{L_0}:=\emptyset. \ee

Now take two different numbers $q_1,q_2\in\N$ and consider the Delone set $D$ defined by (see Fig. \ref{exdel})
\be\label{Del} D:=\left(\bigcup_{k\in \N_e}q_1\Z\cap A_k \right) \cup \left( \bigcup_{k\in \N_o} q_2\Z\cap A_k \right).\ee

 \begin{figure}[ht]
 \begin{center}
  \scalebox{0.4}{\includegraphics{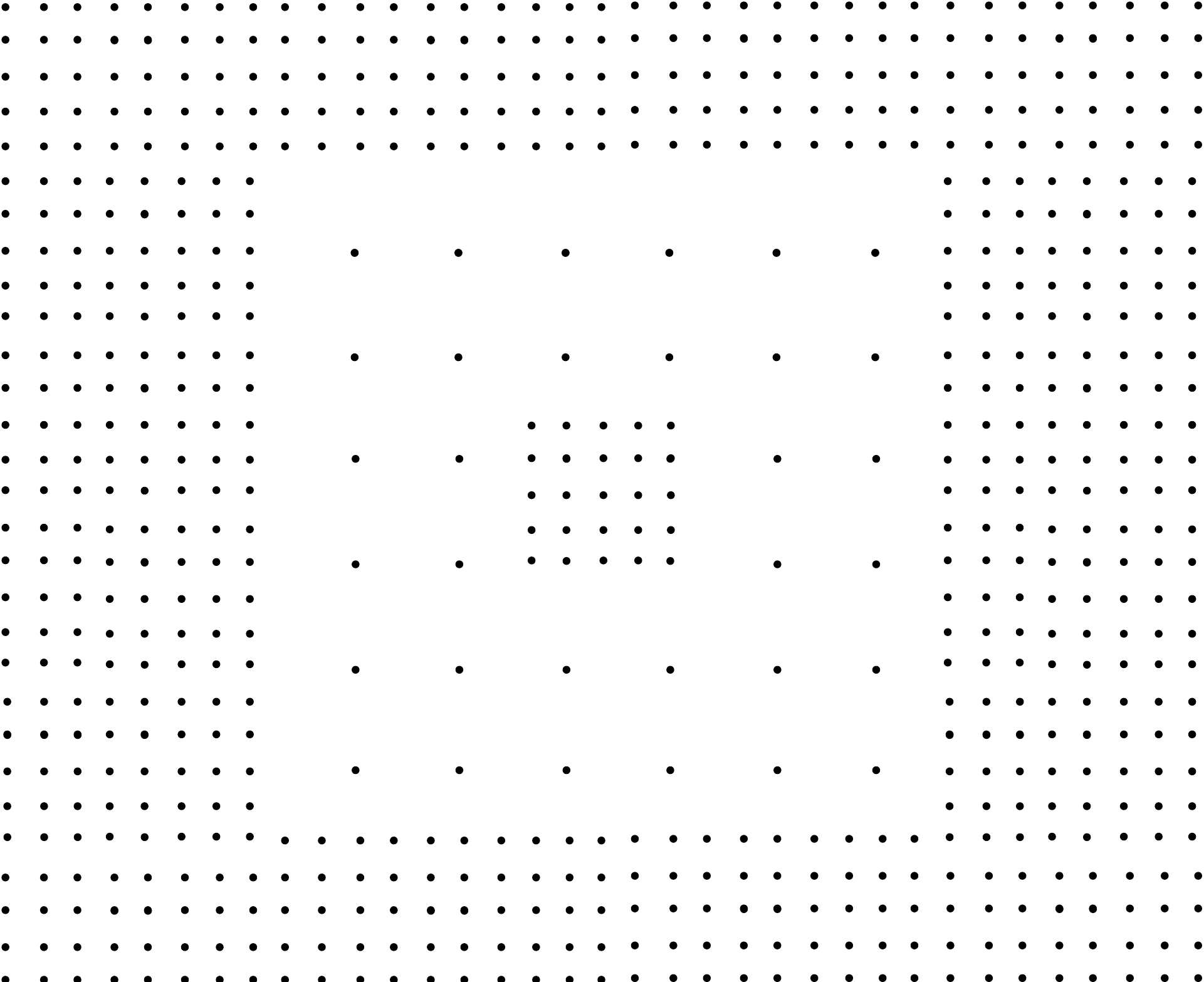}}
 \end{center}
 \caption{The Delone set $D$.}
 \label{exdel}
 \end{figure}

In what follows,  $\abs{A}$ stands for cardinality or Lebesgue measure, depending on $A$ being a discrete set or $A\subset\Rd$.

\begin{prop}\label{nonUE}
The Delone set $D$ defined in \eqref{Del} does not have the \emph{uniform pattern frequency} property given in Definition \ref{def-upf}.  In particular, the Delone dynamical system $X_D$ is not uniquely ergodic.
\end{prop}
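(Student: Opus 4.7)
The plan is to establish two things: first, that uniform pattern frequency fails because for a suitable one-point pattern $Q$ the limit $\lim_{L\to\infty}\tilde\eta_{0,L}(Q)/L^d$ does not even exist at $x=0$; and second, that two distinct ergodic invariant probability measures on $X_D$ can be exhibited concretely on the orbit closures of the lattices $q_1\Z$ and $q_2\Z$ sitting inside $X_D$.

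For the first step, take $Q = \{p\}$ with $p \in D$. Unfolding \eqref{upf} gives $\tilde\eta_{0,L}(\{p\}) = |\L_L \cap (p-D)| = |D \cap (p + \L_L)|$, which differs from $|D \cap \L_L|$ only by boundary terms of order $L^{d-1}$. The key geometric input is that $L_{k+1}=L_k^\alpha$ with $\alpha>1$ forces $|A_k|/L_k^d = 1 - L_k^{-d(1-1/\alpha)} \to 1$, so the outermost annulus asymptotically fills $\L_{L_k}$. Since $D$ coincides with $q_1\Z\cap A_k$ for $k \in \N_e$ and with $q_2\Z\cap A_k$ for $k \in \N_o$, we have $|D\cap A_k| = |A_k|/q_{j(k)}^d + O(L_k^{d-1})$ with $j(k)\in\{1,2\}$ alternating, while the contribution from all earlier annuli is $O(L_{k-1}^d) = O(L_k^{d/\alpha}) = o(L_k^d)$. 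Hence
\begin{equation*}
\lim_{\substack{k\to\infty \\ k\in\N_e}} \frac{\tilde\eta_{0,L_k}(\{p\})}{L_k^d} = q_1^{-d}, \qquad \lim_{\substack{k\to\infty \\ k\in\N_o}}\frac{\tilde\eta_{0,L_k}(\{p\})}{L_k^d} = q_2^{-d},
\end{equation*}
and these are distinct because $q_1 \neq q_2$. Thus the pointwise limit at $x = 0$ fails to exist, a fortiori ruling out uniform convergence in $x$.

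For the second step, I claim $q_1\Z, q_2\Z \in X_D$. Pick $p_k \in q_1\Z$ lying deep inside $A_k$ for $k \in \N_e$, specifically with $\dist(p_k, \partial A_k) \to \infty$; such $p_k$ exist because the width $L_k - L_{k-1}$ of $A_k$ tends to infinity while $q_1$ is fixed. For any compact $K \subset \Rd$, eventually $p_k + K \subset A_k$, and since $p_k \in q_1\Z$ we get $(-p_k + D) \cap K = (q_1\Z - p_k) \cap K = q_1\Z \cap K$. This shows $-p_k + D \to q_1\Z$ in the vague topology, so $q_1\Z \in X_D$; the analogous argument with $k \in \N_o$ yields $q_2\Z \in X_D$. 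The $\Rd$-orbit of $q_i\Z$ inside $X_D$ is a continuous injective image of the compact torus $\Rd/q_i\Z$, hence a closed $\Rd$-invariant subset of $X_D$; it carries the normalized Haar measure $\mu_i$ transported from the torus, and $\mu_i$ is translation-invariant and ergodic. The supports of $\mu_1$ and $\mu_2$ are disjoint closed orbits, so $\mu_1 \neq \mu_2$, and $X_D$ is not uniquely ergodic.

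The main subtlety is choosing the right objects; the computations themselves are elementary. One might hope to deduce non-unique ergodicity directly from the failure of uniform pattern frequency via Remark~\ref{on-unique}\,(iii), but that equivalence is stated only for Delone sets of finite local complexity, which is not obvious for $D$ because the precise lattice offsets at the annulus interfaces vary with $k$. Constructing $\mu_1$ and $\mu_2$ explicitly from the embedded periodic lattices side-steps this delicate issue.
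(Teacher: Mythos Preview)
Your first step is essentially identical to the paper's argument: the paper also takes a one-point pattern (supported at the origin) and computes the same two subsequential limits $q_1^{-d}$ and $q_2^{-d}$ along even and odd $k$.

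For the second step, however, the paper takes a shorter route, and your stated reason for avoiding it is based on a misconception. Since $q_1,q_2\in\N$, we have $q_1\Z\cup q_2\Z\subset\Z$, so $D\subset\Z$ and $D$ trivially has finite local complexity; the annulus interfaces cause no difficulty because every point already lies on the integer lattice. The paper notes this at the outset and invokes the equivalence between uniform pattern frequency and unique ergodicity for FLC sets (citing \cite{MR,LS03,LeMo02}) to conclude non-unique ergodicity immediately from the failure established in the first step.

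Your explicit construction of two distinct ergodic measures $\mu_1,\mu_2$ supported on the torus orbits of $q_1\Z$ and $q_2\Z$ is nonetheless correct and has the virtue of being self-contained, bypassing the cited equivalence entirely. In fact the paper proves $q_1\Z,q_2\Z\in X_D$ in a separate subsequent proposition (for a different purpose), by an argument matching yours. So your route works; it is simply longer than necessary, and the justification you give for needing it is mistaken.
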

\begin{proof}
Note that since $D$ is embedded in $\Z$, it is a set of finite local complexity, in which case the properties of unique ergodicity and uniform pattern frequency are equivalent, see \cite[Proposition 2.32]{MR} (see also \cite[Theorem 1.7]{LS03}, \cite[Theorem 2.7]{LeMo02} which apply to our setting).
Without loss of generality, assume $1\leq q_1<q_2$ and consider the covering of $\R^d$ defined in \eqref{cov}.

Fix  $k\in\N_e$ and take the pattern $Q:=\{ (0,0,...)\}$ consisting of the origin in $\Rd$ with support $B_{1/2}(0)$.  Since $D$ consists of disjoint translations of $Q$, we have that the number of $B_{1/2}(0)$-patterns in $\L_{L_k}$ that are translations of $Q$ by an element of $\L_{L_k}$ is given by
 \begin{align}  \tilde\eta_k(Q) := & \left| \left\{\tilde Q \subset D\,:\,\exists y\in \L_{L_{k}}\, \, \mbox{ such that }\, y+\tilde Q=Q\right\}\right| \nonumber\\
 =& \left| \left\{\tilde Q \subset D\,:\,\exists y\in A_k  \,  \, \mbox{ such that }\, y+\tilde Q=Q \right\} \right| \nonumber\\
 &   \quad + \left|  \left\{\tilde Q \subset D\,:\,\exists y\in \L_{L_{k-1}} \,  \, \mbox{ such that }\,y+\tilde Q=Q  \right\}\right|\nonumber \\
 = & | A_k \cap q_1 \Z | + C_{L_{k-1}},
 \end{align}
 where $C_{L_{k-1}} \approx \abs{\L_{L_{k-1}}\cap q_2\Z}$.

For $k\in\mathbb N_e$, we have
\begin{align}  \frac{ \tilde\eta_{k}(Q)}{\abs{\L_{L_k}}}  = \frac{1}{\abs{\L_{L_k}}} \left( | A_k \cap q_1 \Z | + C_{L_{k-1}}\right). \end{align}
Analogously, we obtain for $k\in\N_o$,
\be  \frac{ \tilde\eta_{k}(Q)}{\abs{\L_{L_k}}} =\frac{1}{\abs{\L_{L_k}}} \left( | A_k \cap q_2 \Z |+ C'_{L_{k-1}}\right), \ee
where $ C'_{L_{k-1}}\approx \abs{\L_{L_{k-1}}\cap q_1\Z} $.
Noting that $| A_k \cap q_i \Z |=q_i^{-d} \abs{A_k} + o(q_i^{-d} \abs{A_k})$ for $i=1,2$, and recalling that $A_k=\overline{\L}_{L_k}\setminus \L_{L_{k-1}}$ and $L_{k}=L_{k-1}^\alpha$ with $\alpha>1$, we see that a subsequence of  $\left(\eta_k(Q)\right)_k$ with $k\in\N_e$ converges to $q_1^{-d}$, while a subsequence with $k\in\N_o$, converges to $q_2^{-d}$.
Therefore, $D$ does not have the property of uniform pattern frequency.
\end{proof}

Now consider the Delone--Anderson Hamiltonian $H_\dom$ associated to $D$, given by

\be\label{app_H} H_\dom= H_0+\displaystyle\sum_{p\in D} \omega_p u(x-p)
\ee
 and assume \Az and \Ao holds. Since $D$ is not uniquely ergodic, Corollary \eqref{eids}(i) gives the existence of the integrated density of states only for $\mu$-a.e. $P\in X_D$ and $\P_P$-a.e. $\omega\in\Omega_P$, where $\mu$ is a (not necessarily unique) ergodic measure on the hull $X_D$.
The $\mu$-a.e. convergence in Corollary \ref{eids} does not hold for $P=D$.

\begin{prop}\label{in_the_hull}
The lattices $D_1:=q_1\Z$ and $D_2:=q_2\Z$ belong to $X_D=\overline{\{D+x\,:\, x\in\Rd\}}$.
\end{prop}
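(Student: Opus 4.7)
My plan is to show directly that both $D_1$ and $D_2$ are vague limits of translates of $D$. By \cite[Lemma 2.8]{MR} (or just by unwrapping the definition of vague convergence when applied to uniformly discrete sets), it suffices to find, for each $R>0$, a translation vector $x$ such that $(x+D)\cap \Lambda_R = D_i\cap \Lambda_R$; letting $R\to\infty$ along a sequence then delivers the required vague convergence.

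The key geometric input is that, because $L_k = L_{k-1}^\alpha$ with $\alpha>1$, the annulus $A_k=\overline{\Lambda}_{L_k}\setminus \Lambda_{L_{k-1}}$ has inner radius negligible compared to its outer radius once $k$ is large, so it contains arbitrarily large boxes. Moreover, by construction $D\cap A_k = q_1\mathbb{Z}\cap A_k$ when $k\in\mathbb{N}_e$ and $D\cap A_k = q_2\mathbb{Z}\cap A_k$ when $k\in\mathbb{N}_o$. Hence, deep inside $A_k$ the Delone set looks exactly like one of the two sublattices, with no defect whatsoever.

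To obtain $D_1=q_1\mathbb{Z}\in X_D$, fix $R>0$ and pick an even $k$ so large that one can find a lattice point $a\in q_1\mathbb{Z}$ with $\Lambda_R(a)\subset A_k$; such an $a$ exists because the shell $A_k$ contains a translate of $\Lambda_{2R}$ once $L_k - L_{k-1}> 4R$, which holds eventually by the growth assumption on $L_k$, and $q_1\mathbb{Z}$ has mesh $q_1$. Setting $x:=-a$ and using $a\in q_1\mathbb{Z}$, one computes
\[
(-a+D)\cap \Lambda_R = -a + \bigl(D\cap \Lambda_R(a)\bigr) = -a + \bigl(q_1\mathbb{Z}\cap \Lambda_R(a)\bigr) = q_1\mathbb{Z}\cap \Lambda_R.
\]
Letting $R=R_n\to\infty$ and choosing the associated shifts $x_n = -a_n$ gives $x_n+D\to q_1\mathbb{Z}$ vaguely, since for any $\varphi\in C_c(\Rd)$ the sums $\sum_{p\in x_n+D}\varphi(p)$ and $\sum_{p\in q_1\mathbb{Z}}\varphi(p)$ coincide as soon as $\supp\varphi\subset \Lambda_{R_n}$. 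The same argument with $k\in\mathbb{N}_o$ and $a\in q_2\mathbb{Z}\cap A_k$ yields $D_2=q_2\mathbb{Z}\in X_D$.

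There is no real obstacle here: the whole content of the proof is the observation that the super-polynomial growth $L_k=L_{k-1}^\alpha$ permits arbitrarily large pure-lattice patches inside $D$, and that re-centering such a patch at the origin produces a translate of $D$ that agrees with the target lattice on any prescribed ball.
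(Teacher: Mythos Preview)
Your proof is correct and follows essentially the same strategy as the paper: exploit that the annuli $A_k$ eventually contain arbitrarily large cubes, re-center a pure-lattice patch at the origin, and conclude vague convergence via exact agreement on growing windows. If anything, your argument is slightly more explicit than the paper's in insisting that the shift $a$ lie in $q_i\mathbb{Z}^d$, which is what makes the equality $-a + (q_i\mathbb{Z}^d \cap \Lambda_R(a)) = q_i\mathbb{Z}^d \cap \Lambda_R$ work.
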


\begin{proof} We will show that there exists a sequence $(P_k)_{k\in\mathcal J}$ in $X_D$, for some index set $\mathcal J$, that converges to $D_1$ in the vague topology. Let us recall that this is equivalent to say that for every compact set $K\subset \Rd$, for every $\epsilon>0$ for finally all $k\in\N$, the following inclusions hold \cite[Lemma 2.8]{MR}
\be\label{vaguec} P_k \cap K\subset (D_1)_\epsilon \quad \mbox{and} \quad D_1\cap K\subset (P_k)_\epsilon,
\ee
where the $\epsilon$-thickened version of a set is defined after Eq. \eqref{thickening}.

Given a compact set $K\in\Rd$, there exists $k_e(K)\in\N_e$ such that for all $k\in\N_e$, $k\geq k_e(K)$, one can find a vector ${\bf x}_k\in\Rd$ such that
\be K\subset {\bf x}_k+A_k \quad \text{and}\quad \left({\bf x}_k+A_k\right)\cap D\subset D_1. \ee
We write $\N_e':=\{k\in\N_e\,:\, k\geq k_e(K)\}$ and define $P^e_k:={\bf x}_k+D$ for $k\in\N_e'$.  Recalling \eqref{Del}, we have that  $P^e_k\cap K=D_1\cap K$, for all $k\in\N_e'$. In particular, \eqref{vaguec} holds, therefore $(P^e_k)_{k\in\N_e'}$ converges to $D_1$ in the vague topology of $X_D$.
%

The same argument applied to $k\in\N_o$ proves that there exists a sequence $(P_k^o)_{k\in\N_o'}\subset X_D$, with $\N_o'\subset\N_o$, that converges to $D_2$ in the vague topology in $X_D$.
\end{proof}

\bigskip

Now, consider the measure $\mu_{D_1}$ defined on $X_D$ by

\be \mu_{D_1}(\mathcal B)=\frac{1}{q_1^d}\abs{\{t\in[0,q_1]^d\,:\, t+D_1 \in \mathcal B \}}, \ee
for any measurable set $\mathcal B\subset X_D$. This is an ergodic measure such that for every $x\in\Rd$, $x+D\notin \supp \mu_{D_1}$. Taking this measure, Corollary \ref{eids}(i) states that the integrated density of states for $H_\pom$ exists for $\mu_{D_1}$-a.e. $P\in X_D$ and $\P_P$-a.e. $\omega\in \Omega_P$.
Therefore, we obtain information on the integrated density of states for a family of periodic sets, translates of $D_1$, but no information relative to the aperiodic set $D$. This is no surprise, considering the following

\begin{prop}\label{noIDS} Let $H_D$ be the Delone operator defined by
\be H_D:=H_0+ \displaystyle\sum_{p\in D} u(x-p),\ee
where $H_0$ is as in \eqref{app_H}. Let $\nu_{L}^D$ be the finite-volume integrated density of states of $H_D$ as in
\eqref{fvids}.  The limit of $\nu_{L}^D$ when $L$ tends to infinity does not exist.
\end{prop}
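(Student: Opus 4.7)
The plan is to rule out convergence of $\nu_L^D$ by exhibiting, for a suitably chosen fixed $t>0$, two distinct subsequential limits of the Laplace transform
\[
\mathcal L_L(t) := \int_\R e^{-tE}\,\d\nu_L^D(E) = \frac{1}{L^d}\tr\big(e^{-tH_D}\chi_{\Lambda_L}\big) = \frac{1}{L^d}\int_{\Lambda_L} k_t^{H_D}(x,x)\,\d x,
\]
one along $L=L_k$, $k\in\N_e$, and a different one along $L=L_k$, $k\in\N_o$. Since $H_D$ is uniformly bounded below, failure of $\mathcal L_L(t)$ to converge precludes vague convergence of the measures $\d\nu_L^D$ and, a fortiori, pointwise convergence of the distribution functions $\nu_L^D$.

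Fix $t>0$ and let $k\in\N_e$ be large. I would split $\Lambda_{L_k}=\Lambda_{L_{k-1}}\cup(A_k\cap\Lambda_{L_k})$. The uniform diagonal bound \eqref{kb} together with $L_{k-1}^d/L_k^d = L_k^{d(1/\alpha-1)}\to 0$ renders the contribution from $\Lambda_{L_{k-1}}$ negligible. On $A_k$ the Delone set $D$ coincides with the lattice $q_1\Z$, so I would compare $k_t^{H_D}(x,x)$ with the periodic heat kernel $k_t^{H_{q_1\Z}}(x,x)$ on the bulk region
\[
A_k^{(M)} := \big\{x\in A_k : \dist(x,\partial A_k)>M+\delta_u\big\}.
\]
For $x\in A_k^{(M)}$ the potentials $V_D$ and $V_{q_1\Z}$ agree on $B(x,M)$, hence the Feynman--Kac integrands in the representation \eqref{kt} for the two operators coincide along any Brownian bridge $b$ confined to $B(x,M)$. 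Bounding $|e^{-S_{V_D}}-e^{-S_{V_{q_1\Z}}}|\le 1$ on the complementary event and using the standard Gaussian tail estimate for Brownian-bridge excursions (translation-invariant, hence uniform in $x$) yields
\[
\big|k_t^{H_D}(x,x)-k_t^{H_{q_1\Z}}(x,x)\big| \le C(t)\,e^{-cM^2/t}
\]
with constants $C(t),c>0$ independent of $k$ and $x\in A_k^{(M)}$.

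Since $k_t^{H_{q_1\Z}}(\cdot,\cdot)$ is $(q_1\Z)$-periodic along the diagonal, a cell-averaging argument yields
\[
\frac{1}{L_k^d}\int_{A_k^{(M)}} k_t^{H_{q_1\Z}}(x,x)\,\d x = \frac{|A_k^{(M)}|}{L_k^d}\,\mathcal L_{q_1}(t) + O(L_k^{-1}),
\]
where $\mathcal L_{q_1}(t) := q_1^{-d}\int_{[0,q_1]^d} k_t^{H_{q_1\Z}}(x,x)\,\d x$ is the Laplace transform of the (well-defined) IDS of the periodic Schr\"odinger operator $H_{q_1\Z}$. The thin boundary shell $A_k\setminus A_k^{(M)}$ has volume $O(L_k^{d-1}M)$ and, after division by $L_k^d$, also vanishes as $k\to\infty$ for fixed $M$. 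Combining these estimates with the interior replacement bound, and letting first $k\to\infty$ along $\N_e$ and then $M\to\infty$, I obtain
\[
\lim_{\substack{k\in\N_e\\k\to\infty}} \mathcal L_{L_k}(t) = \mathcal L_{q_1}(t),
\]
and the strictly analogous computation on the odd annuli gives $\lim_{k\in\N_o,\,k\to\infty}\mathcal L_{L_k}(t) = \mathcal L_{q_2}(t)$.

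It remains to verify that $\mathcal L_{q_1}(t)\ne \mathcal L_{q_2}(t)$ for some $t>0$. Using the short-time diagonal heat-kernel expansion together with the tiling $\R^d = \bigsqcup_{p\in q_i\Z}([0,q_i]^d - p)$ and Fubini,
\[
\mathcal L_{q_i}(t) = (4\pi t)^{-d/2}\Big(1 - t\,q_i^{-d}\,\|u\|_1 + O(t^2)\Big) \qquad (t\downarrow 0),
\]
and since $q_1\ne q_2$ while $\|u\|_1>0$ by \Aloc, the first-order corrections differ, so $\mathcal L_{q_1}(t)\ne \mathcal L_{q_2}(t)$ for all sufficiently small $t>0$, completing the argument. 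The main technical obstacle is the \emph{uniform} interior approximation of $k_t^{H_D}$ by the periodic heat kernel $k_t^{H_{q_1\Z}}$ on $A_k^{(M)}$: it rests on the Gaussian tail of Brownian-bridge excursions and is made effective by the super-polynomial growth $L_{k+1}=L_k^\alpha$, $\alpha>1$, which forces both the interior shell $\Lambda_{L_{k-1}}$ and the boundary layer of $A_k$ to be negligible on scale $L_k^d$.
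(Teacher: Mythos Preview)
Your argument is correct and follows exactly the paper's route: split $\Lambda_{L_k}$ into the negligible inner cube $\Lambda_{L_{k-1}}$ and the annulus $A_k$, on which $D$ coincides with $q_1\Z$ (resp.\ $q_2\Z$) for even (resp.\ odd) $k$, so that the normalised trace is driven to the integrated density of states of the corresponding periodic operator. Your heat-kernel/Feynman--Kac implementation is considerably more detailed than the paper's terse sketch and even supplies, via the short-time expansion, the verification that the two periodic limits differ---a step the paper leaves implicit; the only quibble is that you invoke \Aloc to secure $\|u\|_1>0$ while the appendix assumes only \Az and \Ao, but since an example is being constructed one may simply take $\int_{\R^d} u\neq 0$ (and the crude bound $|e^{-S_{V_D}}-e^{-S_{V_{q_1\Z}}}|\le 1$ should read $\le 2e^{tv_0}$, which is immaterial).
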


\begin{proof}
Let $\nu^{D_1}$ and $\nu^{D_2}$ be the integrated density of states of the Delone operators associated to $D_1$ and $D_2$, respectively. Following the reasoning in the proof of Proposition \ref{nonUE} one can show that, for $F\in C_c(\R)$, the measure $\d\nu_{L_k}$ defined through Eq. \eqref{fvids} is obtained by taking the limit when $k\rightarrow\infty$ of the following quantity:
\be \frac{1}{\abs{\L_{L_k}}}\tr \left(F(H)\chi_{\L_{L_k}}\right)  =  \frac{1}{\abs{\L_{L_k}}} \tr \left( F(H) \left(\chi_{A_k} + \chi_{\L_{L_{k-1}}} \right)\right)
\ee
The second term in the r.h.s. is negligible, while the first term tends to $\d\nu^{D_1}$ if one takes a subsequence $L_k$ with $k\in\N_e$, and to $\d\nu^{D_2}$, if $k\in\N_o$.

\end{proof}

\section*{Acknowledgements}

PM received partial financial support from the German Research Council (DfG) through Sfb/Tr 12. CRM received financial support from  the European Community Seventh Framework Programme FP7 under grant agreement number 329458 (ETAM).


\begin{thebibliography}{BodMNSS}
\frenchspacing

\bibitem[AENSS]{AENSS06} 
	M. Aizenman, A. Elgart, S. Naboko, J. H. Schenker, G. Stolz, 
	Moment analysis for localization in random Schr\"odinger operators, 
	\emph{Invent. Math.} 163 (2006) 343--413.

\bibitem[AM]{AM93}
	M.~Aizenman, S. A. Molchanov,
	\newblock {Localization at large disorder and at extreme energies: an elementary derivation},
	\newblock {\em Comm. Math. Phys.} 157 (1993) 245--278.

\bibitem[An]{And58} 
	P. W. Anderson, 
	Absence of diffusion in certain random lattices, 
	\emph{Phys. Rev.} 109 (1958) 1492--1505.


\bibitem[B]{Ba} 
	H. Bauer, 
	Measure and integration theory, 
	Berlin, de Gruyter, 2001.

\bibitem[BoK]{BoK} 
	J. Bourgain, C. Kenig, 
	On localization in the continuous Anderson--Bernoulli model in higher dimension, 
	\emph{Invent. Math.} 161 (2005) 389--426.

\bibitem[BoKl]{BoKl} 
	J. Bourgain, A. Klein, 
	Bounds on the density of states for Schr\"odinger operators, 
	\emph{Invent. Math.} 194 (2013) 41--72.

\bibitem[BouNSS]{BdMNSS} 
	A. Boutet de Monvel, S. Naboko, P. Stollmann, G. Stolz,
	Localization near fluctuation boundaries via fractional moments and applications, 
	\emph{J. Anal. Math.} 100 (2006) 83--116.

\bibitem[BrLM]{BLM} 
	K. Broderix, H. Leschke, P. M\"uller, 
	Continuous integral kernels for unbounded Schr\"odinger semigroups and their spectral projections,
	\emph{J. Funct. Anal.} 212 (2004) 287--323.

\bibitem[CL]{CaLa90} 
	R. Carmona, J. Lacroix, 
	Spectral theory of random Schr\"odinger operators,
	Birkh\"auser, Boston, 1990.

\bibitem[CoH]{CH} 
	J.-M. Combes, P. D. Hislop, 
	Localization for some continuous, random Hamiltonians in $d$-dimensions, 
	\emph{J. Funct. Anal.} 124 (1994) 149--180.

\bibitem[CoHK1]{CHK} 
	J.-M. Combes, P. D. Hislop, F. Klopp, 
	H\"older continuity of the integrated density of states for some random operatos at all energies,
	\emph{Int. Math. Res. Notices} 4 (2003) 179--209.


\bibitem[CoHK2]{CHK2} 
	J.-M. Combes, P. D. Hislop, F. Klopp, 
	An optimal Wegner estimate and its application to the global continuity of the IDS for random Schr\"odinger operators,
	\emph{Duke Math. J.} 140 (2007) 469--498.

\bibitem[EK]{EKl} 
	A. Elgart, A. Klein,
	Ground state energy of trimmed discrete Schr\"odinger operators and localization for trimmed Anderson models,
	\emph{J. Spectr. Theory} 4 (2014) 391--413.

\bibitem[FLM]{FLM00}
	W.~Fischer, H.~Leschke, P.~M{\"u}ller,
	\newblock {Spectral localization by {Gaussian} random potentials in multi-dimensional continuous space},
	\newblock {\em J. Stat. Phys.}, 101 (2000) 935--985.

\bibitem[FrR]{FrRi12} 
	D. Frettl\"oh, C. Richard,
	Dynamical properties of almost repetitive Delone sets, 
	\emph{Discrete Contin. Dyn. Syst.} 34 (2014) 531--556. 

\bibitem[Fr\"oS]{FS83}
	J.~Fr{\"o}hlich, T.~Spencer,
	\newblock {Absence of diffusion with {Anderson} tight binding model for large disorder or low energy},
	\newblock {\em Commun. Math. Phys.} 88 (1983) 151--184.

\bibitem[G]{G} 
	F. Germinet, 
	Recent advances about localization in continuum random Schr\"o\-dinger operators with an extension 
	to underlying Delone sets, 
	in \emph{Mathematical results in quantum mechanics}, 
	I.\ Beltita, G. Nenciu R. Purice (eds.), 
	World Scientific, Hackensack, NJ, 2008, pp. 79--96.

\bibitem[GHK]{GHK07} 
	F. Germinet, P. Hislop, A. Klein, 
	Localization for Schr\"odinger operators with Poisson random potential, 
	\emph{J. Eur. Math. Soc.} 9 (2007) 577--607.

\bibitem[GK1]{GK1} 
	F. Germinet, A. Klein, 
	Bootstrap multiscale analysis and localization in random media,
	\emph{Commun. Math. Phys.} 222, (1998) 415--448.

\bibitem[GK2]{GK2} 
	F. Germinet, A. Klein, Explicit finite volume criteria for localization in continuous random media and applications,
	\emph{Geom. Funct. Anal.} 13 (2003) 1201--1238.

\bibitem[GK3]{GK11} 
	F. Germinet, A. Klein, 
	A comprehensive proof of localization for continuous Anderson models with singular random potentials, 
	\emph{J. Eur. Math. Soc.} 15 (2013) 55--143.

\bibitem[GMP]{GMP77} 
	I. Ya. Gol'dsheid, S. A. Molchanov, L. A. Pastur, 
	{A pure point spectrum of the stochastic one-dimensional Schr{\"o}dinger equation}, 
	\emph{Funkt. Anal. Appl.} 11 (1977) 1--8
	[Russian original: \emph{Funkts. Anal. Prilozh.} 11 (1977) 1--10].

\bibitem[H]{H} 
	P. Hislop, 
	Lectures on random Schr\"odinger operators, 
	\emph{Contemp. Math.} 476 (2008) 41--131.

\bibitem[Ho]{Hof98}
  {A.} {Hof}, 
  {Percolation on Penrose tilings},
  \emph{Canad.~Math.~Bull.} {41} (1998) {166--177}.


\bibitem[K1]{Kirsch89} 
	W. Kirsch, Random Schr\"odinger operators. A course, 
	in \emph{Schr\"odinger operators},
	H. Holden, A. Jensen (eds.), 
	Lect.\ Notes Phys. 345, Springer, Berlin, 1989, pp. 264--370.

\bibitem[K2]{K} 
	W. Kirsch, 
	An invitation to random Schr\"odinger operators (with an appendix by F. Klopp),
	in \emph{Random Schr\"odinger operators},
	M. Disertori, W. Kirsch, A. Klein, F. Klopp, V. Rivasseau (eds.),
	Panoramas et Synth\`eses 25, 
	Soc. Math. France, Paris, 2008, pp. 1--119.

\bibitem[KM]{KMa} 
	W. Kirsch, F. Martinelli, 
	On the ergodic properties of the specrum of general random operators, 
	\emph{J. Reine Angew. Math.} 334 (1982) 141--156.



\bibitem[KSS]{KSS98}
		W.~Kirsch, P.~Stollmann, G.~Stolz,
	\newblock {Localization for random perturbations of periodic Schr{\"o}dinger operators},
	\newblock {\em Random Oper. Stochastic Equations} 6 (1998) 241--268.

\bibitem[Kl]{Kl} 
	A. Klein,
	Unique continuation principle for spectral projections of Schr{\"o}dinger operators 
	and optimal Wegner estimates for non-ergodic random Schr{\"o}dinger operators,  
	\emph{Commun. Math. Phys.} 323 (2013) 1229--1246.


\bibitem[KuS]{KuSo80}
  {H. Kunz, B. Souillard},
  {Sur le spectre des op\'erateurs aux diff\'erences finies al\'eatoires},
  \emph{Commun. Math. Phys.} {78} (1980) {201--246}.


\bibitem[LMS]{LeMo02}
  {J.-Y.} {Lee}, {R.V.} {Moody}, {B.} {Solomyak},
  {Pure point dynamical and diffraction spectra},
  \emph{Ann.~H.~Poincar\'e} {3} (2002) {1003--1018}.

\bibitem[Le]{Len08}
  {D.} {Lenz}, {Continuity of eigenfunctions of uniquely ergodic
		dynamical systems and intensity of Bragg peaks},
  \emph{Commun.~Math.~Phys.} {287} (2009) {225--258}.

\bibitem[LeMV]{LMV08}
  {D.} {Lenz}, P. M\"uller, I. Veseli\'c, 
  {Uniform existence of the integrated density of states for models on $\Z$},
  \emph{Positivity} 12 (2008), 571--589.

\bibitem[LePV]{LPV07}
  {D.} {Lenz}, N. Peyerimhoff, I. Veseli\'c, 
  {Groupoids, von Neumann algebras and the integrated density of states},
  \emph{Math. Phys. Anal. Geom.} {10} (2007) {1--41}.

\bibitem[LeS1]{LS03}
	{D.} {Lenz}, {P.} {Stollmann}, 
	{Delone dynamical systems and associated random operators},
  in \emph{Operator algebras and mathematical physics (Constan\c{t}a, 2001)}, 
  J.-M Combes, J. Cuntz, G. A. Elliott, G. Nenciu, H. Siedentop, \c{S}. St\u{a}til\u{a} (eds.), 
  Theta, Bucharest, 2003, pp. 267--285.

\bibitem[LeS2]{LS06}
	{D.} {Lenz}, {P.} {Stollmann}, 
	{An ergodic theorem for Delone dynamical systems and existence of the integrated density of states},
 	\emph{J. Anal. Math.} 97 (2006) 1--23.


\bibitem[MR1]{MR07} 
	P. M\"uller, C. Richard, 
	Random colorings of aperiodic graphs: ergodic and spectral properties, 
	preprint arXiv:0709.0821.

\bibitem[MR2]{MR} 
	P. M\"uller, C. Richard, 
	Ergodic properties of randomly coloured point sets, 
	\emph{Canad. J. Math.} 65 (2013) 349--402.

\bibitem[P]{Pas80}
  {L. Pastur},
  {Spectral properties of disordered systems in the one-body approximation},
  \emph{Commun. Math. Phys.} {75} (1980) {179--196}.

\bibitem[PF]{PF92}
  {L. Pastur}, A. Figotin, 
  {Spectra of random and almost-periodic operators},
  Springer, Berlin, 1992.

\bibitem[RS]{RSI}
  {M.} {Reed}, {B.} {Simon},
  {Methods of modern mathematical physics I: Functional analysis},
  rev. and enl. ed.,
  {Academic}, {San Diego}, {1980}.

\bibitem[RoM1]{RM} 
	C. Rojas-Molina, 
	Characterization of the Anderson metal-insulator transition for non ergodic operators and application, 
	\emph{Ann. H. Poincar\'e} 13 (2012) 1575--1611.

\bibitem[RoM2]{RM13} 
	C. Rojas-Molina, 
	The Anderson model with missing sites,
	\emph{Operators and Matrices} 8 (2014) 287--299.
	
\bibitem[RoM3]{RMthesis} 
	C. Rojas-Molina, 
	The mathematical study of electronic transport in random Schr\"odinger operators with quasicrystalline structures, 
	PhD Thesis, Universit\'e de Cergy-Pontoise, 2012.

\bibitem[RoMV]{RMV} 
	C. Rojas-Molina, I. Veselic, 
	Scale-free unique continuation estimates and applications to random Schr\"odinger operators,
 \emph{Commun. Math. Phys.} 320 (2013) 245--274.

\bibitem[S]{S}
	P.~Stollmann,
	\emph{Caught by disorder: Bound states in random media},
	Birkh{\"a}user, Boston, 2001.

\bibitem[U]{Ue}
  {N. Ueki},
  {Wegner estimates and localization for Gaussian random potentials},
  \emph{Publ. Res. Inst. Math. Sci.} 40 (2004) {29--90}.

\bibitem[W]{Wal82}
  {P.} {Walters}, 
  {An introduction to ergodic theory},
  {Springer}, {New York}, {1982}.

\bibitem[We]{Weid}
  {J.} {Weidmann}, 
  {Linear operators in Hilbert spaces},
  {Springer}, {New York}, {1980}.


\end{thebibliography}
\end{document}